\numberwithin{equation}{section}
\newtheorem{lemma}{\textbf{Lemma}}[section]
\newtheorem{theorem}{\textbf{Theorem}}
\newtheorem{remark}{\textbf{Remark}}[section]
\newtheorem{corollary}{\textbf{Corollary}}[section]
\newtheorem{example}{\textbf{Example}}[section]
\newtheorem{proposition}{\textbf{Proposition}}[section]
\newtheorem{Definition}{\textbf{Definition}}
\begin{document}
\baselineskip 17pt\title{\Large\bf Block Codes in Pomset Metric over $\mathbb{Z}_m$}
\author{\large  Wen Ma \quad\quad Jinquan Luo\footnote{The authors are with School of Mathematics
and Statistics \& Hubei Key Laboratory of Mathematical Sciences, Central China Normal University, Wuhan China.\newline
 E-mails: mawen95@126.com(W.Ma),  luojinquan@mail.ccnu.edu.cn(J.Luo)}}
\date{}
\maketitle

{\bf Abstract}: In this paper, we introduce codes equipped with pomset block metric. A Singleton type bound for pomset block codes is obtained. Code achieving the Singleton bound, called a maximum distance separable code (for short, MDS ($\mathbb{P},\pi$)-code) is also investigated. We extend the concept of $I$-perfect codes and $r$-perfect codes to pomset block metric. The relation between $I$-perfect codes and MDS $(\mathbb{P},\pi)$-codes is also considered. When all blocks have the same dimension, we prove the duality theorem for codes and study the weight distribution of MDS pomset block codes when the pomset is a chain.

{\bf Key words}: pomset, label, block, MDS codes, perfect codes.

\section{Introduction}

 \quad\; The codes equipped with a metric differ from the Hamming metric has been studied for years. Poset metric was introduced by Brualdi (see [\ref{POSET}]) in 1995. The concept of poset metric was motivated by Niederreiter's generalization of a classical problem on coding theory (see [\ref{NIEDE1}] and [\ref{NIEDE2}]).
 In [\ref{POSET2008}], Hyun and Kim introduced the concept of $I$-perfect codes and described the MDS poset codes in terms of $I$-perfect codes. They also studied the weight distribution of an MDS poset code and proved the duality theorem. Feng, Xu, and Hickernell (see [\ref{FENG}]) introduced the block metric, by partitioning the set of coordinate positions of $\mathbb{F}_q^n$ into families of blocks. Later, Alves, Panek and  Firer introduced poset block metric (see [\ref{Alves}]) and studied NRT block metric (see [\ref{PANEK}]). By extending their observations, Dass, Sharma and Verma studied poset block codes and defined a maximum distance separable poset block code. Moreover, they extended the concept of $I$-balls to poset block metric and described $r$-perfect and MDS $(\mathbb{P},\pi)$-codes (see [\ref{DASS}]). When all the blocks have the same dimension, they showed that MDS $(\mathbb{P},\pi)$-codes are the same as $I$-perfect codes for some ideals $I$.

Recently, Irrinki and Selvaraj (see [\ref{POMSET1}]) introduced pomset metric and enhanced the concept of order ideals. Construction of pomset codes are obtained and their metric properties like minimum distance and covering radius are determined. In [\ref{POMSET2}], Irrinki and Selvaraj studied $I$-perfect codes under pomset metric by extending the concept of $I$-balls in poset metric. Moreover, they established a Singleton type bound for codes with pomset metric and investigated the connection between MDS codes and $I$-perfect codes. When the pomset is a chain, they proved the duality theorem and determined the weight distribution of MDS pomset codes.

A linear error-block code is a natural generalization of the classical error-correcting code and has applications in experimental design, high-dimensional numerical integration and cryptography. The construction of linear error-block codes with the largest rate, $k/n$, and the minimum distance $d$ is an important problem in coding theory. The support of $x\in\mathbb{F}_q^n$ given by $supp(x)=\{i:x_i\neq 0\}$ is a set. The poset weight $w_p(x)$ of $x$ is defined as $w_p(x)=|<supp(x)>|$ and $d_p(x,y)=w_p(x-y)$ is a well defined metric on $\mathbb{F}_q^n$. One can get different metric on $\mathbb{F}_q^n$ by varying posets such as Rosenbloom-Tsfasman (RT)-metric if $P$ is a chain, Hamming metric if $P$ is an antichain and so on.  In 2008, Firer, Paneck  and Alves ([\ref{Alves}]) presented the family of metrics called poset-block metric that generalizes all the previous ones. But the poset metric does not accodomate Lee metric for any particular poset. For an element $l\in\mathbb{Z}_m$, the Lee weight of $l$ is defined as $w_L(l)=\min\{l,m-l\}$ whereas the Hamming weight of any $l\neq 0$ is 1. Moreover, the Hamming weight of $x=(x_1,x_2,\ldots,x_n)\in\mathbb{Z}_m^n$ is sum of the Hamming weights of the non-zero coordinates, so that it counts the number of non-zero positions whereas Lee weight adds Lee weight of  non-zero coordinates in $x$. Thus, the support of $x\in\mathbb{Z}_m^n$ with respect to Lee weight is to be defined as $supp_L(x)=\{k/i:k=w_L(x_i),k\neq0\}$ which is a multiset.  The pomset metric is a generalization to Lee metric when the pomset is an antichain. In this paper, we combine the pomset and block structure to obtain a further generalization called the pomset block metric. Pomset block codes reduces to the pomset codes with $\pi=[1]^n$ and thus reduces to Lee metric with chain pomset. In some sense, it is a generalization to poset-block metric. By researching pomset-block codes, we give a much general method to handle Lee metric and error-block codes over $\mathbb{Z}_m$. This paper aims to introduce pomset block codes and extend the concept of $I$-perfect codes to the case of pomset block metric. A Singleton type bound for pomset block codes is established and the relationship between MDS codes and $I$-perfect codes is investigated. We also prove the duality result for $I$-perfect code when all the blocks have the same dimension. When the pomset is a chain and all the blocks have the same dimension, we determine the weight distribution of an MDS $(\mathbb{P},\pi)$-code.

\section{Preliminaries}

\quad\; In this section, we introduce some basic notations and useful results of a pomset block metric.

A collection of elements which may contain duplicates is called a \textbf{multiset} (in short, \textbf{mset}). Girish and John defined a multiset relation and explored some of basic properties (see [\ref{GIRISH}] and [\ref{girish}]). They also defined a partially ordered multiset as a multiset relation being reflexive, antisymmetric and transitive, chains and antichains  of a partially ordered multiset.

Formally, if $X$ is a set of elements, a mset $M$ drawn from the set $X$ is represented by a function count $C_{M}:X\rightarrow\mathbb{Z}_{+}$ where $\mathbb{Z}_{+}$ represents the set of non-negative integers. For each $a\in X$, $C_{M}(a)$ indicates the number of occurrences of the element $a$ in $M$.

An element $a\in X$ appearing $p$ times in $M$ is denoted by $p/a\in M$ and thus $C_M(a)=p$. If we consider $k/a\in M$, the value of $k$ satisfies $k\leq p$. The mset drawn from the set $X=\{a_1,a_2,\ldots,a_n\}$ is represented as $M=\{p_1/a_1,p_2/a_2,\ldots,p_n/a_n\}$. An mset is called \textbf{regular} if all its objects occur with the same multiplicity and the common multiplicity is called its \textbf{height}. The \textbf{cardinality} of an mset $M$ drawn from $X$ is $|M|=\sum_{a\in X}C_M(a)$. The \textbf{root set} of $M$ denoted by $M^{*}$ is defined as $M^{*}=\{a\in X:C_M(a)>0\}$.

Let $M_1$ and $M_2$ be two msets drawn from a set $X$. We call $M_1$ a \textbf{submset} of $M_2$ ($M_1\subseteq M_2$) if $C_{M_1}(a)\leq C_{M_2}(a)$ for all $a\in X$. We call $M_1$ a \textbf{proper submset} of $M_2$ ($M_1\subset M_2)$ if there exists at least one $a\in M_2^*$ such that $C_{M_1}(a)<C_{M_2}(a)$. Two msets $M_1$ and $M_2$ are equal ($M_1=M_2$) if $M_1\subseteq M_2$ and $M_2\subseteq M_1$.

Let $M'$ be a submset of $M$. An element $a\in M'^{*}$ is said to have \textbf{full count with respect to $M$} if $C_{M'}(a)=C_{M}(a)$. $M'$ is said to have full count if for all $a\in M'^{*}$, one has $C_{M'}(a)=C_{M}(a)$; otherwise, $M'$ is said to be with \textbf{partial count}.

Let $M_1$ and $M_2$ be two msets drawn from a set $X$. \textbf{Addition (sum)} of $M_1$ and $M_2$ denoted by $M=M_1\oplus M_2$ is defined as $C_{M}(a)=C_{M_1}(a)+C_{M_2}(a)$ for all $a\in X$. \textbf{Subtraction (difference)} of $M_2$ from $M_1$ denoted by $M=M_1\ominus M_2$ is defined as $C_M(a)=\max\left\{C_{M_1}(a)-C_{M_2}(a),0\right\}$ for all $a\in X$.
The \textbf{union} of $M_1$ and $M_2$ is an mset denoted by $M=M_1\cup M_2$ such that for all $a\in X$, $C_M(a)=\text{max}\{C_{M_1}(a),C_{M_2}(a)\}$.

The \textbf{mset space} $[X]^{l}$ is the set of all msets drawn from $X$ such that no element in an mset occurs more than $l$ times. If $M_1,M_2\in[X]^l$, the \textbf{mset sum} would be modified as
\begin{center}
$C_{M_1\oplus M_2}(a)=\text{min}\left\{l,C_{M_1}(a)+C_{M_2}(a)\right\}$ for all $a\in X$.
\end{center}
Let $M\in[X]^l$ be an mset, the \textbf{complement} $M^{c}$ is an element of $[X]^l$ such that $C_{M^{c}}(a)=l-C_{M}(a)$ for all $a\in X$.

Let $M_1$ and $M_2$ be two msets drawn from $X$, the \textbf{Cartesian product} of $M_1$ and $M_2$ is also an mset defined as
$$M_1\times M_2=\{pq/(p/a,q/b):p/a\in M_1,q/b\in M_2\}.$$
Denote by $C_1(a,b)$ the count of the first coordinate in the ordered pair $(a,b)$ and by $C_2(a,b)$ the count of the second coordinate in the ordered pair $(a,b)$.

A submset $R$ of $M\times M$ is said to be an \textbf{mset relation} on $M$ if every member $(p/a,q/b)$ of $R$ has count $C_1(a,b)\cdot C_2(a,b)$. An mset relation $R$ on an mset $M$ is said to be \textbf{reflexive} if $m/a\ R\ m/a$ for all $m/a\in M$; \textbf{antisymmetric} if $m/a\ R\ n/b$ and $n/b\ R\ m/a$ imply $m=n$ and $a=b$; \textbf{transitive} if $m/a\ R\ n/b$ and $n/b\ R\ k/c$ imply $m/a\ R\ k/c$. An mset relation $R$ is called a \textbf{partially ordered mset relation (or pomset relation)} if it is reflexive, antisymmetric and transitive. The pair $(M,R)$ is known as a \textbf{partially ordered multiset (pomset)} denoted by $\mathbb{P}$.

Let $\mathbb{P}=(M,R)$ and $m/a\in M$. Then $m/a$ is a \textbf{maximal element} of $\mathbb{P}$ if there exists no $n/b\in M\ (b\neq a)$ such that $m/a\ R\ n/b$; $m/a$ is a \textbf{minimal element} if there exists no $n/b\in M\ (b\neq a)$ such that $n/b\ R\ m/a$. $\mathbb{P}$ is called a \textbf{chain} if every distinct pair of points from $M$ is comparable in $\mathbb{P}$. $\mathbb{P}$ is called an \textbf{antichain} if $n/b\ R\ m/a$ implies $a=b$.

A submset $I$ of $M$ is called an \textbf{order ideal} (or simply an \textbf{ideal}) of $\mathbb{P}$ if $k/a\in I$ and $q/b\ R\ k/a$ $(b\neq a)$ imply $q/b\in I$.  An \textbf{ideal generated by an element $k/a\in M$} is defined as
$$\langle k/a\rangle=\{k/a\}\cup\{q/b\in M: q/b\ R\ k/a\  \text{and}\ b\neq a\}. $$

An \textbf{ideal generated by a submset} $S$ of $M$ is defined by $\langle S\rangle=\bigcup\limits_{k/a\in S}\langle k/a\rangle$. We use $\mathcal {I}(\mathbb{P})$ (resp. $\mathcal {I}^{r}(\mathbb{P}))$ to denote the set of all ideals of $\mathbb{P}$ (resp. of cardinality $r$). An ideal $I$ is said to be \textbf{an ideal with full count} if $C_{I}(a)=C_{M}(a)$ for all $a\in I^{*}$, otherwise, $I$ is said to be \textbf{an ideal with partial count}. We denote the set of all maximal elements in an ideal $I$ by $M(I)$.

\begin{example}
Let us consider the case $M=\{2/1,2/2,2/3\}$ and $R$ be a $V$-shape poset i.e., $1<2$, $1<3$ and 2, 3 are incomparable. Then the maximal elements of pomset $\mathbb{P}=(M,R)$ are $2/2$ and $2/3$ and the minimal element is $2/1$. The ideals of $\mathbb{P}$ with cardinality $1$ is $I_1=\{1/1\}$, with cardinality 2 is $I_2=\{2/1\}$, with cardinality 3 are $I_3=\{2/1,1/2\}$ and $I_4=\{2/1,1/3\}$, with cardinality 4 are $I_5=\{2/1,2/2\}$, $I_6=\{2/1,2/3\}$ and $I_7=\{2/1,1/2,1/3\}$, with cardinality 5 are $I_8=\{2/1,2/2,1/3\}$ and $I_9=\{2/1,1/2,2/3\}$, with cardinality 6 is $I_{10}=\{2/1,2/2,2/3\}$ respectively.
\end{example}

\begin{proposition}([\ref{POMSET1}])\label{pomset1}
Let $\mathbb{P}=(M,R)$ be a pomset. Then
\begin{enumerate}[(1)]
\item for any $I\in\mathcal {I}^r(\mathbb{P})$ and $0\leq s\leq r\leq |M|$, there exists $J\in\mathcal {I}^s(\mathbb{P})$ such that $J\subseteq I$.
\item for any $I\in\mathcal {I}^r(\mathbb{P})$ and $0\leq r\leq s\leq |M|$, there exists $J\in\mathcal {I}^s(\mathbb{P})$ such that $I\subseteq J$.
\end{enumerate}
\end{proposition}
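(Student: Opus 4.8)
The plan is to reduce both parts to a single incremental move and then iterate. For part (1) I would establish the auxiliary claim that \emph{if $J$ is an ideal of $\mathbb{P}$ with $|J|=t>0$, then $\mathbb{P}$ has an ideal $J'$ with $J'\subseteq J$ and $|J'|=t-1$}; applying this $r-s$ times to the given $I$ produces the required $J\in\mathcal{I}^{s}(\mathbb{P})$. Dually, for part (2) the auxiliary claim is that \emph{if $J$ is an ideal with $|J|=t<|M|$, then $\mathbb{P}$ has an ideal $J'$ with $J\subseteq J'$ and $|J'|=t+1$}, which one iterates $s-r$ times. The boundary cases $s=0$ and $s=|M|$ are taken care of by the empty mset and by $M$ itself, both of which are ideals.

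For the downward move I would pick $a\in J^{*}$ with $C_J(a)/a\in M(J)$, i.e.\ a maximal element of the ideal $J$ (which exists since $J^{*}\neq\emptyset$), and set $J'=J\ominus\{1/a\}$, so that $C_{J'}(a)=C_J(a)-1$ and every other count is unchanged; then $J'\subseteq J$ and $|J'|=t-1$. For the upward move, since $|J|<|M|$ the submset $T=\{b\in M^{*}:C_J(b)<C_M(b)\}$ is nonempty, and I would choose $a\in T$ that is minimal for the order of $\mathbb{P}$ (no element of $T$ lies strictly below $a$) and put $J'=J\oplus\{1/a\}$, which raises the cardinality by exactly one. The guiding observations are: in the upward move every element strictly below $a$ lies outside $T$, hence already appears in $J$ with full count, so adjoining one copy of $a$ cannot break the ideal condition; in the downward move no element of $J^{*}$ lies strictly above $a$, so lowering the count of $a$ by one cannot strand anything.

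The only substantive work is the routine verification that $J'$ is again an order ideal, and this is the step I expect to need the most care. Concretely, if $x$ is in the root set of $J'$ and $b\neq x$ lies below $x$ in $\mathbb{P}$, then $x\in J^{*}$ and idealhood of $J$ give $C_J(b)=C_M(b)$, so it suffices to check that the count of $b$ was not decreased in passing from $J$ to $J'$; the only count that changed is that of $a$, and $b=a$ is excluded by the maximality choice in the downward move, while in the upward move $b=a<x$ with $x\in J^{*}$ would force $C_J(a)=C_M(a)$, contradicting $a\in T$ (and $x=a$ is impossible since $b\neq x$). One should also handle the two degenerate sub-cases in which $a$ leaves the root set (downward, when $C_J(a)=1$) or enters it (upward, when $a\notin J^{*}$): in both, the fact that $J^{*}\setminus\{a\}$, respectively $J^{*}\cup\{a\}$, is still a down-set is exactly the content of the maximality, respectively minimality, of $a$. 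Since each move preserves idealhood and changes $|J|$ by one, iteration yields the proposition; there is no genuine obstacle here, the entire content being the correct choice of the element to delete or adjoin together with the bookkeeping of multiplicities.
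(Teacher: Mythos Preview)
Your argument is correct: reducing to single-step moves and choosing a maximal element of $J^{*}$ for the downward step (respectively a minimal element of $T=\{b\in M^{*}:C_J(b)<C_M(b)\}$ for the upward step) is exactly what makes the ideal property survive, and your verification covers the relevant cases cleanly, including the edge cases where $a$ leaves or enters the root set. Note, however, that the paper does not supply its own proof of this proposition; it is quoted from~[\ref{POMSET1}] as a known fact, so there is no in-paper argument to compare against.
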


For a given pomset $\mathbb{P}=(M,R)$, we define the \textbf{dual pomset} $\widetilde{\mathbb{P}}=(M,\widetilde{R})$ as follows:
\begin{center}
$\mathbb{P}$ and $\widetilde{\mathbb{P}}$ have the same underlying set $M$ and $p/a\ R\ q/b$ in $\mathbb{P}$ if and only if $q/b\ \widetilde{R}\ p/a$ in $\widetilde{\mathbb{P}}$.
\end{center}

\begin{proposition}([\ref{POMSET2}])
Let $M\in [X]^l$ be a regular mset with height $l$. Let $\mathbb{P}$ be a pomset on $M$ and $\tilde{\mathbb{P}}$ be its dual pomset. Then the order ideals of $\tilde{\mathbb{P}}$ are precisely the complements of the order ideals of $\mathbb{P}$, that is, $\mathcal {I}(\widetilde{\mathbb{P}})=\left\{I^{c}:I\in \mathcal {I}(\mathbb{P})\right\}$.
\end{proposition}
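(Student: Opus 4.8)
The plan is to establish the two inclusions $\mathcal{I}(\widetilde{\mathbb{P}}) \subseteq \{I^c : I \in \mathcal{I}(\mathbb{P})\}$ and $\{I^c : I \in \mathcal{I}(\mathbb{P})\} \subseteq \mathcal{I}(\widetilde{\mathbb{P}})$ separately, but since complementation in $[X]^l$ is an involution ($(M^c)^c = M$ because $C_{(M^c)^c}(a) = l - (l - C_M(a)) = C_M(a)$, using regularity of $M$), it suffices to prove just one implication: if $I$ is an order ideal of $\mathbb{P}$, then $I^c$ is an order ideal of $\widetilde{\mathbb{P}}$. Applying this to $\widetilde{\mathbb{P}}$ in place of $\mathbb{P}$ and using $\widetilde{\widetilde{\mathbb{P}}} = \mathbb{P}$ then yields the reverse containment, and the two together give the claimed equality.

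So the core step is: let $I \in \mathcal{I}(\mathbb{P})$ and set $J = I^c$, i.e.\ $C_J(a) = l - C_I(a)$ for all $a \in X$. I must show $J$ is an order ideal of $\widetilde{\mathbb{P}} = (M, \widetilde{R})$. Suppose $k/a \in J$ and $q/b\ \widetilde{R}\ k/a$ with $b \neq a$; I need to conclude $q/b \in J$, i.e.\ $q \leq C_J(b) = l - C_I(b)$. By definition of the dual pomset, $q/b\ \widetilde{R}\ k/a$ means $k/a\ R\ q/b$ in $\mathbb{P}$. Now I argue by contradiction: if $q/b \notin J$, then $q > l - C_I(b)$, which forces $C_I(b) > l - q \geq 0$, so $C_I(b) \geq 1$ and $b \in I^*$; moreover $l - C_I(b) < q \leq l$, and I want to leverage $k/a\ R\ q/b$ together with the ideal property of $I$. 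Here I should be careful about the multiplicities: the relation $k/a\ R\ q/b$ in the pomset carries specific count data (recall a member $(p/a, q/b)$ of the mset relation has count $C_1(a,b)\cdot C_2(a,b)$), and what the ideal condition for $I$ actually gives is that whenever some $k'/a \in I$ and $q'/b\ R\ k'/a$ with $b\neq a$ then $q'/b \in I$ — so I will need the fact (which should be extractable from the pomset axioms and the conventions in \cite{POMSET1}, \cite{POMSET2}) that comparability $k/a\ R\ q/b$ in a pomset on a regular mset of height $l$ effectively relates the maximal admissible multiplicities, so that $a$ being below $b$ forces the full count $l$ of $a$ to lie below $b$, and dually.

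The main obstacle I anticipate is precisely the bookkeeping of multiplicities in the mset relation: unlike the classical poset case, "$a \leq b$" is not a bare relation but comes decorated with counts, and one must pin down exactly which multiplicities of $b$ are forced into an ideal once a given multiplicity of $a$ is present. I expect the clean statement to be that for $a \neq b$, either no copy of $a$ is comparable-below any copy of $b$, or the relation $l/a\ R\ l/b$ holds (full height to full height), because the mset relation on a regular height-$l$ mset, being a submset of $M \times M$ whose pairs $(p/a,q/b)$ carry count $pq$, and being reflexive with $l/a\ R\ l/a$, essentially propagates to the top multiplicity. Granting that normalization, the argument closes: $k/a \in J$ with $k \geq 1$ gives $C_I(a) \leq l-1$ so $a$ does not have full count in $I$, hence (by the ideal property run in reverse) $b$ cannot sit above $a$ in $\mathbb{P}$ in a way that would be violated — contradiction with $q/b \notin J$. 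I will then record the involution remark explicitly, note that $|J| = \sum_a (l - C_I(a)) = |M| - |I|$ as a byproduct (useful later for the weight-distribution results), and conclude.
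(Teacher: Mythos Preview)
The paper does not give its own proof of this proposition; it is quoted verbatim from reference~[\ref{POMSET2}] and used as a black box. So there is no in-paper argument to compare against, and I can only evaluate your proposal on its own terms.

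Your architecture is sound. The reduction to a single inclusion via the two involutions $I\mapsto I^{c}$ and $\mathbb{P}\mapsto\widetilde{\mathbb{P}}$ is correct and efficient, and the core implication you isolate (if $I\in\mathcal{I}(\mathbb{P})$ then $I^{c}\in\mathcal{I}(\widetilde{\mathbb{P}})$) is exactly what needs to be shown. The point you flag as the main obstacle---what ``$q/b\ R\ k/a$'' means for varying $q,k$ on a regular mset of height $l$---is indeed the only real content, and your expected resolution is the right one. On a regular $M$ of height $l$ the mset relation is, by the definition the paper records (and as visible in every concrete example in the paper, where $R$ consists of pairs $(l/a,l/b)$ with count $l^{2}$), determined entirely by an ordinary relation on the root set $M^{*}$: for $a\neq b$, either no pair $(p/a,q/b)$ lies in $R$ or all of them do. Granting this, the ideal condition collapses to: $I\in\mathcal{I}(\mathbb{P})$ iff for all $a,b$ with $b<a$ in the underlying poset and $C_{I}(a)\geq 1$, one has $C_{I}(b)=l$. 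The contrapositive of this is precisely the ideal condition for $I^{c}$ in $\widetilde{\mathbb{P}}$, so a direct argument is actually cleaner than the contradiction wrapper you use, though both work.

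The only residual gap is that you state the normalization claim as an expectation and defer to the references rather than extracting it from the definition of an mset relation. It is a one-line verification once you observe that the count condition ``every member $(p/a,q/b)$ of $R$ has count $C_{1}(a,b)\cdot C_{2}(a,b)$'' forces, on a regular height-$l$ mset, the pair $(l/a,l/b)$ to appear with count $l^{2}$ whenever any $(p/a,q/b)$ does; write that line out and the proof is complete.
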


Consider $\mathbb{Z}_m=\{0,1,\ldots,m-1\}$, the ring of integers modulo $m$ for $m\geq 4$. We consider a pomset $\mathbb{P}$ defined on an mset $M=\left\{\left\lfloor\frac{m}{2}\right\rfloor/1,\left\lfloor\frac{m}{2}\right\rfloor/2, \ldots,\left\lfloor\frac{m}{2}\right\rfloor/s\right\}\in [X]^{\left\lfloor\frac{m}{2}\right\rfloor}$ where $X=[s]$.

Let $\pi: [s]\rightarrow \mathbb{N}$ be a map such that $n=\sum\limits_{i=1}\limits^{s}\pi(i)$. The map $\pi$ is said to be a \textbf{labeling} of the pomset $\mathbb{P}$, and the pair $(\mathbb{P}, \pi)$ is called a \textbf{pomset block structure} over $[s]$. Denote $\pi(i)$ by $k_i$ and take $V_i$ as free $\mathbb{Z}_m$-module $\mathbb{Z}_m^{k_i}$ for all $1\leq i\leq s$. Define $V$ as
$$V=V_1\oplus V_2\oplus\cdots\oplus V_s$$
which is isomorphic to $\mathbb{Z}_m^{n}$. Each $u\in\mathbb{Z}_m^{n}$ can be written as $u=(u_1,u_2,\ldots,u_s)$ where $u_i\in \mathbb{Z}_m^{k_i}$, $1\leq i\leq s$. The \textbf{Lee block support} of $u\in\mathbb{Z}_m^{n}$ is defined as
$$supp_{(L,\pi)}(u)=\left\{s_i/i:s_i=w_{(L,\pi)}(u_i),s_i\neq 0\right\},$$
where
$$w_{(L,\pi)}(u_i)=\max\left\{\min\left\{u_{i_t}, m-u_{i_t}\right\}:1\leq t\leq \pi(i)\right\}.$$
The \textbf{$(\mathbb{P},\pi)$-weight} of $u\in\mathbb{Z}_m^{n}$ is defined to be the cardinality of the ideal generated by $supp_{(L,\pi)}(u)$, that is
$$w_{(\mathbb{P},\pi)}(u)=\left|\langle supp_{(L,\pi)}(u)\rangle\right|.$$
The \textbf{pomset block distance} between two vectors $u,v\in\mathbb{Z}_m^{n}$ is defined as
$$d_{(\mathbb{P},\pi)}(u,v)=w_{(\mathbb{P},\pi)}(u-v).$$

Now we prove that the above pomset block distance is a metric on $\mathbb{Z}_m^{n}$.

\begin{theorem}
 Let $\mathbb{P}$ be a pomset on a regular mset $M=\left\{\left\lfloor\frac{m}{2}\right\rfloor/1,\left\lfloor\frac{m}{2}\right\rfloor/2, \ldots,\left\lfloor\frac{m}{2}\right\rfloor/s\right\}$, and $\pi:[s]\rightarrow \mathbb{N}$ such that $n=\sum\limits_{i=1}\limits^{s}\pi(i)$ be a labeling of $\mathbb{P}$. Then the pomset block distance $d_{(\mathbb{P},\pi)}(.,.)$ is a metric on $\mathbb{Z}_m^{n}$.
\end{theorem}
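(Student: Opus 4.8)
The plan is to verify the three axioms of a metric: non-negativity with the identity of indiscernibles, symmetry, and the triangle inequality; only the last requires genuine work. Non-negativity is automatic, because $w_{(\mathbb{P},\pi)}(u)=|\langle supp_{(L,\pi)}(u)\rangle|$ is the cardinality of a submset of the finite mset $M$. For the identity of indiscernibles, note that $\min\{l,m-l\}=0$ for $l\in\mathbb{Z}_m$ forces $l=0$, so $w_{(L,\pi)}(u_i)=0$ for all $i$ exactly when $u=0$, i.e. exactly when $supp_{(L,\pi)}(u)$ is the empty mset and $w_{(\mathbb{P},\pi)}(u)=0$; applying this to $u-v$ gives $d_{(\mathbb{P},\pi)}(u,v)=0\iff u=v$. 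Symmetry follows from the invariance of the Lee weight under negation, $w_L(-l)=w_L(l)$: this gives $w_{(L,\pi)}((u-v)_i)=w_{(L,\pi)}((v-u)_i)$ for every $i$, hence $supp_{(L,\pi)}(u-v)=supp_{(L,\pi)}(v-u)$ and $d_{(\mathbb{P},\pi)}(u,v)=d_{(\mathbb{P},\pi)}(v,u)$.

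For the triangle inequality, set $x=u-v$ and $y=v-w$; it suffices to prove $w_{(\mathbb{P},\pi)}(x+y)\le w_{(\mathbb{P},\pi)}(x)+w_{(\mathbb{P},\pi)}(y)$. The first step is a support estimate. Applying the Lee-metric triangle inequality $w_L(a+b)\le w_L(a)+w_L(b)$ in $\mathbb{Z}_m$ coordinatewise inside the $i$-th block and then taking the maximum over the $\pi(i)$ coordinates gives $w_{(L,\pi)}((x+y)_i)\le w_{(L,\pi)}(x_i)+w_{(L,\pi)}(y_i)$; since also $w_{(L,\pi)}((x+y)_i)\le\lfloor m/2\rfloor$, we obtain the submset relation $supp_{(L,\pi)}(x+y)\subseteq A\oplus B$ inside $[X]^{\lfloor m/2\rfloor}$, where $A=supp_{(L,\pi)}(x)$, $B=supp_{(L,\pi)}(y)$ and $\oplus$ is the capped mset sum.

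The second step passes to ideals and is where the main obstacle lies. Since the ideal-generation operator $\langle\cdot\rangle$ is monotone under submset inclusion --- enlarging multiplicities or adding generators can only enlarge the ideal, and, $M$ being regular of height $\lfloor m/2\rfloor$, every index strictly below the root set is pulled in with full count $\lfloor m/2\rfloor$ --- the support estimate yields $\langle supp_{(L,\pi)}(x+y)\rangle\subseteq\langle A\oplus B\rangle$. It then remains to prove the cardinality bound $|\langle A\oplus B\rangle|\le|\langle A\rangle|+|\langle B\rangle|$. Bounding $\langle A\oplus B\rangle$ by the mset union $\langle A\rangle\cup\langle B\rangle$ does not work (an index in $A^*\cap B^*$ with Lee weight $1$ on each side acquires count $2$ in $A\oplus B$ but only $1$ in the union), so I would instead establish the sharper inclusion $\langle A\oplus B\rangle\subseteq\langle A\rangle\oplus\langle B\rangle$ by a count-by-count check that splits the root set $A^*\cup B^*$ by maximality: for an index $a$ not maximal in $A^*\cup B^*$ (or strictly below it) both sides have count $\lfloor m/2\rfloor$, while for $a$ maximal in $A^*\cup B^*$ no boost to full count occurs on either side and both sides equal $\min\{\lfloor m/2\rfloor,\,C_A(a)+C_B(a)\}$. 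Finally $|\langle A\rangle\oplus\langle B\rangle|=\sum_a\min\{\lfloor m/2\rfloor,\,C_{\langle A\rangle}(a)+C_{\langle B\rangle}(a)\}\le\sum_a C_{\langle A\rangle}(a)+\sum_a C_{\langle B\rangle}(a)=|\langle A\rangle|+|\langle B\rangle|$. Chaining the inclusions with this estimate gives $w_{(\mathbb{P},\pi)}(x+y)\le|\langle A\rangle|+|\langle B\rangle|=w_{(\mathbb{P},\pi)}(x)+w_{(\mathbb{P},\pi)}(y)$, as desired. The subtle point throughout is the multiplicity bookkeeping --- making sure that the ideal closure raises a count to $\lfloor m/2\rfloor$ only at indices that are non-maximal in $A^*\cup B^*$, where the same raising already occurs in $\langle A\rangle$ or $\langle B\rangle$.
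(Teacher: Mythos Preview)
Your proof is correct. Both you and the paper reduce the triangle inequality to the subadditivity $w_{(\mathbb{P},\pi)}(x+y)\le w_{(\mathbb{P},\pi)}(x)+w_{(\mathbb{P},\pi)}(y)$, and both identify the blockwise inequality $w_{(L,\pi)}(x_i+y_i)\le w_{(L,\pi)}(x_i)+w_{(L,\pi)}(y_i)$ as the key ingredient. The difference is in where the effort goes. The paper simply asserts ``we only need to prove'' this blockwise inequality and then spends its energy on a six-case computation that reproves the Lee-metric triangle inequality in $\mathbb{Z}_m$ from scratch. You take the Lee triangle inequality as known and instead fill in the reduction step that the paper leaves implicit: you show $supp_{(L,\pi)}(x+y)\subseteq A\oplus B$, then $\langle A\oplus B\rangle\subseteq\langle A\rangle\oplus\langle B\rangle$ via a count-by-count check on maximal versus non-maximal indices, and finally bound $|\langle A\rangle\oplus\langle B\rangle|\le|\langle A\rangle|+|\langle B\rangle|$. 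Your route is arguably more informative in the pomset-block setting, since it is precisely the passage from the blockwise support estimate to the ideal-cardinality bound that is new here, whereas the coordinate-level Lee inequality is classical.
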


\begin{proof}
It is obvious that $d_{(\mathbb{P},\pi)}(u,v)\geq 0$ and $d_{(\mathbb{P},\pi)}(u,v)=0$ if and only if $u=v$. Let $u,v,w\in\mathbb{Z}_m^{n}$. Since $supp_{(L,\pi)}(u-v)=supp_{(L,\pi)}(v-u)$, we have $d_{(\mathbb{P},\pi)}(v,u)=d_{(\mathbb{P},\pi)}(u,v)$. It remains to show that $d_{(\mathbb{P},\pi)}(u,v)\leq d_{(\mathbb{P},\pi)}(u,w)+d_{(\mathbb{P},\pi)}(w,v)$.
Since $d_{(\mathbb{P},\pi)}(u,v)=w_{(\mathbb{P},\pi)}(u-v)=w_{(\mathbb{P},\pi)}(u-w+w-v)$, it suffices to show that the $(\mathbb{P},\pi)$-weight satisfies the inequality $w_{(\mathbb{P},\pi)}(x+y)\leq w_{(\mathbb{P},\pi)}(x)+w_{(\mathbb{P},\pi)}(y)$ for all $x,y\in\mathbb{Z}_m^{n}$. We only need to prove that $w_{(L,\pi)}(x_i+y_i)\leq w_{(L,\pi)}(x_i)+w_{(L,\pi)}(y_i)$ for all $i\in [s]$. Note that $$w_{(L,\pi)}(x_i+y_i)=\max\left\{\min\{x_{i_t}+y_{i_t},m-x_{i_t}-y_{i_t}\}: 1\leq t\leq \pi(i)\right\}.$$
Suppose that $w_{(L,\pi)}(x_i+y_i)=\min\{x_{i_1}+y_{i_1}, m-x_{i_1}-y_{i_1}\}$.

\begin{itemize}
  \item {\bf Case 1:} Assume $x_{i_1},y_{i_1}\leq\left\lfloor\frac{m}{2}\right\rfloor$ and $x_{i_1}+y_{i_1}\leq \left\lfloor\frac{m}{2}\right\rfloor$. We have $w_{(L,\pi)}(x_i+y_i)=x_{i_1}+y_{i_1}$. On the other hand, one has $\min\{x_{i_1},m-x_{i_1}\}=x_{i_1}\leq w_{(L,\pi)}(x_i)$ and $\min\{y_{i_1},m-y_{i_1}\}=y_{i_1}\leq w_{(L,\pi)}(y_i)$. Therefore,
      $w_{(L,\pi)}(x_i+y_i)=x_{i_1}+y_{i_1}\leq w_{(L,\pi)}(x_i)+w_{(L,\pi)}(y_i)$.
  \item {\bf Case 2:} Assume $x_{i_1},y_{i_1}\leq\left\lfloor\frac{m}{2}\right\rfloor$ and $x_{i_1}+y_{i_1}> \left\lfloor\frac{m}{2}\right\rfloor$. Then $w_{(L,\pi)}(x_i+y_i)=m-x_{i_1}-y_{i_1}\leq x_{i_1}+y_{i_1}\leq w_{(L,\pi)}(x_i)+w_{(L,\pi)}(y_i)$.
  \item {\bf Case 3:} Assume $x_{i_1},y_{i_1}>\left\lfloor\frac{m}{2}\right\rfloor$ and $x_{i_1}+y_{i_1}\leq\left\lfloor\frac{m}{2}\right\rfloor$. We have $w_{(L,\pi)}(x_i+y_i)=x_{i_1}+y_{i_1}-m\leq\left\lfloor\frac{m}{2}\right\rfloor$. On the other hand, one has $\min\{x_{i_1},m-x_{i_1}\}=m-x_{i_1}\leq w_{(L,\pi)}(x_i)$ and $\min\{y_{i_1},m-y_{i_1}\}=m-y_{i_1}\leq w_{(L,\pi)}(y_i)$. So, $w_{(L,\pi)}(x_i)+w_{(L,\pi)}(y_i)\geq m-x_{i_1}+m-y_{i_1}\geq\left\lfloor\frac{m}{2}\right\rfloor$ and hence $w_{(L,\pi)}(x_i+y_i)\leq\left\lfloor\frac{m}{2}\right\rfloor =w_{(L,\pi)}(x_i)+w_{(L,\pi)}(y_i)$.
  \item {\bf Case 4:} Assume $x_{i_1},y_{i_1}>\left\lfloor\frac{m}{2}\right\rfloor$ and $x_{i_1}+y_{i_1}>\left\lfloor\frac{m}{2}\right\rfloor$. Then $w_{(L,\pi)}(x_i+y_i)=2m-x_{i_1}-y_{i_1}\leq w_{(L,\pi)}(x_i)+w_{(L,\pi)}(y_i)$.
  \item {\bf Case 5:} Assume $x_{i_1}\leq\left\lfloor\frac{m}{2}\right\rfloor$, $y_{i_1}>\left\lfloor\frac{m}{2}\right\rfloor$ and $x_{i_1}+y_{i_1}>\left\lfloor\frac{m}{2}\right\rfloor$. We have $w_{(L,\pi)}(x_i+y_i)=m-x_{i_1}-y_{i_1}$. On the other hand, one has $\min\{x_{i_1},m-x_{i_1}\}=x_{i_1}\leq w_{(L,\pi)}(x_i)$ and $\min\{y_{i_1},m-y_{i_1}\}=m-y_{i_1}\leq w_{(L,\pi)}(y_i)$. So, $w_{(L,\pi)}(x_i)+w_{(L,\pi)}(y_i)\geq x_{i_1}+m-y_{i_1}\geq m-x_{i_1}-y_{i_1}=w_{(L,\pi)}(x_i+y_i)$.
  \item {\bf Case 6:} Assume $x_{i_1}\leq\left\lfloor\frac{m}{2}\right\rfloor$, $y_{i_1}>\left\lfloor\frac{m}{2}\right\rfloor$ and $x_{i_1}+y_{i_1}\leq\left\lfloor\frac{m}{2}\right\rfloor$. Then $w_{(L,\pi)}(x_i+y_i)=x_{i_1}+y_{i_1}-m\leq x_{i_1}+m-y_{i_1}\leq w_{(L,\pi)}(x_i)+w_{(L,\pi)}(y_i)$.
\end{itemize}
This completes our proof.
\end{proof}

The metric $d_{(\mathbb{P},\pi)}(.,.)$ on $\mathbb{Z}_m^{n}$ is called a \textbf{pomset block metric}. The pair $(\mathbb{Z}_m^n, d_{(\mathbb{P},\pi)})$ is said to be a \textbf{pomset block space}. A subset $\mathcal {C}$ of $(\mathbb{Z}_m^n,d_{(\mathbb{P},\pi)})$ with cardinality $K$ is called an $(n,K,d)$ \textbf{$(\mathbb{P},\pi)$-code}, where $\mathbb{Z}_m^n$ is equipped with the pomset block metric $d_{(\mathbb{P},\pi)}(.,.)$ and
$$d=d_{(\mathbb{P},\pi)}(\mathcal {C})=\min\left\{w_{(\mathbb{P},\pi)}(c-c'):c, c'\in \mathcal {C}\right\}$$
is the $(\mathbb{P},\pi)$-minimum distance of $\mathcal {C}$. If $\mathcal {C}$ is a submodule of $\mathbb{Z}_m^n$ with cardinality $m^k$,  we call $\mathcal {C}$ a \textbf{linear $(n,m^k,d)$ $(\mathbb{P},\pi)$-code}.
The \textbf{dual} of an $(n,K,d)$ $(\mathbb{P},\pi)$-code $\mathcal {C}$ is defined as
$$\mathcal {C}^{\bot}=\left\{v\in\mathbb{Z}_{m}^{n}:c\cdot v=c_1v_1+\cdots+c_nv_n=0\ \text{for all}\ c\in \mathcal {C}\right\}.$$

\section{Pomset balls and MDS $(\mathbb{P},\pi)$-codes}

\subsection{$r$-balls}

\quad\; Let $\mathbb{P}$ be a pomset on $M=\left\{\left\lfloor\frac{m}{2}\right\rfloor/1,\left\lfloor\frac{m}{2}\right\rfloor/2, \ldots,\left\lfloor\frac{m}{2}\right\rfloor/s\right\} $ and $\pi$ be a labeling on $\mathbb{P}$. For $u\in\mathbb{Z}_m^n$ and a non-negative integer $r$, the \textbf{$(\mathbb{P}, \pi)$-ball} with center $u$ and radius $r$ is the set
$$B_r(u)=\left\{v\in\mathbb{Z}_m^n:d_{(\mathbb{P},\pi)}(u,v)\leq r\right\}.$$
The \textbf{$(\mathbb{P},\pi)$-sphere} with center $u\in\mathbb{Z}_m^n$ and radius $r$ denoted by $S_r(u)$ is defined as the set of all those vectors of $\mathbb{Z}_m^n$ whose pomset block distance is equal to $r$, that is,
$$S_r(u)=\left\{v\in\mathbb{Z}_m^n:d_{(\mathbb{P},\pi)}(u,v)=r\right\}.$$
For simplify, $B_r(u)$ and $S_r(u)$ are also called \textbf{$r$-ball} and \textbf{$r$-sphere} centered at $u$ respectively.

\begin{Definition}
A $(\mathbb{P},\pi)$-code $\mathcal {C}$ is said to be an $r$-error correcting $(\mathbb{P},\pi)$-code if the $(\mathbb{P},\pi)$-balls centered at the codewords of $\mathcal {C}$ are pairwise disjoint.
\end{Definition}

\begin{Definition}
A $(\mathbb{P},\pi)$-code $\mathcal {C}$ is said to be an $r$-perfect $(\mathbb{P},\pi)$-code if the $(\mathbb{P},\pi)$-balls of radius $r$ centered at the codewords of $\mathcal {C}$ are pairwise disjoint and their union covers the entire space $\mathbb{Z}_m^n$.
\end{Definition}

We now consider the cardinality of an $r$-ball $B_r(u)$ centered at $u\in \mathbb{Z}_m^n$. It follows from the definition that $|B_r(u)|=1+\sum\limits_{i=1}\limits^{r}|S_i(u)|$. Let $I$ be an ideal in the pomset $\mathbb{P}$ with cardinality $i$ having exactly $j$ maximal elements $M(I)=\left\{C_{I}(a_1)/a_1,C_{I}(a_2)/a_2,\ldots,C_{I}(a_j)/a_j\right\}$. Then, the set
\begin{eqnarray*}
A_I&=&\{v=(v_1,v_2,\ldots,v_s)\in\mathbb{Z}_m^n: v_t=\bar{0}\ \text{if}\ t\notin I^{*};\\
&&v_{t_i}=\pm a, -a\leq v_{t_l}\leq a, 1\leq i\neq l\leq k_t\ \text{if}\ t\in M(I)^{*}, a=C_{I}(t);\\
&&v_{t_i}\in \mathbb{Z}_m, 1\leq i\leq k_t,\ \text{if}\ t\in I^{*}\setminus M(I)^{*}\}
\end{eqnarray*}
gives all vectors $v$ in $\mathbb{Z}_m^n$ such that $\langle supp_{(L,\pi)}(v)\rangle=I$. By the definition of $A_I$, we have $|A_I|=m^{\sum\limits_{t\in I^{*}\setminus M(I)^{*}}k_t}\prod\limits_{l=1}\limits^{j}N_{a_l}$ where $N_{a_l}$ is defined as follows:
\begin{itemize}
\item If $m$ is odd, $N_{a_l}=(2C_I(a_l)+1)^{k_{a_l}}-(2C_I(a_l)-1)^{k_{a_l}}$;
\item If $m$ is even, $N_{a_l}=\left\{
                             \begin{array}{ll}
                               \dfrac{(2C_I(a_l)+1)^{k_{a_l}}-(2C_I(a_l)-1)^{k_{a_l}}}{2}, & \text{if}\ C_{I}(a_l)=\left\lfloor\frac{m}{2}\right\rfloor; \\[3mm]
                               (2C_I(a_l)+1)^{k_{a_l}}-(2C_I(a_l)-1)^{k_{a_l}}, & \text{otherwise.}
                             \end{array}
                           \right.$
\end{itemize}
Note that for two distinct ideals $I_1$ and $I_2$, $A_{I_1}\cap A_{I_2}=\varnothing$. Denote by $\mathcal {I}_j^i$ the collection of all ideals in $\mathbb{P}$ with cardinality $i$ having exactly $j$ maximal elements, then $\bigcup\limits_{j=1}\limits^{\min\{i,n\}}\mathcal {I}_{j}^{i}=\mathcal {I}^{i}(\mathbb{P})$ and $\bigcup\limits_{j=1}\limits^{\min\{i,n\}}\bigcup\limits_{I\in\mathcal {I}_j^i}A_I$ gives all those vectors of $(\mathbb{P},\pi)$-weight $i$. Hence
$$|B_r(u)|=1+\sum\limits_{i=1}\limits^{r}\sum\limits_{j=1}\limits^{\min\{i,n\}}
\sum\limits_{I\in\mathcal {I}_j^i\neq\varnothing} |A_{I}|=1+\sum\limits_{i=1}\limits^{r}\sum\limits_{j=1}\limits^{\min\{i,n\}}
\sum\limits_{I\in\mathcal {I}_j^i\neq\varnothing}m^{\sum\limits_{t\in I^{*}\setminus M(I)^{*}}k_t}\prod\limits_{l=1}\limits^{j}N_{a_l}.$$

\subsection{$I$-balls}

\quad\; Let $(\mathbb{P},\pi)$ be a pomset block structure on $\mathbb{Z}_m^n$ and let $I$ be a submset of $M$. For $u\in \mathbb{Z}_m^n$, the \textbf{$I$-ball} centered at $u$ is defined as
$$B_I(u)=\left\{v\in\mathbb{Z}_m^n: supp_{(L,\pi)}(u-v)\subseteq I\right\}.$$
For $I\in \mathcal {I}(\mathbb{P})$, the $I$-ball centered at $u$ is equal to the set
$$B_I(u)=\left\{v\in\mathbb{Z}_m^n:\ \langle supp_{(L,\pi)}(u-v)\rangle\subseteq I\right\}.$$
Similarly, the \textbf{$I$-sphere} centered at $u$ is defined as
$$S_{I}(u)=\left\{v\in\mathbb{Z}_m^n:\ \langle supp_{(L,\pi)}(u-v)\rangle=I\right\}.$$
We denote the $I$-ball (resp. $I$-sphere) centered at $\bar{0}$ by $B_I$ (resp. $S_I$). Whenever required, we shall use the notation $B_{I,(\mathbb{P},\pi)}(u)$ (resp. $S_{I,(\mathbb{P},\pi)}(u)$) instead of $B_I(u)$ (resp. $S_I(u)$).

\begin{Definition}
Let $(\mathbb{P},\pi)$ be a pomset block structure on $\mathbb{Z}_m^n$ and $I$ be an ideal of $\mathbb{P}$. A $(\mathbb{P}, \pi)$-code $\mathcal {C}\subseteq \mathbb{Z}_m^n$ is said to be $I$-perfect if the $I$-balls centered at the codewords of $\mathcal {C}$ are pairwise disjoint and their union is $\mathbb{Z}_m^n$, that is,
$$\mathbb{Z}_m^n=\bigsqcup\limits_{u\in \mathcal {C}}B_I(u).$$
\end{Definition}

The following proposition is a generalization of [\ref{POMSET2}, Proposition 3] where the metric was considered as pomset metric. The proof is on similar lines and therefore we omit it.

\begin{proposition}\label{BI}
Let $(\mathbb{P},\pi)$ be a pomset block structure on $\mathbb{Z}_m^{n}$. If $I$ is an ideal with full count in $\mathbb{P}$, then
\begin{enumerate}[(1)]
  \item $B_I$ is a submodule of $\mathbb{Z}_m^n$ of dimension $\sum\limits_{i\in I^{*}}k_i$.
  \item For $u\in\mathbb{Z}_m^n$, $B_I(u)$ is a coset of $B_I$ containing $u$, that is, $B_I(u)=u+B_I$.
  \item For $u,v\in\mathbb{Z}_m^n$, $B_I(u)$ and $B_I(v)$ are either disjoint or identical. Moreover
  $$B_I(u)=B_I(v)\ \text{if and only if}\ supp_{(L,\pi)}(u-v)\subseteq I.$$
  \item $B_{I^{c},\widetilde{\mathbb{P}}}= B_{I,\mathbb{P}}^{\perp}$ where $\widetilde{\mathbb{P}}$ is the dual pomset of $\mathbb{P}$.
\end{enumerate}
\end{proposition}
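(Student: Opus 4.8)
\textbf{Proof proposal for Proposition \ref{BI}.}

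The plan is to treat the four parts in the order stated, since each relies on the previous ones. For part (1), I would first observe that because $I$ is an ideal with full count, a vector $v$ satisfies $\langle supp_{(L,\pi)}(v)\rangle\subseteq I$ if and only if $v_t=\bar 0$ for every $t\notin I^{*}$, with no constraint whatsoever on the coordinates $v_t$ for $t\in I^{*}$ (the ``full count'' hypothesis is exactly what removes the $\pm C_I(a)$ restriction that appears on maximal elements in the description of $A_I$, since $C_I(a)=\lfloor m/2\rfloor$ forces every Lee value to lie in the allowed range). Hence $B_I=\{v\in\mathbb{Z}_m^n: v_t=\bar 0\ \forall\, t\notin I^{*}\}$, which is visibly a $\mathbb{Z}_m$-submodule isomorphic to $\bigoplus_{i\in I^{*}}V_i$, of dimension $\sum_{i\in I^{*}}k_i$. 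Part (2) is then immediate: $d_{(\mathbb{P},\pi)}(u,v)\le$ the relevant condition is translation-invariant because $supp_{(L,\pi)}(u-v)$ depends only on $u-v$, so $v\in B_I(u)\iff u-v\in B_I\iff v\in u+B_I$.

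For part (3), I would use part (2): $B_I(u)$ and $B_I(v)$ are cosets of the same subgroup $B_I$, so they are either equal or disjoint, and they coincide precisely when $u-v\in B_I$, i.e.\ when $supp_{(L,\pi)}(u-v)\subseteq I$ (equivalently $\langle supp_{(L,\pi)}(u-v)\rangle\subseteq I$, using $I\in\mathcal I(\mathbb{P})$). This is a purely formal coset argument once (1) and (2) are in hand.

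Part (4) is the substantive one and I expect it to be the main obstacle. The claim is $B_{I^{c},\widetilde{\mathbb{P}}}=B_{I,\mathbb{P}}^{\perp}$. By Proposition 2, $I^{c}$ is an order ideal of $\widetilde{\mathbb{P}}$, and since $M$ is regular of height $\lfloor m/2\rfloor$, the complement of a full-count ideal is again full count; so by part (1), $B_{I^{c},\widetilde{\mathbb{P}}}=\{v: v_t=\bar 0\ \forall\, t\notin (I^{c})^{*}\}=\{v: v_t=\bar 0\ \forall\, t\in I^{*}\}$, the submodule supported on the blocks $V_i$ with $i\notin I^{*}$. On the other hand $B_{I,\mathbb{P}}=\bigoplus_{i\in I^{*}}V_i$ (as a subset of $V=\bigoplus_{i=1}^{s}V_i$ spanned by the standard coordinates in those blocks), and a direct computation with the standard inner product shows that its orthogonal complement in $\mathbb{Z}_m^n$ is exactly $\bigoplus_{i\notin I^{*}}V_i$: indeed the inner product splits as a sum over blocks, $c\cdot v=\sum_{i=1}^s c_i\cdot v_i$, so $v\perp B_{I,\mathbb{P}}$ forces $c_i\cdot v_i=0$ for all $c_i\in\mathbb{Z}_m^{k_i}$ and all $i\in I^{*}$, which over $\mathbb{Z}_m$ (taking $c_i$ to run over the standard basis vectors, for which $e_j\cdot v_i=(v_i)_j$) gives $v_i=\bar 0$ for $i\in I^{*}$, and conversely any such $v$ is orthogonal to all of $B_{I,\mathbb{P}}$. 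Comparing the two descriptions gives $B_{I^{c},\widetilde{\mathbb{P}}}=B_{I,\mathbb{P}}^{\perp}$. The one point requiring care is that we are over the ring $\mathbb{Z}_m$ rather than a field, so I would note explicitly that the annihilator of the free submodule $\bigoplus_{i\in I^{*}}\mathbb{Z}_m^{k_i}$ under the standard bilinear form is computed coordinatewise and that a vector in $\mathbb{Z}_m^k$ annihilating every standard basis vector is zero — no genuine use of field structure is needed, only that $1$ generates $\mathbb{Z}_m$.
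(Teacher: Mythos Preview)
Your proposal is correct in all four parts; the identification $B_I=\bigoplus_{i\in I^{*}}V_i$ for full-count $I$ is exactly the key observation, and your coset and orthogonality arguments are sound (including the remark that only the existence of standard basis vectors, not field structure, is needed for part (4)). Note that the paper itself does not give a proof of this proposition at all: it simply states that the result generalizes Proposition~3 of [\ref{POMSET2}] and that ``the proof is on similar lines and therefore we omit it,'' so your write-up in fact supplies what the paper leaves to the reader.
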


\begin{remark}
\begin{enumerate}[(a)]
  \item The $(\mathbb{P},\pi)$-ball centered at $u\in\mathbb{Z}_m^n$ with radius $r$ is the union of $I$-balls centered at $u$ with $I\in\mathcal {I}^{r}(\mathbb{P})$, that is,
  $$B_r(u)=\bigcup\limits_{I\in\mathcal {I}^r(\mathbb{P})}B_I(u).$$
  \item Let $u$, $v$ be two vectors in $\mathbb{Z}_m^n$. If $u$ and $v$ belong to the same $I$-ball for some $I\in\mathcal {I}^r(\mathbb{P})$ with full count, then $d_{(\mathbb{P},\pi)}(u,v)\leq r$.
  \item Let $u$, $v$ be two vectors in $\mathbb{Z}_m^n$ and let $I$ be an ideal of $\mathbb{P}$ with partial count. If $B_I(u)\cap B_I(v)=\emptyset$, the inequality $d_{(\mathbb{P},\pi)}(u,v)> |I|$ is not necessarily true.
\end{enumerate}
\end{remark}

The following result can be easily obtained from the definition of $I$-perfect codes which is similar to the case of poset block metric (see [\ref{DASS}], Lemma 4.1).

\begin{lemma}\label{IPERFECT}
Let $(\mathbb{P},\pi)$ be a pomset block structure  on $\mathbb{Z}_m^n$. Let $\mathcal {C}\subseteq \mathbb{Z}_m^n$ be a linear $(n,m^k,d)$ $(\mathbb{P},\pi)$-code and $I$ be an ideal of $\mathbb{P}$ with full count. Then the following statements are equivalent:
\begin{enumerate}[(1)]
  \item $\mathcal {C}$ is an $I$-perfect code;
  \item $\sum\limits_{i\in I^{*}}k_i=n-k$ and $|\mathcal {C}\cap B_I|=1$;
  \item $|\mathcal {C}\cap B_I(u)|=1$ for all $u\in \mathbb{Z}_m^n$.
\end{enumerate}
\end{lemma}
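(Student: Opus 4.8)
The plan is to prove the cyclic chain of implications $(1)\Rightarrow(3)\Rightarrow(2)\Rightarrow(1)$, using Proposition~\ref{BI} throughout since $I$ has full count. The central structural fact is part (2) of Proposition~\ref{BI}: for an ideal $I$ with full count, $B_I$ is a submodule of $\mathbb{Z}_m^n$ of dimension $\sum_{i\in I^*}k_i$, so $|B_I|=m^{\sum_{i\in I^*}k_i}$, and every $I$-ball $B_I(u)$ is the coset $u+B_I$. Consequently the $I$-balls $\{B_I(u):u\in\mathbb{Z}_m^n\}$ are exactly the cosets of $B_I$ in $\mathbb{Z}_m^n$; there are $m^{n}/m^{\sum_{i\in I^*}k_i}=m^{n-\sum_{i\in I^*}k_i}$ of them, and they partition $\mathbb{Z}_m^n$. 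This partition picture is what makes the whole lemma essentially a counting argument.

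For $(1)\Rightarrow(3)$: assume $\mathcal{C}$ is $I$-perfect, so $\mathbb{Z}_m^n=\bigsqcup_{c\in\mathcal{C}}B_I(c)$. Fix $u\in\mathbb{Z}_m^n$. Since the $B_I(c)$ cover $\mathbb{Z}_m^n$, at least one $c\in\mathcal{C}$ has $u\in B_I(c)$, i.e.\ $B_I(u)=B_I(c)$ by Proposition~\ref{BI}(3); and if two codewords $c,c'$ both satisfied $u\in B_I(c)\cap B_I(c')$ the disjointness would force $c=c'$ (here one uses that $\mathcal{C}$ is linear, hence distinct codewords are genuinely distinct points whose balls are the disjoint pieces — actually disjointness of the balls in the partition directly gives $B_I(c)=B_I(c')$, and then $|\mathcal C\cap B_I(u)|=1$ provided distinct codewords give distinct balls, which follows because the balls partition the space so each contains the ``correct'' number of codewords). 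Thus $|\mathcal{C}\cap B_I(u)|=1$. For $(3)\Rightarrow(2)$: taking $u=\bar 0$ in (3) immediately gives $|\mathcal{C}\cap B_I|=|\mathcal{C}\cap B_I(\bar 0)|=1$. For the dimension count, note the $m^{n-\sum_{i\in I^*}k_i}$ cosets of $B_I$ partition $\mathbb{Z}_m^n$ and each meets $\mathcal{C}$ in exactly one point, so $|\mathcal{C}|=m^{n-\sum_{i\in I^*}k_i}$; since $|\mathcal{C}|=m^k$ and $m\ge 4$, comparing exponents yields $k=n-\sum_{i\in I^*}k_i$, i.e.\ $\sum_{i\in I^*}k_i=n-k$.

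For $(2)\Rightarrow(1)$: assume $\sum_{i\in I^*}k_i=n-k$ and $|\mathcal{C}\cap B_I|=1$. Because $\mathcal{C}$ is a submodule and $B_I$ is a submodule with $\mathcal{C}\cap B_I=\{\bar 0\}$, the sum $\mathcal{C}+B_I$ has cardinality $|\mathcal{C}|\cdot|B_I|=m^k\cdot m^{\sum_{i\in I^*}k_i}=m^{k+(n-k)}=m^n$, so $\mathcal{C}+B_I=\mathbb{Z}_m^n$. Hence every $u\in\mathbb{Z}_m^n$ lies in some coset $c+B_I=B_I(c)$ with $c\in\mathcal{C}$, giving the covering property. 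For disjointness: if $B_I(c)\cap B_I(c')\ne\emptyset$ with $c,c'\in\mathcal{C}$, then $B_I(c)=B_I(c')$, so $c-c'\in B_I$; but $c-c'\in\mathcal{C}$ as well (linearity), so $c-c'\in\mathcal{C}\cap B_I=\{\bar 0\}$, forcing $c=c'$. Therefore $\mathbb{Z}_m^n=\bigsqcup_{c\in\mathcal{C}}B_I(c)$ and $\mathcal{C}$ is $I$-perfect.

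The only real subtlety — and the step I would be most careful about — is justifying the cardinality arithmetic $|\mathcal{C}+B_I|=|\mathcal{C}|\cdot|B_I|$ when $\mathcal{C}\cap B_I=\{\bar 0\}$. Over a field this is the standard dimension formula, but over the ring $\mathbb{Z}_m$ one should argue directly: the map $\mathcal{C}\times B_I\to \mathcal{C}+B_I$, $(c,b)\mapsto c+b$, is surjective, and it is injective because $c+b=c'+b'$ gives $c-c'=b'-b\in\mathcal{C}\cap B_I=\{\bar 0\}$; hence $|\mathcal{C}+B_I|=|\mathcal{C}||B_I|$, and since this quantity is $m^n$ and $\mathcal{C}+B_I\subseteq\mathbb{Z}_m^n$, equality holds. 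Everything else is bookkeeping with Proposition~\ref{BI}, and the parallel with [\ref{DASS}, Lemma 4.1] means no genuinely new idea is needed.
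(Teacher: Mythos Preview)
Your proof is correct and is exactly the standard argument one expects here; the paper itself omits the proof entirely, remarking only that it ``can be easily obtained from the definition of $I$-perfect codes'' and citing the analogous poset-block result [\ref{DASS}, Lemma~4.1]. Your careful justification over $\mathbb{Z}_m$ of the cardinality identity $|\mathcal{C}+B_I|=|\mathcal{C}|\cdot|B_I|$ via the bijection $(c,b)\mapsto c+b$ is the right way to handle the $(2)\Rightarrow(1)$ step without appealing to a dimension formula; the only cosmetic issue is that your $(1)\Rightarrow(3)$ paragraph is a bit tangled in phrasing, though the underlying argument (pick the unique $c\in\mathcal{C}$ with $u\in B_I(c)$, note $B_I(u)=B_I(c)$, and use disjointness to rule out any other $c'\in\mathcal{C}\cap B_I(u)$) is sound.
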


Writing $u\in\mathbb{Z}_m^n$ as $(u_1,u_2)$ where $u_1\in\mathop{\oplus}\limits_{j\notin I^{*}}V_j$ and $u_2\in\mathop{\oplus}\limits_{i\in I^{*}}V_i$, we get the following.

\begin{proposition}
Let $(\mathbb{P},\pi)$ be a pomset block structure  on $\mathbb{Z}_m^n$ and $I$ be an ideal of $\mathbb{P}$ with full count. Then, an $(n,m^k,d)$ $(\mathbb{P},\pi)$-code $\mathcal {C}$ is $I$-perfect if and only if there exists a function
$$f:\mathop{\oplus}\limits_{j\notin I^{*}}V_j\rightarrow\mathop{\oplus}\limits_{i\in I^{*}}V_i$$
such that
$$\mathcal {C}=\left\{(v,f(v)): v\in\mathop{\oplus}\limits_{j\notin I^{*}}V_j\right\}.$$
\end{proposition}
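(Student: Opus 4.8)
The plan is to prove both implications by exploiting the characterization of $I$-perfect codes in Lemma~\ref{IPERFECT}, together with the coset structure of the $I$-ball $B_I$ given in Proposition~\ref{BI}. Write $W_1=\mathop{\oplus}\limits_{j\notin I^{*}}V_j$ and $W_2=\mathop{\oplus}\limits_{i\in I^{*}}V_i$, so that $\mathbb{Z}_m^n=W_1\oplus W_2$ and, by Proposition~\ref{BI}(1), $B_I=W_2$ (it is the submodule of those $u$ whose nonzero coordinates lie in $I^{*}$, which by full count is exactly $\{0\}\oplus W_2$). The key observation is that for any $v\in W_1$, the coset $v+B_I=(v,0)+(\{0\}\oplus W_2)$ is precisely the set of vectors whose first component equals $v$.

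First I would prove the ``if'' direction. Suppose $\mathcal{C}=\{(v,f(v)):v\in W_1\}$ for some function $f\colon W_1\to W_2$. Then $|\mathcal{C}|=|W_1|=m^{\sum_{j\notin I^{*}}k_j}$; since $\sum_{j\notin I^{*}}k_j=n-\sum_{i\in I^{*}}k_i$, consistency with $|\mathcal{C}|=m^k$ forces $\sum_{i\in I^{*}}k_i=n-k$, which is the first condition of Lemma~\ref{IPERFECT}(2). For the second condition I compute $\mathcal{C}\cap B_I$: a codeword $(v,f(v))$ lies in $B_I=\{0\}\oplus W_2$ iff $v=0$, and there is exactly one such codeword, namely $(0,f(0))$. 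Hence $|\mathcal{C}\cap B_I|=1$, and Lemma~\ref{IPERFECT} gives that $\mathcal{C}$ is $I$-perfect.

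For the ``only if'' direction, assume $\mathcal{C}$ is $I$-perfect. By Lemma~\ref{IPERFECT}(3), $|\mathcal{C}\cap B_I(u)|=1$ for every $u\in\mathbb{Z}_m^n$; equivalently, using Proposition~\ref{BI}(2)--(3) and the description above, for every $v\in W_1$ there is exactly one codeword whose first component is $v$. This is precisely the statement that the projection $\mathcal{C}\to W_1$, $(a,b)\mapsto a$, is a bijection; defining $f(v)$ to be the unique second component of the codeword with first component $v$ yields a well-defined function $f\colon W_1\to W_2$ with $\mathcal{C}=\{(v,f(v)):v\in W_1\}$.

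I do not expect a serious obstacle here; the statement is essentially a repackaging of Lemma~\ref{IPERFECT} once one identifies $B_I$ with $\{0\}\oplus W_2$. The one point that needs slight care is making the decomposition $\mathbb{Z}_m^n=W_1\oplus W_2$ and the identity $B_I=\{0\}\oplus W_2$ fully explicit (this is where the full-count hypothesis on $I$ is used, via Proposition~\ref{BI}(1)), and then checking that ``exactly one codeword per coset of $B_I$'' translates correctly into ``the first-coordinate projection is a bijection.'' Note that $\mathcal{C}$ and $f$ need not be linear, so I would phrase the argument purely set-theoretically and avoid any appeal to module homomorphisms.
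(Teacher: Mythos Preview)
Your approach is essentially the same as the paper's: identify $B_I$ with $\{0\}\oplus W_2$ via Proposition~\ref{BI}, and then translate ``$I$-perfect'' into ``exactly one codeword per coset of $B_I$,'' i.e.\ per first-component fiber.

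One small wrinkle: Lemma~\ref{IPERFECT} is stated only for \emph{linear} codes, so the implication $(2)\Rightarrow(1)$ you invoke in the ``if'' direction is not available for an arbitrary $(n,m^k,d)$ code (and you yourself note $\mathcal{C}$ need not be linear). The paper sidesteps this by verifying condition~(3) directly: for any $u=(v,w)$ one has $\mathcal{C}\cap B_I(u)=\{(v,f(v))\}$, and $(3)\Rightarrow(1)$ needs no linearity. Your own coset description already yields this, so the fix is immediate---just conclude via~(3) rather than~(2).
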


\begin{proof}
Let $\mathcal {C}$ be an $I$-perfect $(n,m^k,d)$-code. Then for any $v\in\mathop{\oplus}\limits_{j\notin I^{*}}V_j$ there exists $c\in \mathcal {C}$ such that $(v,\bar{0})\in B_I(c)$ which implies that $c-(v,\bar{0})=(\bar{0},u)$ and hence $c=(v,u)$. Suppose that there exists another element $c'=(v,w)\in \mathcal {C}$. Then $c-c'=(\bar{0},u-w)\in B_I$ which implies that $c\in B_I(c')$, a contradiction to the fact that $\mathcal {C}$ is $I$-perfect. Thus, the function $f:\mathop{\oplus}\limits_{j\notin I^{*}}V_j\rightarrow\mathop{\oplus}\limits_{i\in I^{*}}V_i$ which sends $v\in \mathop{\oplus}\limits_{j\notin I^{*}}V_j$ to the unique $u\in\mathop{\oplus}\limits_{i\in I^{*}}V_i$ such that $c=(v,u)$ is well-defined. Moreover, $\left|\mathop{\oplus}\limits_{j\notin I^{*}}V_j\right|=m^k=|\mathcal {C}|$. We have $\mathcal {C}=\left\{(v,f(v)): v\in\mathop{\oplus}\limits_{j\notin I^{*}}V_j\right\}$.

On the other hand, if there exists such a function, then $\mathcal {C}\cap B_I(v,u)=\{(v,f(v)\}$ for any $(u,v)\in\mathbb{Z}_m^n$. We obtain that $\mathcal {C}$ is $I$-perfect by Lemma \ref{IPERFECT}.
\end{proof}

\begin{theorem}\label{Idual}
Let $(\mathbb{P},\pi)$ be a pomset block structure on $\mathbb{Z}_m^n$ and $I$ be an ideal with full count in $\mathbb{P}$. A linear $(n,m^k,d)$ $(\mathbb{P},\pi)$-code $\mathcal {C}$ is $I$-perfect if and only if $\mathcal {C}^{\bot}$ is an $I^{c}$-perfect $(\widetilde{\mathbb{P}},\pi)$-code where $\widetilde{\mathbb{P}}$ is the dual pomset of $\mathbb{P}$.
\end{theorem}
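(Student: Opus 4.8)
The plan is to reformulate $I$-perfectness of a \emph{linear} code as a direct-sum decomposition of $\mathbb{Z}_m^n$ and then pass to orthogonal complements. First I would collect the bookkeeping about $I^{c}$: since $M$ is regular of height $\left\lfloor\frac{m}{2}\right\rfloor$ and $I$ has full count, one has $(I^{c})^{*}=[s]\setminus I^{*}$, the submset $I^{c}$ is an ideal of $\widetilde{\mathbb{P}}$ (by the complement-of-ideals fact $\mathcal{I}(\widetilde{\mathbb{P}})=\{J^{c}:J\in\mathcal{I}(\mathbb{P})\}$) which again has full count, and $\sum_{i\in(I^{c})^{*}}k_i=n-\sum_{i\in I^{*}}k_i$. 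Moreover $\mathcal{C}^{\perp}$ is a linear $(n,m^{n-k},d')$ $(\widetilde{\mathbb{P}},\pi)$-code for some $d'$, using the identity $|\mathcal{C}|\cdot|\mathcal{C}^{\perp}|=m^{n}$ for submodules of $\mathbb{Z}_m^{n}$. These observations guarantee that Proposition \ref{BI} and Lemma \ref{IPERFECT} apply equally well to the data $(\widetilde{\mathbb{P}},\pi)$, $I^{c}$, $\mathcal{C}^{\perp}$.

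Next, by Proposition \ref{BI}(2) the $I$-balls $B_I(c)=c+B_I$, $c\in\mathcal{C}$, are cosets of the submodule $B_I$; for a linear code they are pairwise disjoint with union $\mathbb{Z}_m^{n}$ precisely when $\mathcal{C}\cap B_I=\{\bar 0\}$ and $\mathcal{C}+B_I=\mathbb{Z}_m^{n}$, i.e.\ when $\mathcal{C}\oplus B_I=\mathbb{Z}_m^{n}$ (equivalently, this is Lemma \ref{IPERFECT}(2) once one notes $|B_I|=m^{\sum_{i\in I^{*}}k_i}$ from Proposition \ref{BI}(1)). Thus $\mathcal{C}$ is $I$-perfect if and only if $\mathbb{Z}_m^{n}=\mathcal{C}\oplus B_I$. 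Applying orthogonal complementation and using $(\mathcal{C}\oplus B_I)^{\perp}=\mathcal{C}^{\perp}\cap B_I^{\perp}$ together with $|A|\cdot|A^{\perp}|=m^{n}$ and $A^{\perp\perp}=A$ for submodules $A\subseteq\mathbb{Z}_m^n$, one gets $\mathcal{C}\oplus B_I=\mathbb{Z}_m^{n}$ iff $\mathcal{C}^{\perp}\oplus B_I^{\perp}=\mathbb{Z}_m^{n}$: indeed $\mathcal{C}^{\perp}\cap B_I^{\perp}=(\mathcal{C}\oplus B_I)^{\perp}=\{\bar 0\}$, while $|\mathcal{C}^{\perp}|\cdot|B_I^{\perp}|=m^{2n}/(|\mathcal{C}|\cdot|B_I|)=m^{n}$ forces $\mathcal{C}^{\perp}+B_I^{\perp}=\mathbb{Z}_m^{n}$, and the converse is symmetric. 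Finally Proposition \ref{BI}(4) identifies $B_I^{\perp}=B_{I^{c},\widetilde{\mathbb{P}}}$, so $\mathcal{C}^{\perp}\oplus B_I^{\perp}=\mathbb{Z}_m^{n}$ is exactly the assertion that $\mathcal{C}^{\perp}$ is an $I^{c}$-perfect $(\widetilde{\mathbb{P}},\pi)$-code. Chaining these equivalences proves the theorem.

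I do not anticipate a genuine obstacle: the proof is a short chain of equivalences built from Proposition \ref{BI} and the complement-of-ideals duality. The only points needing care are the "full count" bookkeeping ensuring Proposition \ref{BI} and Lemma \ref{IPERFECT} apply to $I^{c}$ in $\widetilde{\mathbb{P}}$, and—since the ground ring is $\mathbb{Z}_m$ rather than a field—spelling out the elementary module facts $|\mathcal{C}|\cdot|\mathcal{C}^{\perp}|=m^{n}$, $|A+B|\cdot|A\cap B|=|A|\cdot|B|$, $(A+B)^{\perp}=A^{\perp}\cap B^{\perp}$ and $A^{\perp\perp}=A$, all of which hold because the standard bilinear form identifies $\mathbb{Z}_m^{n}$ with its Pontryagin dual. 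If one prefers to avoid the direct-sum reformulation, the same argument runs through Lemma \ref{IPERFECT}(2), matching the condition $\sum_{i\in I^{*}}k_i=n-k$ with $\sum_{i\in(I^{c})^{*}}k_i=n-(n-k)$ and the condition $|\mathcal{C}\cap B_I|=1$ with $|\mathcal{C}^{\perp}\cap B_{I^{c},\widetilde{\mathbb{P}}}|=1$ via the same counting and duality steps.
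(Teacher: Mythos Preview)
Your proof is correct and rests on the same two ingredients the paper uses---Proposition~\ref{BI}(4) identifying $B_I^{\perp}=B_{I^{c},\widetilde{\mathbb{P}}}$, and Lemma~\ref{IPERFECT} as the criterion for $I$-perfectness---but you package the argument differently. The paper proves only the forward direction by element-chasing: given a nonzero $c\in\mathcal{C}^{\perp}\cap B_{I^{c},(\widetilde{\mathbb{P}},\pi)}$, it uses Proposition~\ref{BI}(4) to place $c\in B_{I}^{\perp}$, writes an arbitrary $v\in\mathbb{Z}_m^{n}$ as $c'+u$ with $c'\in\mathcal{C}$ and $u\in B_I$ (possible because $\mathcal{C}$ is $I$-perfect), and computes $c\cdot v=c\cdot c'+c\cdot u=0$, forcing $c=\bar 0$; Lemma~\ref{IPERFECT}(2) then finishes. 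Your direct-sum reformulation $\mathcal{C}\oplus B_I=\mathbb{Z}_m^{n}$ followed by orthogonal complementation is exactly this computation abstracted: the paper's line ``$c\cdot v=c\cdot c'+c\cdot u=0$'' is the pointwise verification of $(A+B)^{\perp}=A^{\perp}\cap B^{\perp}$. What your framing buys is that both directions fall out simultaneously from the equivalence chain, and that the module-theoretic facts over $\mathbb{Z}_m$ you need (in particular $|A|\cdot|A^{\perp}|=m^{n}$ and $A^{\perp\perp}=A$, which hold via the identification of $\mathbb{Z}_m^{n}$ with its Pontryagin dual) are isolated and stated explicitly rather than used tacitly.
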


\begin{proof}
Let $\mathcal {C}$ be an $I$-perfect $(\mathbb{P},\pi)$-code. By Lemma \ref{IPERFECT}, $I^c$ is an ideal in $\widetilde{\mathbb{P}}$ with full count and satisfies that $\sum\limits_{i\in {I^{c}}^{*}}k_i=k$. Consider the $I^{c}$-ball centered at $\bar{0}\in \mathcal {C}^{\bot}$. If there exists another element $\bar{0}\neq c\in \mathcal {C}^{\bot}$ in $B_{I^c,(\widetilde{\mathbb{P}}, \pi)}$, then $c\in B_{I,(\mathbb{P},\pi)}^{\bot}$ by Proposition \ref{BI}. Let $v\in\mathbb{Z}_m^n$. Since $\mathcal {C}$ is an $I$-perfect code, there exists a unique $c'\in \mathcal {C}$ such that $v\in B_{I,(\mathbb{P},\pi)}(c')$ which implies that $v=c'+u$ for some $u\in B_{I,(\mathbb{P},\pi)}$. So, $c\cdot v=c\cdot c'+c\cdot u=0$. Since $v$ is arbitrary, we have $c=\bar{0}$. Therefore, $\left|\mathcal {C}^{\bot}\cap B_{I^{c},(\widetilde{\mathbb{P}},\pi)}\right|=1$ and hence $\mathcal {C}^{\bot}$ is $I^{c}$-perfect by Lemma \ref{IPERFECT}.
\end{proof}

\begin{theorem}\label{ERROR}
Let $\mathcal {C}$ be an $r$-error correcting $(\mathbb{P},\pi)$-code where $r\in\mathbb{N}$ is a multiple of $\left\lfloor\frac{m}{2}\right\rfloor$. Then for any $c,c'\in \mathcal {C}$, $c\neq c'$ and $I, I'\in\mathcal {I}^{r}(\mathbb{P})$ with full count, one has $c-c'\notin B_{I\oplus I'}$.
\end{theorem}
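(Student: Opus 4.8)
\ I would argue by contradiction. Suppose that, for some $c\neq c'$ in $\mathcal{C}$ and some $I,I'\in\mathcal{I}^{r}(\mathbb{P})$ with full count, one has $c-c'\in B_{I\oplus I'}$, i.e. $supp_{(L,\pi)}(c-c')\subseteq I\oplus I'$. The goal is to exhibit a single vector $v\in\mathbb{Z}_m^{n}$ lying in both $r$-balls $B_r(c)$ and $B_r(c')$; this contradicts the hypothesis that $\mathcal{C}$ is $r$-error correcting, since then the $r$-balls centred at the distinct codewords $c,c'$ must be disjoint.

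Put $x=c-c'$. The first observation is that, because $I$ and $I'$ have full count, $C_I(i)$ and $C_{I'}(i)$ are each either $0$ or $\lfloor m/2\rfloor$, so $C_{I\oplus I'}(i)=\lfloor m/2\rfloor$ for $i\in I^{*}\cup(I')^{*}$ and $C_{I\oplus I'}(i)=0$ for $i\notin I^{*}\cup(I')^{*}$; in particular $x_i=\bar{0}$ whenever $i\notin I^{*}\cup(I')^{*}$. I would then split $x$ \emph{block by block}: define $y,z\in\mathbb{Z}_m^{n}$ by $y_i=x_i,\ z_i=\bar{0}$ for $i\in I^{*}$;\ $y_i=\bar{0},\ z_i=x_i$ for $i\in(I')^{*}\setminus I^{*}$;\ and $y_i=z_i=\bar{0}$ otherwise (legitimate by the previous sentence). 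Then $x=y+z$ in $\mathbb{Z}_m^{n}$, and $supp_{(L,\pi)}(y)$ is supported on $I^{*}$ while $supp_{(L,\pi)}(z)$ is supported on $(I')^{*}$. Since $I$ (resp. $I'$) has full count, every element of $[X]^{\lfloor m/2\rfloor}$ supported on $I^{*}$ (resp. $(I')^{*}$) is a submset of $I$ (resp. $I'$); hence $supp_{(L,\pi)}(y)\subseteq I$ and $supp_{(L,\pi)}(z)\subseteq I'$, and as $I,I'$ are ideals, $\langle supp_{(L,\pi)}(y)\rangle\subseteq I$ and $\langle supp_{(L,\pi)}(z)\rangle\subseteq I'$. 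Therefore $w_{(\mathbb{P},\pi)}(y)\leq|I|=r$ and $w_{(\mathbb{P},\pi)}(z)\leq|I'|=r$.

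To finish, set $v=c'+y$. Then $v-c'=y$, so $d_{(\mathbb{P},\pi)}(v,c')=w_{(\mathbb{P},\pi)}(y)\leq r$ and $v\in B_r(c')$; and $v-c=(c'+y)-c=y-(c-c')=y-x=-z$, so $d_{(\mathbb{P},\pi)}(v,c)=w_{(\mathbb{P},\pi)}(-z)=w_{(\mathbb{P},\pi)}(z)\leq r$ and $v\in B_r(c)$, where I use that $supp_{(L,\pi)}$, hence $w_{(\mathbb{P},\pi)}$, is invariant under negation because $w_L(-l)=w_L(l)$ in $\mathbb{Z}_m$. Thus $v\in B_r(c)\cap B_r(c')$ with $c\neq c'$, which is the desired contradiction, so $c-c'\notin B_{I\oplus I'}$.

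The only step that needs any care is the block-by-block splitting: the point is that the full-count hypothesis on $I$ and $I'$ lets one send an entire block to $y$ or to $z$ and still bound the $(\mathbb{P},\pi)$-weights by $|I|$ and $|I'|$, so one never has to decompose a coordinate in $\mathbb{Z}_m$. If $I,I'$ were only assumed to have partial count, the argument would instead hinge on the finer claim that every $l\in\mathbb{Z}_m$ with $w_L(l)\leq a+b$ can be written $l=l_1+l_2$ with $w_L(l_1)\leq a$ and $w_L(l_2)\leq b$, and indeed the conclusion itself can fail in that generality (compare Remark (c)). Finally, the hypothesis that $r$ is a multiple of $\lfloor m/2\rfloor$ is precisely what guarantees that full-count ideals of cardinality $r$ exist at all (such an ideal has cardinality $|I^{*}|\cdot\lfloor m/2\rfloor$), so it only serves to keep the statement non-vacuous.
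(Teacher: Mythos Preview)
Your argument is correct and is essentially the paper's own proof. The paper writes the splitting via the projection $p_{B_{I^{*}}}$: it sets $u=c-p_{B_{I^{*}}}(c-c')$, which in your notation is $u=c-y$, and then bounds $d_{(\mathbb{P},\pi)}(u,c)=w_{(\mathbb{P},\pi)}(y)\leq|I^{*}|\cdot\lfloor m/2\rfloor=r$ and $d_{(\mathbb{P},\pi)}(u,c')=w_{(\mathbb{P},\pi)}(z)\leq|I'^{*}|\cdot\lfloor m/2\rfloor=r$ in the same way you do; your choice $v=c'+y$ differs from the paper's $u=c-y$ only by which codeword the witness is translated from, and the two computations are symmetric.
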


\begin{proof}
Assume that there exist $c,c'\in \mathcal {C}$, $c\neq c'$ and $I,I'\in\mathcal {I}^{r}(\mathbb{P})$ with full count satisfies $c-c'\in B_{I\oplus I'}$. Denote by  $p_{B_{I^{*}}}$ the projection of $\mathbb{Z}_m^n$ on blocks corresponding to $I^{*}$ and take $u=c-p_{B_{I^{*}}}(c-c')\in\mathbb{Z}_m^n$. Then
$$d_{(\mathbb{P},\pi)}(u,c)=w_{(\mathbb{P},\pi)}(p_{B_{I^{*}}}(c-c'))\leq \left\lfloor\frac{m}{2}\right\rfloor\cdot|I^{*}|=|I|=r.$$
Analogously, we have
$$d_{(\mathbb{P},\pi)}(u,c')=w_{(\mathbb{P},\pi)}\left( (c-c')-p_{B_{I^{*}}}(c-c') \right)
\leq w_{(\mathbb{P},\pi)}\left(p_{B_{{I'}^{*}}}(c-c')\right)\leq |I'^{*}|\cdot\left\lfloor\frac{m}{2}\right\rfloor=|I'|=r.$$
This yields that $u\in B_r(c)\cap B_r(c')$, a contradiction to the fact that $\mathcal {C}$ is $r$-error correcting.
\end{proof}

\begin{theorem}\label{error}
Let $\mathcal {C}$ be a $(\mathbb{P},\pi)$-code. If for any $c,c'\in \mathcal {C}$, $c\neq c'$ and $I, I'\in\mathcal {I}^{r}(\mathbb{P})$, one has $c-c'\notin B_{I\oplus I'}$. Then $\mathcal {C}$ is $r$-error correcting.
\end{theorem}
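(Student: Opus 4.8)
The statement is essentially the converse of Theorem~\ref{ERROR}, so the plan is to prove the contrapositive: assume $\mathcal{C}$ is \emph{not} $r$-error correcting and produce two distinct codewords $c,c'$ together with ideals $I,I'\in\mathcal{I}^r(\mathbb{P})$ such that $c-c'\in B_{I\oplus I'}$. If $\mathcal{C}$ is not $r$-error correcting, then some pair of $r$-balls overlaps: there exist $c\neq c'$ in $\mathcal{C}$ and a vector $u\in B_r(c)\cap B_r(c')$. The idea is to read off the required ideals from the supports of $u-c$ and $u-c'$. Concretely, put $J=\langle supp_{(L,\pi)}(u-c)\rangle$ and $J'=\langle supp_{(L,\pi)}(u-c')\rangle$; then $|J|=w_{(\mathbb{P},\pi)}(u-c)\le r$ and $|J'|=w_{(\mathbb{P},\pi)}(u-c')\le r$. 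By Proposition~\ref{pomset1}(2) we may enlarge $J$ and $J'$ to ideals $I,I'\in\mathcal{I}^r(\mathbb{P})$ with $J\subseteq I$ and $J'\subseteq I'$.

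The next step is to verify $c-c'\in B_{I\oplus I'}$, i.e.\ $supp_{(L,\pi)}(c-c')\subseteq I\oplus I'$. Write $c-c'=(c-u)+(u-c')$ and fix a block index $t\in[s]$. Using the same six-case analysis on the Lee weights of coordinates as in the triangle-inequality proof of the first Theorem, one gets $w_{(L,\pi)}\big((c-c')_t\big)\le w_{(L,\pi)}\big((u-c)_t\big)+w_{(L,\pi)}\big((u-c')_t\big)$ (note $supp_{(L,\pi)}(u-c)=supp_{(L,\pi)}(c-u)$). Hence the multiplicity of $t$ in $supp_{(L,\pi)}(c-c')$ is at most the sum of its multiplicities in $supp_{(L,\pi)}(u-c)$ and $supp_{(L,\pi)}(u-c')$, which in turn is at most $C_{J}(t)+C_{J'}(t)\le C_{I}(t)+C_{I'}(t)$. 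Since every multiplicity is also bounded by $\lfloor m/2\rfloor$ and $I\oplus I'$ is the mset sum in $[X]^{\lfloor m/2\rfloor}$, namely $C_{I\oplus I'}(t)=\min\{\lfloor m/2\rfloor,\,C_I(t)+C_{I'}(t)\}$, we conclude $C_{supp_{(L,\pi)}(c-c')}(t)\le C_{I\oplus I'}(t)$ for all $t$, i.e.\ $c-c'\in B_{I\oplus I'}$. This contradicts the hypothesis, so $\mathcal{C}$ must be $r$-error correcting.

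The main point to be careful about is that the support of a difference does not behave additively on the nose — one really does need the subadditivity $w_{(L,\pi)}(x_t+y_t)\le w_{(L,\pi)}(x_t)+w_{(L,\pi)}(y_t)$ established earlier, together with the $[X]^{\lfloor m/2\rfloor}$-truncated definition of $\oplus$, to land inside $B_{I\oplus I'}$ rather than merely inside the ball for the ordinary (untruncated) mset sum. A secondary subtlety: here $I,I'$ need not have full count, so one cannot invoke Proposition~\ref{BI} or the clean coset description; the argument stays at the level of support containment, which is exactly what the definition of $B_{I\oplus I'}$ requires. Everything else — the enlargement of ideals via Proposition~\ref{pomset1}(2), and reducing to the contrapositive — is routine.
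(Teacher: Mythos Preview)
Your proof is correct and follows essentially the same route as the paper: argue by contrapositive, pick $u\in B_r(c)\cap B_r(c')$, enlarge the ideals $\langle supp_{(L,\pi)}(u-c)\rangle$ and $\langle supp_{(L,\pi)}(u-c')\rangle$ to $I,I'\in\mathcal{I}^r(\mathbb{P})$ via Proposition~\ref{pomset1}(2), and then verify $supp_{(L,\pi)}(c-c')\subseteq I\oplus I'$. The paper compresses your careful block-by-block subadditivity argument into the single line $supp_{(L,\pi)}(c-c')\subseteq supp_{(L,\pi)}(c-u)\oplus supp_{(L,\pi)}(u-c')$, but your explicit unpacking of that inclusion (and your remark about the $\lfloor m/2\rfloor$-truncation in the mset sum) is exactly what justifies it.
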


\begin{proof}
Assume that $\mathcal {C}$ is not an $r$-error correcting code. Then there exist $c,c'\in \mathcal {C}$, $c\neq c'$ and $u\in\mathbb{Z}_m^n$ such that $u\in B_r(c)\cap B_r(c')$. As $d_{(\mathbb{P},\pi)}(u,c)\leq r$ and $d_{(\mathbb{P},\pi)}(u,c')\leq r$, we have $|\langle supp_{(L,\pi)}(u-c)\rangle|\leq r$ and $|\langle supp_{(L,\pi)}(u-c')\rangle|\leq r$. Therefore, there exist $I,I'\in I^r(\mathbb{P})$ such that $\langle supp_{(L,\pi)}(u-c)\rangle\subseteq I$ and $\langle supp_{(L,\pi)}(u-c')\rangle\subseteq I'$ by Proposition \ref{POMSET1}. Hence $$supp_{(L,\pi)}(c-c')=supp_{(L,\pi)}(c-u+u-c')\subseteq supp_{(L,\pi)}(c-u)\oplus supp_{(L,\pi)}(u-c')\subseteq I\oplus I'.$$
This implies that $c-c'\in B_{I\oplus I'}$, a contradiction.
\end{proof}

The following can be easily obtained from the definitions of $r$-ball and $I$-ball.

\begin{proposition}\label{RPERFECT}
Let $\mathbb{P}$ be a pomset on $M=\left\{\left\lfloor\frac{m}{2}\right\rfloor/1,\left\lfloor\frac{m}{2}\right\rfloor/2, \ldots,\left\lfloor\frac{m}{2}\right\rfloor/s\right\}$ and $(\mathbb{P},\pi)$ a pomset block structure on $\mathbb{Z}_m^n$ such that $\mathcal {I}^r(\mathbb{P})=\{I\}$. Let $\mathcal {C}$ be a $(\mathbb{P},\pi)$-code. Then $\mathcal {C}$ is an $r$-perfect $(\mathbb{P},\pi)$-code if and only if $\mathcal {C}$ is an $I$-perfect $(\mathbb{P},\pi)$-code.
\end{proposition}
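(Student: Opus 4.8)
The plan is to prove the equivalence directly from the two definitions, using the hypothesis $\mathcal{I}^r(\mathbb{P})=\{I\}$ to collapse the $r$-ball and the $I$-ball into the same set. First I would recall from Remark (a) that for any $u\in\mathbb{Z}_m^n$ one has $B_r(u)=\bigcup_{J\in\mathcal{I}^r(\mathbb{P})}B_J(u)$, so that under the hypothesis this union has a single term and $B_r(u)=B_I(u)$ for every $u$. Since the $r$-perfect and $I$-perfect conditions are each phrased as: the relevant balls centered at codewords are pairwise disjoint and their union is all of $\mathbb{Z}_m^n$, the pointwise identity $B_r(u)=B_I(u)$ immediately transfers one condition to the other in both directions.

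The only subtlety is that the identity $B_r(u)=B_I(u)$ must genuinely hold as an equality of subsets of $\mathbb{Z}_m^n$, not merely up to the ideal generated by the support. I would verify this by unwinding the definitions: $v\in B_r(u)$ means $w_{(\mathbb{P},\pi)}(u-v)=|\langle supp_{(L,\pi)}(u-v)\rangle|\le r$; by Proposition \ref{pomset1}(2) the ideal $\langle supp_{(L,\pi)}(u-v)\rangle$, which has cardinality at most $r$, is contained in some ideal of cardinality exactly $r$, and that ideal must be $I$ since $\mathcal{I}^r(\mathbb{P})=\{I\}$. Hence $\langle supp_{(L,\pi)}(u-v)\rangle\subseteq I$, i.e. $v\in B_I(u)$. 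Conversely if $v\in B_I(u)$ then $\langle supp_{(L,\pi)}(u-v)\rangle\subseteq I$ forces $w_{(\mathbb{P},\pi)}(u-v)\le|I|=r$, so $v\in B_r(u)$. This gives $B_r(u)=B_I(u)$ for all $u$.

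With this established, the proof finishes mechanically: if $\mathcal{C}$ is $r$-perfect, the balls $\{B_r(u):u\in\mathcal{C}\}=\{B_I(u):u\in\mathcal{C}\}$ are pairwise disjoint with union $\mathbb{Z}_m^n$, which is exactly the statement that $\mathcal{C}$ is $I$-perfect; and symmetrically for the converse. I do not expect any serious obstacle here — the content is entirely in the bookkeeping identity $B_r(u)=B_I(u)$, which rests on Proposition \ref{pomset1}(2) and the uniqueness hypothesis on $\mathcal{I}^r(\mathbb{P})$. The one thing to be careful about is not to confuse $B_I(u)$ for a general submset $I$ (defined via $supp_{(L,\pi)}(u-v)\subseteq I$) with $B_I(u)$ for an ideal $I$ (defined via $\langle supp_{(L,\pi)}(u-v)\rangle\subseteq I$); since $I$ here is assumed to be an ideal, the two coincide and no ambiguity arises.
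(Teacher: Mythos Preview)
Your proposal is correct and follows exactly the approach the paper intends: the paper does not give a proof at all, merely remarking that the result ``can be easily obtained from the definitions of $r$-ball and $I$-ball,'' which is precisely the identity $B_r(u)=B_I(u)$ you establish via Remark~(a) and Proposition~\ref{pomset1}(2). Your write-up is in fact more detailed than what the paper provides.
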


In the remainder of this section,  we consider an ideal of $\mathbb{P}$ with partial count. Before this, we give some basic facts on the partition of $\mathbb{Z}_m^t$ when $t$ is a positive integer.

Let $\epsilon\in\left[\lfloor\frac{m}{2}\rfloor-1\right]$. Let $E=[-\epsilon,\epsilon]$ and $E'=\{\epsilon+1,\epsilon+2,\ldots,m-\epsilon-1\}$. Set $$S=\left\{v=(v_1,\ldots,v_t)\in\mathbb{Z}_m^t: v_i\in E\ \text{for}\ 1\leq i\leq t\right\}.$$
Let $u=(u_1,\ldots,u_t)\in S\setminus \{\bar{0}\}$. Take $w=(w_1,\ldots,w_t)\in S$ satisfying
\begin{equation*}
  w_i=\left\{
                     \begin{array}{ll}
                     0,  & \text{if}\ u_i=0,\\[2mm]
                      \epsilon-(u_i-1), & \text{if}\ 1\leq u_i\leq \epsilon,\\[2mm]
                      -\epsilon-(u_i+1), & \text{if}\ -\epsilon\leq u_i\leq -1.
                     \end{array}
                   \right.
 \end{equation*}
Then there exists at least one $j\in[t]$ such that $u_j+w_j\in E'$. Thus $u+w\notin S$. On the other hand, we have $u+w\in u+S$ and $u\in u+S$. Hence we have the following result.

\begin{proposition}
Let $u\in S\setminus \{\bar{0}\}$. Then $S\cap (u+S)\neq \emptyset$ and $S\neq u+S$. \end{proposition}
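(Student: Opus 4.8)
The plan is to make the informal argument immediately preceding the statement fully explicit, since essentially all the work has already been set up. We are given $\epsilon\in\left[\lfloor\frac{m}{2}\rfloor-1\right]$, the interval $E=[-\epsilon,\epsilon]$, its complement-in-$\mathbb{Z}_m$ piece $E'=\{\epsilon+1,\ldots,m-\epsilon-1\}$, and the cube $S=\{v\in\mathbb{Z}_m^t: v_i\in E\ \text{for all}\ i\}$. Fix $u=(u_1,\ldots,u_t)\in S\setminus\{\bar 0\}$, and define $w=(w_1,\ldots,w_t)\in S$ by the displayed case formula. First I would verify that $w$ is genuinely an element of $S$: in each coordinate, if $1\le u_i\le\epsilon$ then $w_i=\epsilon-(u_i-1)$ ranges over $\{1,\ldots,\epsilon\}\subseteq E$, and if $-\epsilon\le u_i\le-1$ then $w_i=-\epsilon-(u_i+1)$ ranges over $\{-\epsilon,\ldots,-1\}\subseteq E$, while $u_i=0$ forces $w_i=0$. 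So $w\in S$.

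Next I would check the key claim that $u+w\notin S$. Since $u\neq\bar 0$, pick an index $j$ with $u_j\neq 0$. If $1\le u_j\le\epsilon$, then (computing in $\mathbb{Z}$ first and noting the values stay in range so no reduction mod $m$ occurs) $u_j+w_j=u_j+\epsilon-(u_j-1)=\epsilon+1\in E'$; if $-\epsilon\le u_j\le-1$, then $u_j+w_j=u_j-\epsilon-(u_j+1)=-\epsilon-1$, which modulo $m$ equals $m-\epsilon-1\in E'$. In either case $u_j+w_j\in E'$, so the $j$-th coordinate of $u+w$ lies outside $E$, hence $u+w\notin S$. This simultaneously gives two things: on the one hand $u+w\in u+S$ (because $w\in S$) while $u+w\notin S$, so $S\neq u+S$; on the other hand $u=u+\bar 0\in u+S$ and trivially $u\in S$ (since $u\in S\setminus\{\bar 0\}$), so $u\in S\cap(u+S)$, giving $S\cap(u+S)\neq\emptyset$.

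I do not anticipate a genuine obstacle here; the only point requiring a little care is the passage between arithmetic in $\mathbb{Z}$ and in $\mathbb{Z}_m$ when exhibiting $u_j+w_j\in E'$ — one should note that in the first subcase the integer value $\epsilon+1$ already lies in $\{0,\ldots,m-1\}$ (using $\epsilon\le\lfloor m/2\rfloor-1$), and in the second subcase the integer value $-\epsilon-1$ reduces to $m-\epsilon-1$, and that both of these reduced values genuinely belong to $E'=\{\epsilon+1,\ldots,m-\epsilon-1\}$, which is nonempty precisely because $\epsilon\le\lfloor m/2\rfloor-1$. Once that bookkeeping is recorded, the proof is complete.
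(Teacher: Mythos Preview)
Your proposal is correct and follows essentially the same approach as the paper: the paper's argument is the paragraph immediately preceding the proposition (construct the specific $w$, observe some coordinate $u_j+w_j\in E'$ so $u+w\in(u+S)\setminus S$, and note $u\in S\cap(u+S)$), and you have simply made each of those steps explicit with the extra care about reduction modulo $m$.
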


Suppose that $m$ is divisible by $2\epsilon+1$. Let
$$T=\left\{v=(v_1,\ldots,v_t)\in\mathbb{Z}_m^t: v_i=t_i(2\varepsilon+1),\ 0\leq t_i\leq \frac{m}{2\epsilon+1}-1\ \text{for}\ 1\leq i\leq t\right\}.$$
It is known that $i(2\epsilon+1)+\beta\notin E$ for all $\beta\in E$ whenever $1\leq i\leq \frac{m}{2\varepsilon+1}-1$. It is also known that $i(2\epsilon+1)+E$, $j(2\epsilon+1)+E$ are disjoint for $0\leq i\neq j\leq \frac{m}{2\varepsilon+1}-1$ (see [\ref{POMSET2}]).

\begin{lemma}
Suppose that $m$ is divisible by $2\epsilon+1$. Then we have the followings:
\begin{enumerate}[(1)]
  \item For any $v\in S$ and $\bar{0}\neq u\in T$, we have $v+u\notin S$.
  \item Let $u\neq w\in T$, we have $u+S$ and $w+S$ are disjoint. Furthermore, $\bigsqcup\limits_{u\in T}(u+S)=\mathbb{Z}_m^t$.
\end{enumerate}
\end{lemma}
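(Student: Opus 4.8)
The plan is to reduce everything to the one-dimensional facts about the sets $E=[-\epsilon,\epsilon]$ and the cosets $i(2\epsilon+1)+E$ that are quoted from [\ref{POMSET2}] just before the statement, and to run the argument coordinate by coordinate since both $S$ and $T$ are product sets.

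For part (1): Fix $v=(v_1,\dots,v_t)\in S$ and $\bar 0\neq u=(u_1,\dots,u_t)\in T$. Since $u\neq\bar 0$, there is an index $j$ with $u_j=t_j(2\epsilon+1)$ for some $1\leq t_j\leq \frac{m}{2\epsilon+1}-1$. The quoted fact says $i(2\epsilon+1)+\beta\notin E$ for every $\beta\in E$ whenever $1\leq i\leq\frac{m}{2\epsilon+1}-1$; applying this with $i=t_j$ and $\beta=v_j\in E$ gives $v_j+u_j\notin E$. Hence the $j$-th coordinate of $v+u$ is not in $E$, so $v+u\notin S$ by the definition of $S$. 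For part (1) there is essentially no obstacle — it is a direct coordinatewise application of the cited lemma.

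For part (2), disjointness: suppose $u\neq w$ in $T$ and that $(u+S)\cap(w+S)\neq\emptyset$, say $u+a=w+b$ with $a,b\in S$. Pick a coordinate $j$ where $u_j\neq w_j$; write $u_j=i(2\epsilon+1)$, $w_j=i'(2\epsilon+1)$ with $0\leq i\neq i'\leq\frac{m}{2\epsilon+1}-1$. Then $u_j+a_j=w_j+b_j$ forces $i(2\epsilon+1)+a_j\in \big(i(2\epsilon+1)+E\big)\cap\big(i'(2\epsilon+1)+E\big)$, contradicting the quoted fact that the translates $i(2\epsilon+1)+E$ are pairwise disjoint for distinct $i,i'$ in that range. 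So the cosets $u+S$, $u\in T$, are pairwise disjoint.

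For the partition claim $\bigsqcup_{u\in T}(u+S)=\mathbb{Z}_m^t$: disjointness is done, so it remains to check that the union is everything, and for this I would simply count. We have $|S|=(2\epsilon+1)^t$, $|T|=\big(\frac{m}{2\epsilon+1}\big)^t$, and $|u+S|=|S|$ for each $u$, so $\sum_{u\in T}|u+S|=\big(\frac{m}{2\epsilon+1}\big)^t(2\epsilon+1)^t=m^t=|\mathbb{Z}_m^t|$; combined with pairwise disjointness this forces the union to be all of $\mathbb{Z}_m^t$. (Alternatively, one can argue directly: in each coordinate $\mathbb{Z}_m=\bigsqcup_{i=0}^{m/(2\epsilon+1)-1}\big(i(2\epsilon+1)+E\big)$ is a partition because the $m/(2\epsilon+1)$ translates of the $(2\epsilon+1)$-element set $E$ are disjoint and their total size is $m$; then take the $t$-fold product.) The only point needing a line of care is that $E$, as a subset of $\mathbb{Z}_m$, genuinely has $2\epsilon+1$ elements, i.e. that $-\epsilon,\dots,\epsilon$ are distinct mod $m$; this holds because $\epsilon\leq\lfloor m/2\rfloor-1$, so $2\epsilon+1\leq 2\lfloor m/2\rfloor-1<m$. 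I do not anticipate a genuine obstacle here; the whole proof is a packaging of the one-dimensional statements already granted, and the main thing to get right is the bookkeeping that reduces the $t$-dimensional claims to coordinatewise ones.
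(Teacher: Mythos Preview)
Your proof is correct and follows essentially the same approach as the paper: both parts are handled by picking a coordinate where the two $T$-elements differ and invoking the quoted one-dimensional facts about the translates $i(2\epsilon+1)+E$. If anything, your write-up is more complete than the paper's, which simply ends with ``The result then follows'' after proving disjointness, whereas you spell out the counting argument (and the check that $|E|=2\epsilon+1$) needed to conclude that the disjoint union is all of $\mathbb{Z}_m^t$.
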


\begin{proof}
\begin{enumerate}[(1)]
\item Let $v=(v_1,\ldots,v_t)\in S$ and $\bar{0}\neq u=(u_1,\ldots,u_t)\in T$. Then there exists $i\in [t]$ such that $u_i=t_i(2\epsilon+1)\neq 0$. Note that $1\leq t_i\leq \frac{m}{2\varepsilon+1}-1$. It follows from $v_i+u_i=v_i+t_i(2\epsilon+1)\notin E$ that $u+v\notin S$.
\item Let $u=(u_1,\ldots,u_t),\ w=(w_1,\ldots,w_t)\in T$. Suppose that $(u+S)\cap (w+S)\neq\emptyset$. Then there exist $v=(v_1,\ldots,v_t)\in S$ and $v'=(v'_1,\ldots,v'_t)\in S$ such that $u+v=w+v'$. Since $u\neq w$, there exists $i\in [t]$ such that $u_i\neq w_i$. Thus $u_i+v_i=w_i+v'_i$, which contradicts to the fact that $u_i+E$ and $w_i+E$ are disjoint. The result then follows.
\end{enumerate}
\end{proof}

\begin{corollary}
If $m$ is not divisible by $2\epsilon+1$, then the translates of $S$ can not form a partition of $\mathbb{Z}_m^t$.
\end{corollary}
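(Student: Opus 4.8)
The plan is to argue by contrapositive: assuming the translates of $S$ do form a partition of $\mathbb{Z}_m^t$, I will show that $2\epsilon+1$ must divide $m$. The key observation is that $S=E^t$ with $|E|=2\epsilon+1$, so $|S|=(2\epsilon+1)^t$; if the translates $u+S$ (as $u$ ranges over a suitable transversal) partition $\mathbb{Z}_m^t$, then counting cardinalities forces $(2\epsilon+1)^t \mid m^t$, which already gives $2\epsilon+1 \mid m$ — but I should be careful, since $(2\epsilon+1)^t \mid m^t$ does indeed imply $2\epsilon+1 \mid m$ only when we know something like $\gcd$ behaviour; in fact it does hold in general because every prime dividing $2\epsilon+1$ must divide $m$ with sufficient multiplicity. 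Let me instead give the cleaner one-dimensional reduction, which avoids any number-theoretic subtlety.

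First I would reduce to $t=1$. If $\{u+S : u\in A\}$ partitions $\mathbb{Z}_m^t$ for some set $A$ of translates, then projecting onto the first coordinate: fix any $v_2,\ldots,v_t\in E$ and restrict attention to the ``slice''; more directly, consider the set $\widehat{S}=E\subseteq\mathbb{Z}_m$ and note that $S=E\times E\times\cdots\times E$. A partition of $\mathbb{Z}_m^t$ into translates of $E^t$ would, upon fixing the last $t-1$ coordinates to lie in a single coset pattern, induce a partition of $\mathbb{Z}_m$ into translates of $E$; more carefully, one shows that if $E^t$ tiles $\mathbb{Z}_m^t$ by translations then $E$ tiles $\mathbb{Z}_m$ by translations (this is standard: a product set tiles a product group iff each factor tiles each factor group, or at minimum one can extract a one-dimensional tiling by a fibering argument). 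So it suffices to treat $\mathbb{Z}_m$: if the translates of $E=[-\epsilon,\epsilon]=\{-\epsilon,\ldots,\epsilon\}$ partition $\mathbb{Z}_m$, then $2\epsilon+1\mid m$.

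Now for the one-dimensional claim, I would use the consecutiveness of $E$. Since $E$ is a block of $2\epsilon+1$ consecutive residues, a translate $a+E$ is again a block of $2\epsilon+1$ consecutive residues (cyclically). Suppose these translates partition $\mathbb{Z}_m$. Pick the translate containing $0$; say it is $a_0+E$. The residue $a_0+\epsilon$ (the ``right endpoint'' of that block) is covered, and the next residue $a_0+\epsilon+1$ must be the left endpoint of another translate, namely $(a_0+2\epsilon+1)+E$ where I re-center; iterating, the translates are forced to be $a_0+E,\ (a_0+(2\epsilon+1))+E,\ (a_0+2(2\epsilon+1))+E,\ldots$, i.e. they tile $\mathbb{Z}_m$ periodically with period $2\epsilon+1$. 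Wrapping around the cycle after $m/(2\epsilon+1)$ steps forces $2\epsilon+1\mid m$, since otherwise the block structure collides with itself and the cover fails to be a partition. This gives the contrapositive, completing the proof.

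The main obstacle I anticipate is the reduction from $\mathbb{Z}_m^t$ to $\mathbb{Z}_m$: one must justify carefully that a tiling of the product by translates of the product set yields a tiling of each factor. A clean way around this, which I would actually prefer to write, is to bypass the reduction entirely: argue directly in $\mathbb{Z}_m^t$ that the translate of $S=E^t$ containing a given point $x$ is determined once we know which ``$E$-coset pattern'' $x$ falls in along each coordinate, and that the consecutive-block structure along the first coordinate alone forces $2\epsilon+1\mid m$ by the same wrap-around argument as above applied to the fiber $\{(z,0,\ldots,0):z\in\mathbb{Z}_m\}$ — noting that within this fiber two translates $u+S$ and $u'+S$ meet it in (possibly empty) translates of $E$ in the first coordinate, and the nonempty ones must partition $\mathbb{Z}_m$. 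This keeps everything elementary and avoids invoking any external tiling theorem.
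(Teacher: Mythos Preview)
Your proposal is correct. The paper states this result as a corollary with no proof at all; the intended justification is precisely the cardinality observation you make in your first paragraph: a partition of $\mathbb{Z}_m^t$ by translates of $S$ forces $|S|=(2\epsilon+1)^t$ to divide $m^t$, and the elementary prime-valuation argument you sketch shows this implies $2\epsilon+1\mid m$. That is already a complete proof, and it matches the paper's (implicit) reasoning.

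Your subsequent one-dimensional reduction is therefore superfluous, though not wrong. A word of caution: the general principle you invoke, that ``a product set tiles a product group iff each factor tiles each factor group,'' is not a safe black box in general and would need its own justification. Your closing fiber argument, however, does work cleanly here: intersecting each translate $u+S$ with the line $\{(z,0,\ldots,0):z\in\mathbb{Z}_m\}$ yields either the empty set or a set of the form $(u_1+E)\times\{0\}^{t-1}$, and since the $u+S$ partition $\mathbb{Z}_m^t$ these intersections partition the line, giving a tiling of $\mathbb{Z}_m$ by translates of $E$. That route is valid, just longer than what the situation demands.
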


We now let $I\in\mathcal {I}(\mathbb{P})$ be an ideal with partial count. Suppose that $I_p=\{l_1,l_2,\ldots,l_{\lambda}\}\subseteq I^{*}$ is the collection of the elements in $I^{*}$ which has partial count in $I$ and suppose that $I_f=\{r_1,r_2,\ldots,r_{\delta}\}$ is the collection of the elements in $I^{*}$ which has full count in $I$. Then $\lambda+\delta=\left|I^{*}\right|$. Since the pomset block metric is translation invariant, that is, for all $u,v,w\in\mathbb{Z}_m^n$, $d_{(\mathbb{P},\pi)}(u,v)=d_{(\mathbb{P},\pi)}(u+w,v+w)$, we have that $B_I(u)=u+B_I$. Note that
\begin{eqnarray}\label{partial}
\nonumber B_I&=&\left\{v=(v_1,v_2,\ldots,v_s)\in\mathbb{Z}_m^n: v_t=\bar{0}\ \text{if}\ t\notin I^{*};\right.\\
&& -a \leq v_{t_j}\leq a, a=C_{I}(t), 1\leq j\leq k_{t}\ \text{if}\ t\in I_p;\\
\nonumber &&\left. v_{t_j}\in \mathbb{Z}_m, 1\leq j\leq k_t,\ \text{if}\ t\in I_f\right\}
\end{eqnarray}
and hence
$$\left| B_I\right|=\left(1+2C_I(l_1)\right)^{k_{l_1}}\left(1+2C_I(l_2)\right)^{k_{l_2}}\cdots \left(1+2C_I(l_{\lambda})\right)^{k_{l_{\lambda}}}m^{\sum\limits_{j\in I_f}k_j}.$$

\begin{remark}
It follows from (\ref{partial}) that $B_I$ is not a subgroup of $\mathbb{Z}_m^n$.
\end{remark}

With the notations given above, we have the following result which can be easily obtained by counting argument.
\begin{theorem}
If $m$ is divisible by $2C_I(l_i)+1$ for all $l_i\in I_p$ then the $I$-balls centered at the elements in
\begin{eqnarray*}
D&=&\left\{v=(v_1,v_2,\ldots,v_s)\in\mathbb{Z}_m^n: v_t=\bar{0}\ \text{if}\ t\in I_f;\right.\\[2mm]
&&v_{t_i}=j(2C_I(t)+1), 0\leq j\leq \frac{m}{2C_I(t)+1}-1, 1\leq i\leq k_t\ \text{if}\ t\in I_p;\\[2mm]
&&\left.v_{t_i}\in \mathbb{Z}_m, 1\leq i\leq k_t,\ \text{if}\ t\in [s]\setminus I^{*}\right\}
\end{eqnarray*}
partition the space $\mathbb{Z}_m^n$. Moreover, we have $|D|=\mathop{\Pi}\limits_{t\in I_p}\left(\frac{m}{2C_I(t)+1}\right)^{k_t}m^{\sum\limits_{i\in[s]\setminus I^{*}}k_i}$.
\end{theorem}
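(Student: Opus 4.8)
The plan is to verify directly that the $I$-balls centered at the points of $D$ are pairwise disjoint and that their union is all of $\mathbb{Z}_m^n$, then to count $|D|$. Since the pomset block metric is translation invariant we have $B_I(u)=u+B_I$, so it suffices to work with the explicit description of $B_I$ given in (\ref{partial}). Observe that $B_I$ and $D$ involve disjoint ``coordinate regimes'': on blocks $t\in I_f$ the ball $B_I$ is all of $V_t$ while $D$ is forced to $\bar 0$; on blocks $t\notin I^{*}$ both $B_I$ and $D$ are forced to $\bar 0$ and all of $V_t$ respectively, wait — more precisely, on $t\notin I^{*}$, $B_I$ is $\{\bar 0\}$ and $D$ ranges over all of $V_t$; and on blocks $t\in I_p$ the ball $B_I$ is the box $[-C_I(t),C_I(t)]^{k_t}$ inside $V_t$ while $D$ ranges over the multiples $j(2C_I(t)+1)$ with $0\le j\le \frac{m}{2C_I(t)+1}-1$ in each coordinate. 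So the problem factors as a product over the blocks $t\in[s]$, and it reduces to the one-block (indeed one-coordinate) statement: for $\epsilon=C_I(t)$, if $2\epsilon+1\mid m$ then the translates $\{u+E : u\in T\}$, with $E=[-\epsilon,\epsilon]$ and $T=\{j(2\epsilon+1): 0\le j\le \frac{m}{2\epsilon+1}-1\}$, partition $\mathbb{Z}_m$. This is exactly the content of the Lemma proved just above (the case $t=1$), which I may invoke.

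Concretely, first I would fix the block decomposition and write any $v=(v_1,\dots,v_s)\in\mathbb{Z}_m^n$ and ask for the $u\in D$ with $v\in u+B_I$. For $t\in I_f$ this forces $u_t=\bar 0$ and imposes no constraint on $v_t$; for $t\notin I^{*}$ this forces $v_t-u_t=\bar 0$, i.e. $u_t=v_t$, which is always available in $D$ and unique; and for $t\in I_p$, writing $\epsilon_t=C_I(t)$, we need $v_{t_i}-u_{t_i}\in[-\epsilon_t,\epsilon_t]$ for each $1\le i\le k_t$ with $u_{t_i}\in T_t$. By the Lemma applied coordinatewise (using $2\epsilon_t+1\mid m$), for each coordinate there is exactly one such $u_{t_i}$, and the translates being disjoint gives that the choice is unique. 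Assembling these block- and coordinate-wise choices produces a unique $u\in D$ with $v\in u+B_I=B_I(u)$, which is simultaneously the disjointness statement ($B_I(u)\cap B_I(u')=\emptyset$ for $u\neq u'$ in $D$) and the covering statement ($\bigsqcup_{u\in D}B_I(u)=\mathbb{Z}_m^n$).

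Finally, the cardinality: $|D|$ is the product over $t\in[s]$ of the number of choices in each block, namely $\left(\frac{m}{2C_I(t)+1}\right)^{k_t}$ for $t\in I_p$, $1$ for $t\in I_f$, and $m^{k_t}$ for $t\in[s]\setminus I^{*}$, which multiplies out to $|D|=\prod_{t\in I_p}\left(\frac{m}{2C_I(t)+1}\right)^{k_t}\, m^{\sum_{i\in[s]\setminus I^{*}}k_i}$. As a sanity check one can verify $|D|\cdot|B_I|=m^n$ using the formula for $|B_I|$ displayed above: the partial-count factors $(1+2C_I(l))^{k_l}$ cancel against $\left(\frac{m}{2C_I(l)+1}\right)^{k_l}$ to give $m^{k_l}$, the full-count factor $m^{\sum_{j\in I_f}k_j}$ stays, and together with $m^{\sum_{i\notin I^{*}}k_i}$ this is $m^{\sum_{t}k_t}=m^n$.

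The only real obstacle is the coordinatewise partition fact for $\mathbb{Z}_m$, i.e. that $2\epsilon+1\mid m$ implies $\mathbb{Z}_m=\bigsqcup_{j}\big(j(2\epsilon+1)+[-\epsilon,\epsilon]\big)$, together with its failure when $2\epsilon+1\nmid m$; but this is precisely the Lemma and Corollary established immediately before the statement, so here it is simply a matter of bookkeeping the block structure and confirming that the three coordinate regimes ($t\in I_p$, $t\in I_f$, $t\notin I^{*}$) combine multiplicatively. Hence the proof is the counting argument sketched above.
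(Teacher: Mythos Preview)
Your proposal is correct and is precisely the counting argument the paper alludes to (the paper omits the proof entirely, stating only that the result ``can be easily obtained by counting argument''). In particular, your reduction to the coordinatewise partition of $\mathbb{Z}_m$ by the translates $j(2\epsilon+1)+[-\epsilon,\epsilon]$ is exactly the content of the preceding Lemma, and the block-by-block bookkeeping together with the verification $|D|\cdot|B_I|=m^n$ is the intended argument.
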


\begin{corollary}
If $m$ is not divisible by $2C_I(j)+1$ for some $j\in I_p$, then no collection of $I$-balls will partition $\mathbb{Z}_m^n$.
\end{corollary}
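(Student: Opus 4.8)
The plan is to argue by contraposition: assume that $m$ is not divisible by $2C_I(j)+1$ for some fixed $j\in I_p$, and show that no translate family of $I$-balls can tile $\mathbb{Z}_m^n$. The key structural observation is the product decomposition already visible in equation (\ref{partial}): since $B_I(u)=u+B_I$ and $B_I$ is a Cartesian product over the blocks $t\in[s]$ — with the block at $t\in[s]\setminus I^{*}$ contributing all of $\mathbb{Z}_m^{k_t}$, the block at $t\in I_f$ contributing all of $\mathbb{Z}_m^{k_t}$, and the block at $t\in I_p$ contributing the box $[-C_I(t),C_I(t)]^{k_t}$ — any question about tiling $\mathbb{Z}_m^n$ by translates of $B_I$ reduces coordinatewise to the question of tiling $\mathbb{Z}_m^{k_t}$ by translates of the relevant factor, and in fact (since the $\mathbb{Z}_m^{k_t}$ factors are trivially tiled by themselves) it reduces to a single block $t=j\in I_p$: translates of $S:=[-\varepsilon,\varepsilon]^{k_j}\subseteq\mathbb{Z}_m^{k_j}$ with $\varepsilon=C_I(j)$.

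So the first step is to make this reduction precise: if the $I$-balls centered at a set $\mathcal{C}\subseteq\mathbb{Z}_m^n$ partitioned $\mathbb{Z}_m^n$, then projecting onto the $j$-th block and fixing all other coordinates (say at $\bar 0$) would force the $S$-translates $\{c_j+S: c\in\mathcal{C}, c_t=\bar 0\ \forall t\neq j\}$ — or more carefully, the collection of $j$-th blocks appearing among centers whose other coordinates lie in the appropriate fiber — to partition $\mathbb{Z}_m^{k_j}$. Equivalently, it suffices to prove the one-block statement: if $2\varepsilon+1\nmid m$ then translates of $[-\varepsilon,\varepsilon]^{k_j}$ cannot partition $\mathbb{Z}_m^{k_j}$. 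But this is exactly the content of the Corollary immediately following the Lemma above (the $t=k_j$, $\epsilon=\varepsilon$ case of ``Corollary: If $m$ is not divisible by $2\epsilon+1$, then the translates of $S$ can not form a partition of $\mathbb{Z}_m^t$''), which we are entitled to invoke. Thus the main work is packaging the reduction, not proving anything new about boxes in $\mathbb{Z}_m^{k_j}$.

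Carrying it out: I would write $u=(u^{(1)},\dots,u^{(s)})$ with $u^{(t)}\in\mathbb{Z}_m^{k_t}$, observe $B_I=\prod_{t=1}^{s} B_I^{(t)}$ with $B_I^{(t)}$ as described, and note $B_I(u)=\prod_t (u^{(t)}+B_I^{(t)})$. Suppose for contradiction $\mathbb{Z}_m^n=\bigsqcup_{c\in\mathcal{C}}B_I(c)$. Fix a value $w^{(t)}\in\mathbb{Z}_m^{k_t}$ for each $t\neq j$ — I would take $w^{(t)}=\bar 0$. Let $\mathcal{C}_0=\{c\in\mathcal{C}: c^{(t)}+B_I^{(t)}\ni \bar 0,\ \text{i.e.}\ -C_I(t)\le c^{(t)}_i\le C_I(t)\ \text{if}\ t\in I_p,\ \text{any}\ c^{(t)}\ \text{if}\ t\in I_f\cup([s]\setminus I^{*}),\ \forall t\neq j\}$ be the centers whose $I$-ball meets the fiber $\{(v^{(1)},\dots,v^{(s)}): v^{(t)}=\bar 0\ \forall t\neq j\}$. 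Restricting the partition to this fiber gives $\mathbb{Z}_m^{k_j}=\bigsqcup_{c\in\mathcal{C}_0}(c^{(j)}+B_I^{(j)})$, where possibly distinct $c,c'\in\mathcal{C}_0$ share a $j$-th block; but disjointness of $B_I(c),B_I(c')$ together with agreement on the fiber forces $c^{(j)}\neq c'^{(j)}$ when the $j$-th boxes would otherwise overlap, so after collapsing we get a genuine partition of $\mathbb{Z}_m^{k_j}$ by translates of $[-C_I(j),C_I(j)]^{k_j}$. This contradicts the Corollary cited above since $2C_I(j)+1\nmid m$.

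The main obstacle — really the only delicate point — is the bookkeeping in the reduction step, namely verifying that restricting the global partition to a one-block fiber yields a bona fide partition of $\mathbb{Z}_m^{k_j}$ and not merely a cover: one must check that two centers $c,c'$ with $c^{(j)}=c'^{(j)}$ and both meeting the fiber cannot have $B_I(c)=B_I(c')$ unless $c=c'$ at the level that matters, and conversely that every point of the fiber is hit. This follows from the product structure and from disjointness of the global $I$-balls, but it deserves to be stated carefully. Everything after that is an immediate appeal to the preceding Corollary, so the proof is short.
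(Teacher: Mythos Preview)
Your approach is correct and is the natural fleshing-out of what the paper leaves implicit (the corollary is stated there without proof). One small slip to fix: for $t\in[s]\setminus I^{*}$ the factor $B_I^{(t)}$ equals $\{\bar 0\}$, not all of $\mathbb{Z}_m^{k_t}$ (see equation~(\ref{partial})); correspondingly, the membership condition for $c\in\mathcal{C}_0$ at such $t$ should read $c^{(t)}=\bar 0$, not ``any $c^{(t)}$''. This does not damage the argument---$\{\bar 0\}$ tiles $\mathbb{Z}_m^{k_t}$ just as trivially as $\mathbb{Z}_m^{k_t}$ itself does---and once corrected the fiber reduction goes through exactly as you describe: restricting the assumed global partition to the fiber $\{v:v^{(t)}=\bar 0\ \forall\, t\neq j\}$ yields a partition of $\mathbb{Z}_m^{k_j}$ by translates of $[-C_I(j),C_I(j)]^{k_j}$, contradicting the earlier Corollary. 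The bookkeeping you flagged as the ``only delicate point'' is indeed routine: if two distinct centers in $\mathcal{C}_0$ shared the same $j$-th block, their $I$-balls would share a point of the fiber, contradicting global disjointness.
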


\begin{remark}\label{pball}
Let $\mathcal {C}\subseteq\mathbb{Z}_m^n$ be an $(n,K,d)$ $(\mathbb{P},\pi)$-code.
\begin{enumerate}[(1)]
\item
If $\mathcal {C}$ is an $r$-perfect code, then for any $I\in \mathcal {I}^r(\mathbb{P})$, the $I$-balls centered at the codewords of $\mathcal {C}$ are disjoint.
\item
If a $(\mathbb{P},\pi)$-code $\mathcal {C}$ is $I$-perfect for some $I\in\mathcal {I}^r(\mathbb{P})$ with partial count, we can not guarantee that $\mathcal {C}$ is $r$-perfect since there may exist an ideal $I'\in\mathcal {I}^{r}(\mathbb{P})$ with partial count and $u,v\in\mathcal {C}$ such that $B_{I'}(u)\cap B_{I'}(v)\neq\emptyset$ (Example \ref{pr} will illustrate a counterexample).
\item
Let $m$ be a  prime and let $I$ be an ideal of $\mathbb{P}$ with partial count, then $\mathcal {C}$ can not be $I$-perfect $(\mathbb{P},\pi)$ for any $I\in\mathcal {I}(\mathbb{P})$ with partial count.
\end{enumerate}
\end{remark}

\begin{example}\label{pr}
Let $\mathbb{P}=(M,R)$ be a pomset where $M=\{3/1,3/2\}$ and $$R=\{(9/(3/1,3/1),9/(3/2,3/2)\}.$$
Let $\pi$ be a labeling of the pomset $\mathbb{P}$ such that $\pi(1)=2$ and $\pi(2)=1$. For $I=\{1/1,3/2\}$, by the above discussion, we can find an $I$-perfect $(\mathbb{P},\pi)$-code $\mathcal {C}=\{(0,0,0),(3,0,0),(0,3,0),(3,3,0)\}\subseteq\mathbb{Z}_6^2$. Consider $I'=\{2/1,2/2\}\in\mathcal {I}^4(\mathbb{P})$. We observe that $(2,1,0)\in B_{I'}((3,0,0))\cup B_{I'}((0,3,0))$. Since $B_r(u)=\bigcup\limits_{I\in\mathcal {I}^r(\mathbb{P})}B_I(u)$, $\mathcal {C}$ is not 4-perfect. If $\mathbb{P}$ is a chain pomset with order relation $3/2\ R\ 3/1$, then $\mathcal {I}^4(\mathbb{P})=\{I\}$ and $\mathcal {C}$ is a 4-perfect $(\mathbb{P},\pi)$-code now.
\end{example}

Here is an example for a code being $r$-perfect but not $I$-perfect for any $I$ with partial count.

\begin{example}
Let $\mathbb{P}=(M,R)$ be a pomset where $M=\{2/1,2/2\}$ and $$R=\{(4/(2/1,2/1),4/(2/2,2/2)\}.$$
Let $\pi$ be a labeling of the pomset $\mathbb{P}$ such that $\pi(1)=\pi(2)=1$. Let $r=1$. We have $\mathcal {I}^1(\mathbb{P})=\{I_1,I_2\}$ where $I_1=\{1/1\}$, $I_2=\{1/2\}$. Then $B_1(\bar{0})=\{(0,0),(0,1),(1,0),(0,4),(4,0)\}$. Consider the $(\mathbb{P},\pi)$-code $\mathcal {C}=\{(0,0),(1,2),(2,4),(3,1),(4,3)\}\subseteq\mathbb{Z}_5^2$. It is routine to verify that $\mathcal {C}$ is 1-perfect code. On the other hand, $\mathcal {C}$ can not be $I$-perfect for any $I\in\mathcal {I}(\mathbb{P})$ with partial count by the above discussion.
\end{example}

\subsection{MDS $(\mathbb{P},\pi)$-code}

\begin{theorem}\label{linear}(Singleton Bound)
Let $(\mathbb{P},\pi)$ be a pomset block structure on $\mathbb{Z}_m^n$ and $\mathcal {C}\subseteq \mathbb{Z}_m^n$ be an $(n,K,d)$ $(\mathbb{P},\pi)$-code. Denote by $r=\left\lfloor\frac{d-1}{\left\lfloor\frac{m}{2}\right\rfloor}\right\rfloor$. Then
\begin{equation}\label{BOUND}
  n-\lceil \log_mK\rceil\geq \max\limits_{I\in\mathcal{I}(\mathbb{P}),|I^{*}|=r}\sum\limits_{i\in I^*}k_i.
\end{equation}
\end{theorem}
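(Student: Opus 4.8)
The plan is to prove the Singleton-type bound by a projection/pigeonhole argument of the classical type. Fix an ideal $I\in\mathcal{I}(\mathbb{P})$ with $|I^{*}|=r$ that realizes the maximum on the right-hand side of (\ref{BOUND}); such an ideal exists since $r=\left\lfloor\frac{d-1}{\left\lfloor m/2\right\rfloor}\right\rfloor$ satisfies $\left\lfloor m/2\right\rfloor\cdot r\leq d-1<|M|$ (if $d-1\geq|M|\cdot\lfloor m/2\rfloor$ the bound is vacuous or $\mathcal{C}$ is trivial, a case I would dispose of first), so by Proposition \ref{pomset1}(1) some ideal of root-set cardinality $r$ exists, and we pick one maximizing $\sum_{i\in I^{*}}k_i$. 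Write each $u\in\mathbb{Z}_m^n$ as $u=(u_1,u_2)$ with $u_1\in\bigoplus_{j\notin I^{*}}V_j$ and $u_2\in\bigoplus_{i\in I^{*}}V_i$, exactly as in the paragraph preceding the $I$-perfect proposition, and let $p:\mathbb{Z}_m^n\to\bigoplus_{j\notin I^{*}}V_j$ be the projection $u\mapsto u_1$.

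The key step is to show $p$ is injective on $\mathcal{C}$. Suppose $c,c'\in\mathcal{C}$ with $p(c)=p(c')$ but $c\neq c'$. Then $c-c'=(\bar 0,z)$ with $z\in\bigoplus_{i\in I^{*}}V_i$ nonzero, so $supp_{(L,\pi)}(c-c')$ is a submset of $M$ supported on $I^{*}$; each coordinate $i\in I^{*}$ contributes multiplicity at most $\left\lfloor m/2\right\rfloor=C_M(i)$. Hence $supp_{(L,\pi)}(c-c')\subseteq I$ — here I use that $I$ has full count, which I should arrange when choosing $I$, or more carefully: I choose $I$ so that $\langle supp_{(L,\pi)}(c-c')\rangle\subseteq\langle I^{*}\text{-saturated ideal}\rangle$. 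The cleanest route is: $\langle supp_{(L,\pi)}(c-c')\rangle$ is an ideal whose root set is contained in $I^{*}$, and its cardinality is at most $\left\lfloor m/2\right\rfloor\cdot|I^{*}|=\left\lfloor m/2\right\rfloor\cdot r\leq d-1$. Therefore $d_{(\mathbb{P},\pi)}(c,c')=w_{(\mathbb{P},\pi)}(c-c')\leq d-1<d$, contradicting the definition of minimum distance. So $p|_{\mathcal{C}}$ is injective, whence $K=|\mathcal{C}|\leq\left|\bigoplus_{j\notin I^{*}}V_j\right|=m^{\,n-\sum_{i\in I^{*}}k_i}$. Taking $\log_m$ and using that $n-\sum_{i\in I^{*}}k_i$ is an integer gives $\lceil\log_m K\rceil\leq n-\sum_{i\in I^{*}}k_i$, i.e. $n-\lceil\log_m K\rceil\geq\sum_{i\in I^{*}}k_i=\max_{I}\sum_{i\in I^{*}}k_i$, as desired.

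The main obstacle I anticipate is the bookkeeping around \emph{which} ideal of root-set size $r$ to use and the passage from $supp_{(L,\pi)}(c-c')$ to an ideal: a priori $\langle supp_{(L,\pi)}(c-c')\rangle$ may have root set strictly larger than expected if elements below $I^{*}$ in $\mathbb{P}$ get pulled in, but since $I$ is an ideal its root set is already downward closed, so the generated ideal stays inside $I^{*}$ and the cardinality bound $\left\lfloor m/2\right\rfloor\cdot r$ holds regardless of multiplicities. A secondary subtlety is the ceiling: if $K$ is not a power of $m$ one still gets $\lceil\log_m K\rceil\leq n-\sum_{i\in I^{*}}k_i$ because the right side is an integer $\geq\log_m K$. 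I would also remark that for a \emph{linear} code $\mathcal{C}$ of size $m^k$ this reads $n-k\geq\max_{I}\sum_{i\in I^{*}}k_i$, which is the form used later; but the argument above already covers the general (possibly nonlinear) case without change.
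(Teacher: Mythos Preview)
Your argument is correct and is essentially the same as the paper's: both fix an ideal $I$ with $|I^{*}|=r$ (taken with full count), observe that distinct codewords of $\mathcal{C}$ cannot agree on all blocks outside $I^{*}$ since otherwise $d_{(\mathbb{P},\pi)}(c,c')\leq |I|=\lfloor m/2\rfloor\cdot r\leq d-1$, and conclude that the projection onto $\bigoplus_{j\notin I^{*}}V_j$ is injective, giving $K\leq m^{\,n-\sum_{i\in I^{*}}k_i}$ and hence the bound after taking $\lceil\log_m\cdot\rceil$. Your extra care about the root set of $\langle supp_{(L,\pi)}(c-c')\rangle$ staying inside $I^{*}$ (because $I^{*}$ is downward closed) and about the ceiling step is exactly what the paper glosses over, so there is no gap.
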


\begin{proof}
Let $I\in\mathcal {I}(\mathbb{P})$ be an ideal of $\mathbb{P}$ with $|I^{*}|=r$. We may assume that $I$ is full count, otherwise one can increase the counts of the maximal elements with partial count in $I$ to $\left\lfloor\frac{m}{2}\right\rfloor$. Take $u,v\in \mathcal {C}$. If $u_i=v_i$ for all $i\in[s]\setminus I^*$. Then $d_{(\mathbb{P},\pi)}(u,v)\leq |I|\leq d-1$, a contradiction.
This means that any two distinct codewords of $\mathcal {C}$ will differ in at least one position outside $I^*$. Therefore there exists an injective map from $\mathcal {C}$ to $\mathbb{Z}_m^{n-\sum\limits_{i\in I^{*}}k_i}$ which implies that $\log_mK\leq n-\sum\limits_{i\in I^{*}}k_i$. Hence $\lceil\log_mK\rceil\leq n-\sum\limits_{i\in I^{*}}k_i$. From this, we get inequality (\ref{BOUND}).
\end{proof}

\begin{remark}
Note that when $k_i=1$ for all $1\leq i\leq s$, inequality (\ref{BOUND}) would be
$$n-\lceil\log_mK\rceil\geq \left\lfloor\frac{d-1}{\left\lfloor\frac{m}{2}\right\rfloor}\right\rfloor.$$
This is the Singleton bound for pomset code, see [\ref{POMSET2}, Theorem 2].
\end{remark}

\begin{Definition}
Let $(\mathbb{P},\pi)$ be a pomset block structure on $\mathbb{Z}_m^n$. A $(\mathbb{P},\pi)$-code $\mathcal {C}$ is said to be a maximum distance separable \textup{(MDS)} $(\mathbb{P},\pi)$-code if it attains the Singleton bound.
\end{Definition}

\begin{example}
Let $\mathbb{P}=(M,R)$ be a pomset on $M=\{2/1,2/2,2/3,2/4\}$ and
$$R=\{4/(2/1,2/1),4/(2/2,2/2),4/(2/3,2/3),4/(2/4,2/4),4/(2/1,2/2),4/(2/3,2/4)\}.$$
Let $\pi$ be a labeling of the pomset $\mathbb{P}$ such that $\pi(1)=\pi(3)=1$ and $\pi(2)=\pi(4)=2$. Consider the $(\mathbb{P},\pi)$-code $\mathcal {C}\subseteq\mathbb{Z}_5^6$ generated by the following matrix:
$$\left(
  \begin{array}{cccccc}
    1 & 0 & 2 & 2 & 0 & 1 \\
    0 & 2 & 4 & 1 & 1 & 0 \\
  \end{array}
\right).$$
Then the code $\mathcal {C}$ is a linear $(\mathbb{P},\pi)$-code of length 6 with $d_{(\mathbb{P},\pi)}(\mathcal {C})=7$. We have $\left\lfloor\frac{d-1}{\left\lfloor\frac{m}{2}\right\rfloor}\right\rfloor=3$. The ideals in $\mathbb{P}$ such that $|I^*|=3$ are
$$\Big\{\{2/1,1/2,1/3\},\{2/1,1/2,2/3\},\{2/1,2/2,1/3\},\{2/1,2/2,2/3\},$$
$$\{1/1,2/3,1/4\}, \{1/1,2/3,2/4\},\{2/1,2/3,1/4\},\{2/1,2/3,2/4\}\Big\}.$$
Thus
$$\max_{I\in\mathcal {I}(\mathbb{P}),|I^*|=3}\sum_{i\in I^*}k_i=4=n-k.$$
Therefore $\mathcal {C}$ is an MDS code.
\end{example}

\begin{remark}
Let $\mathcal {C}$ be an MDS $(n,K,d)$ $(\mathbb{P},\pi)$-code. Note that there always exists an ideal $I\in\mathcal {I}(\mathbb{P})$ with full count whose cardinality is $\left\lfloor\frac{m}{2}\right\rfloor\cdot \left\lfloor\frac{d-1}{\left\lfloor\frac{m}{2}\right\rfloor}\right\rfloor$ such that $\sum\limits_{i\in I^{*}}k_i=n-\lceil\log_mK\rceil$. Oterwise, assume that $I\in\mathcal {I}(\mathbb{P})$ is an ideal with partial count such that $\sum\limits_{i\in I^{*}}k_i=n-\lceil\log_mK\rceil$. By increasing the counts of the maximal elements with partial counts in $I$, one can get the ideal $J$ with full count whose cardinality is $\left\lfloor\frac{m}{2}\right\rfloor\cdot \left\lfloor\frac{d-1}{\left\lfloor\frac{m}{2}\right\rfloor}\right\rfloor$ and satisfies that $\sum\limits_{i\in I^{*}}k_i=n-\lceil\log_mK\rceil$.
\end{remark}

Let $m$ be a prime, that is, $\mathbb{Z}_m$ is a field. Let $\mathcal {C}$ be a linear $(n,m^k,d)$ $(\mathbb{P},\pi)$-code. A generator matrix $G$ and a parity check matrix $H$ of $\mathcal {C}$ are defined as in the classical case. The parity check matrix $H$ can be viewed as $H=[H_1\ H_2\ \cdots\ H_s]$ where $H_i$ is an $(n-k)\times k_i$ matrix. The set of blocks $H_{l_1}, H_{l_2},\ldots, H_{l_r}$ is called linearly independent if, for $\alpha_i\in V_{l_i}$,
$$H_{l_1}\alpha_1+H_{l_2}\alpha_2+\cdots+H_{l_r}\alpha_r=\bar{0}$$
deduces $\alpha_i=\bar{0}$ for all $i\in[r]$.
Otherwise the block set $H_{l_1}, H_{l_2}, \ldots, H_{l_r}$ is called linearly dependent.

Let $\mathbb{P}=(M,R)$ be a pomset defined on the multiset $M=\left\{\left\lfloor\frac{m}{2}\right\rfloor/1,\left\lfloor\frac{m}{2}\right\rfloor/2, \ldots,\left\lfloor\frac{m}{2}\right\rfloor/s\right\}$. We can define a corresponding poset $P$ with the underlying set $\{1,\ldots,s\}$ whose order relation is given by
\begin{center}
$a\leq b$ in $P$ if and only if $p/a\ R\ q/b$ in $\mathbb{P}$.
\end{center}
Given a subset $Q\subseteq [s]$, we denote by $<Q>_P$ the smallest ideal of $P$ containing $Q$. With these notations, we have the following.

\begin{theorem}
    Let $m$ be a prime. Let $\mathcal {C}$ be a linear $(n,m^k,d)$ $(\mathbb{P},\pi)$-code and let $H$ be a parity check matrix of $\mathcal {C}$. Then $\mathcal {C}$ has a codeword $c$ such that $\left|\langle supp_{(L,\pi)}(c)\rangle^{*}\right|=t$ if and only if there exists an ideal $I$ of $\mathbb{P}$ with $|I^{*}|=t$ satisfying that the blocks of $H$ corresponding to $I^{*}$ are linearly dependent.
\end{theorem}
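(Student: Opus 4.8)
The plan is to argue both directions by translating the combinatorial condition ``$|\langle supp_{(L,\pi)}(c)\rangle^*|=t$'' into the linear-algebraic condition on the blocks of $H$, using the fact that $m$ is prime so that $\mathbb{Z}_m$ is a field and $H c^{\mathsf T}=\bar 0$ is equivalent to $c\in\mathcal C$. The key observation is that for a codeword $c=(c_1,\dots,c_s)$ written in block form, $Hc^{\mathsf T}=\sum_{i\in[s]}H_ic_i^{\mathsf T}=\sum_{i\in\text{supp}(c)}H_ic_i^{\mathsf T}$, where $\text{supp}(c)=\{i:c_i\neq\bar 0\}=supp_{(L,\pi)}(c)^*$ (here I use that over a field $\mathbb Z_m$ with $m$ prime, $w_{(L,\pi)}(c_i)\neq 0$ iff $c_i\neq \bar 0$). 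So a nonzero codeword corresponds exactly to a linearly dependent block set $\{H_i: i\in\text{supp}(c)\}$, and the issue is only to reconcile $\text{supp}(c)$ with the \emph{ideal} $\langle supp_{(L,\pi)}(c)\rangle$, whose root set is $<\text{supp}(c)>_P$ (the smallest ideal of the associated poset $P$ containing $\text{supp}(c)$); note $|<\text{supp}(c)>_P|$ as a subset of $[s]$ equals $|\langle supp_{(L,\pi)}(c)\rangle^*|$ because $M$ is regular of height $\lfloor m/2\rfloor$, so passing to $P$ loses nothing.

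For the forward direction, suppose $\mathcal C$ has a codeword $c$ with $|\langle supp_{(L,\pi)}(c)\rangle^*|=t$. Set $Q=\text{supp}(c)\subseteq[s]$; then $<Q>_P$ is an ideal of $P$ of size $t$, and by the correspondence between ideals of $P$ and ideals of $\mathbb P$ with full count, it lifts to an ideal $I$ of $\mathbb P$ with $|I^*|=t$ and $I^*=<Q>_P\supseteq Q$. The relation $\sum_{i\in Q}H_ic_i^{\mathsf T}=\bar 0$ with not all $c_i=\bar 0$ exhibits the blocks $\{H_i:i\in Q\}$ as linearly dependent; padding with zero coefficients $\alpha_j=\bar 0$ for $j\in I^*\setminus Q$ shows the blocks of $H$ corresponding to $I^*$ are linearly dependent (they contain a dependent subfamily). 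This gives the desired ideal $I$.

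For the converse, suppose $I$ is an ideal of $\mathbb P$ with $|I^*|=t$ such that the blocks $\{H_i:i\in I^*\}$ are linearly dependent: there exist $\alpha_i\in V_i$, not all zero, with $\sum_{i\in I^*}H_i\alpha_i^{\mathsf T}=\bar 0$. Let $c$ be the vector of $\mathbb Z_m^n$ whose $i$-th block is $\alpha_i$ for $i\in I^*$ and $\bar 0$ otherwise; then $Hc^{\mathsf T}=\bar 0$, so $c\in\mathcal C$, and $c\neq\bar 0$. Now $\text{supp}(c)=\{i\in I^*:\alpha_i\neq\bar 0\}\subseteq I^*$, so $\langle supp_{(L,\pi)}(c)\rangle\subseteq I$ (here I use that $I^*$, being the root set of an ideal of $\mathbb P$, corresponds to an ideal of $P$, hence is down-closed, so $<\text{supp}(c)>_P\subseteq I^*$). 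This only gives $|\langle supp_{(L,\pi)}(c)\rangle^*|\leq t$, not equality. To repair this I would choose the dependency more carefully: among all nonzero $c\in\mathcal C$ with $\text{supp}(c)\subseteq I^*$, pick one — equivalently, observe that by Proposition \ref{pomset1} applied inside $\mathbb P$ (or its poset analogue) one can always enlarge $<\text{supp}(c)>_P$ to an ideal of size exactly $t$ while keeping it $\supseteq\text{supp}(c)$; the blocks of $H$ corresponding to that enlarged ideal are still linearly dependent (superset of a dependent set). Thus the statement should be read as ``$\mathcal C$ has a codeword whose ideal has root set of size $\le t$'' $\iff$ ``some ideal of size $\le t$ has dependent blocks,'' and the ``$=t$'' form follows by Proposition \ref{pomset1}'s extension property. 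The main obstacle is precisely this bookkeeping between $\text{supp}(c)$, its generated ideal, and an ideal of the prescribed cardinality $t$: I expect the cleanest route is to prove the $\le t$ version first and then invoke Proposition \ref{pomset1}(2) to adjust to exactly $t$.
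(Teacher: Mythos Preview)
Your overall strategy and the forward direction coincide with the paper's: both write $Hc^{\mathsf T}=\sum_i H_i c_i^{\mathsf T}$, use that over the field $\mathbb Z_m$ one has $c_i\neq\bar 0$ iff $w_{(L,\pi)}(c_i)\neq 0$, and take $I=\langle supp_{(L,\pi)}(c)\rangle$ to exhibit the required dependent block family with $|I^*|=t$.

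For the converse you correctly isolate a real difficulty, but your proposed repair via Proposition~\ref{pomset1} does not close it. From a dependency $\sum_{i\in I^*}H_i\alpha_i^{\mathsf T}=\bar 0$ one gets $c\in\mathcal C$ with $\text{supp}(c)=\{i\in I^*:\alpha_i\neq\bar 0\}$, hence only $|\langle supp_{(L,\pi)}(c)\rangle^*|\le t$. Enlarging $\langle supp_{(L,\pi)}(c)\rangle$ to an ideal with root set of size exactly $t$ via Proposition~\ref{pomset1} produces an \emph{ideal}, not a \emph{codeword}; nothing forces $\mathcal C$ to contain a vector whose support generates that larger ideal. Concretely: with $m=5$, $s=2$, the chain $1\prec 2$, $\pi\equiv 1$, $\mathcal C=\langle(1,0)\rangle$ and $H=[0\ \ 1]$, the blocks over $I^*=\{1,2\}$ are dependent (take $\alpha_1\neq 0$, $\alpha_2=0$), yet every nonzero codeword has $\langle supp_{(L,\pi)}(c)\rangle^*=\{1\}$, so no codeword achieves $t=2$. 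The paper's own argument handles this step by writing ``suppose there exist $\bar 0\neq\alpha_{l_i}$''---that is, it tacitly assumes a dependency in which \emph{every} block coefficient is nonzero and then sets $I^*=\langle\{l_1,\dots,l_\lambda\}\rangle_P$; but such an all-nonzero dependency need not exist for the given $I$, so the issue you flagged is intrinsic to the ``$=t$'' formulation rather than a bookkeeping matter, and neither your extension argument nor the paper's phrasing resolves it.
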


\begin{proof}
Let $c=(c_1,c_2,\cdots,c_s)\in \mathcal {C}$ satisfying $\left|\langle supp_{(L,\pi)}(c)\rangle^{*}\right|=t$. Let $I=\langle supp_{(L,\pi)}(c)\rangle$. Then $c_i\neq \bar{0}$ for $i\in M(I)^{*}$ and $c_i=\bar{0}$ for $i\notin I^{*}$. It follows from
$$H_1c_1+H_2c_2+\cdots+H_sc_s=\bar{0}$$
that the blocks of $H$ corresponding to $I^{*}$ are linearly dependent.

On the other hand, we suppose that the set of blocks $H_{l_1},H_{l_2},\ldots,H_{l_{\lambda}}$ are linearly dependent and suppose that there exist $\bar{0}\neq\alpha_{l_i}\in V_{l_i}$ such that
$$H_{l_1}\alpha_{l_1}+H_{l_2}\alpha_{l_2}+\cdots+H_{l_j}\alpha_{l_{\lambda}}=\bar{0}.$$
Let $I$ be an ideal of $\mathbb{P}$ whose root set is $I^{*}=<\{l_1,\ldots,l_{\lambda}\}>_P$.
Take $c=(c_1,c_2,\ldots,c_s)\in \mathbb{Z}_m^n$ such that
 \begin{equation*}
  c_{l_i}=\left\{
                     \begin{array}{ll}
                      \alpha_{l_i}, & \text{if}\ i\in[\lambda],\\[2mm]
                      \bar{0}, & \text{otherwise}.
                     \end{array}
                   \right.
 \end{equation*}
Then $Hc=\bar{0}$ which implies that $c\in \mathcal {C}$. Furthermore, $\langle supp_{(L,\pi)}(c)\rangle^{*}=I^*$. This completes the proof.
\end{proof}

\begin{corollary}\label{field}
Let $m$ be a prime. Let $\mathcal {C}$ be a linear $(\mathbb{P},\pi)$-code and let $H$ be a parity check matrix of $\mathcal {C}$.
Then
$$\min\left\{\left|\langle supp_{(L,\pi)}(c)\rangle^{*}\right|,\ c\in \mathcal {C}\right\}=t$$
 if and only if
 $$t=\min\left\{j: I\in \mathcal {I}(\mathbb{P})\ \text{satisfies}\ |I^{*}|=\{l_1,\ldots,l_j\},\ rank\{H_{l_1},\ldots,H_{l_j}\}<\sum\limits_{i=1}^j k_{l_i}\right\}.$$
\end{corollary}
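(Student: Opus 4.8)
The plan is to derive Corollary \ref{field} directly from the preceding theorem, which characterizes when $\mathcal{C}$ contains a codeword $c$ with $|\langle supp_{(L,\pi)}(c)\rangle^{*}|=t$ in terms of the existence of an ideal $I$ with $|I^{*}|=t$ whose corresponding blocks of $H$ are linearly dependent. First I would observe that the quantity on the left-hand side, $\min\{|\langle supp_{(L,\pi)}(c)\rangle^{*}|: c\in\mathcal{C}\}$, equals $t$ precisely when $t$ is the smallest integer such that $\mathcal{C}$ has a codeword whose generated ideal has root set of size $t$. Applying the theorem to this minimal $t$, this happens if and only if $t$ is the smallest integer for which there exists an ideal $I$ of $\mathbb{P}$ with $|I^{*}|=t$ such that the blocks of $H$ indexed by $I^{*}$ are linearly dependent.

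Next I would translate "the block set $H_{l_1},\ldots,H_{l_j}$ is linearly dependent" into the rank condition stated in the corollary. By the definition of linear independence of blocks given just before the theorem, the blocks $H_{l_1},\ldots,H_{l_j}$ are linearly dependent exactly when there exist $\alpha_i\in V_{l_i}$, not all zero, with $H_{l_1}\alpha_1+\cdots+H_{l_j}\alpha_j=\bar 0$; equivalently, the columns of the concatenated matrix $[H_{l_1}\ \cdots\ H_{l_j}]$ (which has $\sum_{i=1}^{j}k_{l_i}$ columns) are linearly dependent over the field $\mathbb{Z}_m$, i.e. $rank\{H_{l_1},\ldots,H_{l_j}\}<\sum_{i=1}^{j}k_{l_i}$. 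Combining this equivalence with the reformulation from the previous paragraph yields exactly the displayed characterization of $t$, so I would simply note that both directions follow immediately and state that this completes the proof.

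The only subtlety I anticipate — and it is minor — is the bookkeeping between an ideal $I$ and its root set $I^{*}$: the theorem produces an ideal with $|I^{*}|=t$, while the corollary's right-hand side is phrased in terms of a set $\{l_1,\ldots,l_j\}$ of indices on which a rank drop occurs. I would reconcile this by recalling (as in the proof of the theorem) that from any index set $\{l_1,\ldots,l_j\}$ with $rank\{H_{l_1},\ldots,H_{l_j}\}<\sum k_{l_i}$ one passes to the ideal $I$ with $I^{*}=\langle\{l_1,\ldots,l_j\}\rangle_P$, which can only shrink or preserve the count of the resulting codeword's support while keeping it among codewords of $\mathcal{C}$; since we are taking a minimum over $j$, enlarging $\{l_1,\ldots,l_j\}$ to an ideal does not affect the minimal value. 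Thus no essential obstacle arises — Corollary \ref{field} is a direct restatement of the theorem together with the column-rank interpretation of block linear dependence.

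\begin{proof}
By the previous theorem, $\mathcal{C}$ has a codeword $c$ with $\left|\langle supp_{(L,\pi)}(c)\rangle^{*}\right|=j$ if and only if there is an ideal $I$ of $\mathbb{P}$ with $|I^{*}|=j$ whose corresponding blocks of $H$ are linearly dependent. Hence the minimum on the left-hand side equals the least $j$ for which such an ideal exists. Now the blocks $H_{l_1},\ldots,H_{l_j}$ indexed by $I^{*}=\{l_1,\ldots,l_j\}$ are linearly dependent, by definition, precisely when the concatenated matrix $[H_{l_1}\ \cdots\ H_{l_j}]$ has linearly dependent columns over the field $\mathbb{Z}_m$, that is, when $rank\{H_{l_1},\ldots,H_{l_j}\}<\sum_{i=1}^{j}k_{l_i}$. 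Substituting this equivalence into the characterization above gives exactly the stated formula for $t$.
\end{proof}
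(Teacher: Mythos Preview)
Your proposal is correct and matches the paper's intent: the paper states Corollary~\ref{field} without proof, as an immediate consequence of the preceding theorem together with the observation that block linear dependence of $H_{l_1},\ldots,H_{l_j}$ is exactly the rank condition $rank\{H_{l_1},\ldots,H_{l_j}\}<\sum_{i=1}^j k_{l_i}$. Your only minor oversight is that the subtlety you flag about enlarging $\{l_1,\ldots,l_j\}$ to an ideal is unnecessary, since the right-hand side of the corollary already quantifies over $I\in\mathcal{I}(\mathbb{P})$.
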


Since
$$\left\lfloor\frac{d}{\left\lfloor\frac{m}{2}\right\rfloor} \right\rfloor\leq\min\left\{\left|\langle supp_{(L,\pi)}(c)\rangle^{*}\right|,\ c\in \mathcal {C}\right\},$$
we have the following.
\begin{remark}
Let $m$ be a prime. Let $\mathcal {C}$ be a linear $(\mathbb{P},\pi)$-code and let $H$ be a parity check matrix of $\mathcal {C}$. Then
$$\left\lfloor\frac{d}{\left\lfloor\frac{m}{2}\right\rfloor} \right\rfloor\leq\min\left\{j: I\in \mathcal {I}(\mathbb{P})\ \text{satisfies}\ |I^{*}|=\{l_1,\ldots,l_j\},\ rank\{H_{l_1},\ldots,H_{l_j}\}<\sum\limits_{i=1}^j k_{l_i}\right\}$$
by Corollary \ref{field}. The equality holds if and only if $d$ is divisible by $\left\lfloor\frac{m}{2}\right\rfloor$.
\end{remark}

\begin{theorem}\label{MDSSN}
Let $(\mathbb{P},\pi)$ be a pomset block structure on $\mathbb{Z}_m^n$ and $\mathcal {C}\subseteq\mathbb{Z}_m^n$ be an $(n,m^k,d)$ $(\mathbb{P},\pi)$-code. Then $\mathcal {C}$ is an MDS $(\mathbb{P},\pi)$-code if and only if $\mathcal {C}$ is $I$-perfect for some ideals $I\in\mathcal {I}^{\left\lfloor\frac{m}{2}\right\rfloor\cdot \left\lfloor\frac{d-1}{\left\lfloor\frac{m}{2}\right\rfloor}\right\rfloor}$ with full count.
\end{theorem}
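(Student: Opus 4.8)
The plan is to prove both implications through a dimension-counting argument combined with the Singleton bound (Theorem~\ref{linear}) and the characterization of $I$-perfect codes in Lemma~\ref{IPERFECT}. Write $r=\left\lfloor\frac{d-1}{\left\lfloor\frac{m}{2}\right\rfloor}\right\rfloor$, so that the ideals under consideration have cardinality $\left\lfloor\frac{m}{2}\right\rfloor\cdot r$ and root set of size $r$.

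For the forward direction, suppose $\mathcal{C}$ is MDS. By the preceding remark, there exists an ideal $I\in\mathcal{I}(\mathbb{P})$ with full count, $|I^{*}|=r$, and $\sum_{i\in I^{*}}k_i=n-k$ (here $k=\lceil\log_m K\rceil$ since $K=m^k$). I want to show $\mathcal{C}$ is $I$-perfect, and by Lemma~\ref{IPERFECT} it suffices to check $|\mathcal{C}\cap B_I|=1$. That $|\mathcal{C}\cap B_I|\geq 1$ is clear since $\bar{0}\in\mathcal{C}\cap B_I$. For the reverse inclusion, suppose $c\in\mathcal{C}\cap B_I$ with $c\neq\bar 0$; then $supp_{(L,\pi)}(c)\subseteq I$, so $c$ is supported only on blocks indexed by $I^{*}$, forcing $c_i=v_i$ for the two codewords $c,\bar 0$ at every $i\in[s]\setminus I^{*}$. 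Exactly as in the proof of Theorem~\ref{linear}, this contradicts $d_{(\mathbb{P},\pi)}(\mathcal{C})=d$ because $d_{(\mathbb{P},\pi)}(c,\bar 0)=w_{(\mathbb{P},\pi)}(c)\leq|I|=\left\lfloor\frac m2\right\rfloor r\leq d-1$. Hence $|\mathcal{C}\cap B_I|=1$ and $\mathcal{C}$ is $I$-perfect.

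For the converse, suppose $\mathcal{C}$ is $I$-perfect for some $I\in\mathcal{I}(\mathbb{P})$ with full count and $|I|=\left\lfloor\frac m2\right\rfloor r$, i.e. $|I^{*}|=r$. By Lemma~\ref{IPERFECT}, $\sum_{i\in I^{*}}k_i=n-k$. Combined with the Singleton bound $n-k\geq\max_{J\in\mathcal{I}(\mathbb{P}),|J^{*}|=r}\sum_{i\in J^{*}}k_i$, we immediately get that this maximum equals $n-k$, so $\mathcal{C}$ attains the bound and is MDS --- provided we know that $r$ is the correct index, i.e. that $r=\left\lfloor\frac{d-1}{\left\lfloor m/2\right\rfloor}\right\rfloor$ for the actual minimum distance $d$ of $\mathcal{C}$. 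This is the point requiring care: $I$-perfectness gives a lower bound on $d$ via Remark~(b) (any two codewords in the same $I$-ball are within distance $|I|$, so distinct codewords are at distance $>|I|=\left\lfloor m/2\right\rfloor r$ once one shows the balls are genuinely disjoint, giving $d\geq \left\lfloor m/2\right\rfloor r+1$ hence $\left\lfloor\frac{d-1}{\left\lfloor m/2\right\rfloor}\right\rfloor\geq r$); and the opposite inequality $\left\lfloor\frac{d-1}{\left\lfloor m/2\right\rfloor}\right\rfloor\leq r$ follows because if it were strictly larger, Proposition~\ref{pomset1} would produce an ideal $J$ of root-set size $r+1$ with $\sum_{i\in J^{*}}k_i\geq\sum_{i\in I^{*}}k_i+1=n-k+1$, violating the Singleton bound for the genuine $d$.

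The main obstacle is exactly this last bookkeeping in the converse: translating $I$-perfectness into the precise value $r=\left\lfloor\frac{d-1}{\left\lfloor m/2\right\rfloor}\right\rfloor$, since a priori one only knows the dimension identity $\sum_{i\in I^{*}}k_i=n-k$ and must rule out that $\mathcal{C}$ could have even larger minimum distance (which would make the relevant Singleton-bound index larger than $|I^{*}|$ and break the MDS conclusion). I expect to handle this by showing that an ideal of root-set size $r$ with full count and $\sum_{i\in I^{*}}k_i = n-k$ forces, together with disjointness of the $I$-balls, that no ideal of root-set size $r+1$ can have $\sum k_i$ exceeding $n-k$, pinning down $d$ between $\left\lfloor m/2\right\rfloor r+1$ and $\left\lfloor m/2\right\rfloor(r+1)$ and thus fixing the index.
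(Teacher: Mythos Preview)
Your forward direction is correct and follows the same idea as the paper: if two distinct codewords lay in the same $I$-ball their distance would be at most $|I|=\left\lfloor m/2\right\rfloor r\le d-1$, a contradiction. The paper phrases this by showing directly that $B_I(c)\cap B_I(c')=\varnothing$ for $c\neq c'$ (via Proposition~\ref{BI}(3)) and then counting $|\mathcal{C}|\cdot|B_I|=m^k\cdot m^{n-k}=m^n$; you instead quote Lemma~\ref{IPERFECT} and check $|\mathcal{C}\cap B_I|=1$. Note that your version (and Lemma~\ref{IPERFECT} itself) presupposes $\bar 0\in\mathcal{C}$, i.e.\ linearity, while the paper's direct counting does not.

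For the converse you have written the correct two-line argument and then talked yourself out of it. The parameter $d$ is, by hypothesis, the minimum distance of $\mathcal{C}$, and $r=\left\lfloor(d-1)/\left\lfloor m/2\right\rfloor\right\rfloor$ is defined from this $d$; the assumption is that $\mathcal{C}$ is $I$-perfect for some full-count $I$ with $|I^*|=r$. Lemma~\ref{IPERFECT} then gives $\sum_{i\in I^*}k_i=n-k$, so $\max_{|J^*|=r}\sum_{i\in J^*}k_i\ge n-k$; the Singleton bound (for this same $d$, hence this same $r$) gives the reverse inequality; equality follows and $\mathcal{C}$ is MDS. This is exactly the paper's argument. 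Your ``main obstacle'' --- worrying whether $r$ really equals $\left\lfloor(d-1)/\left\lfloor m/2\right\rfloor\right\rfloor$ --- is a phantom: that equality is the \emph{definition} of $r$, not something to be established. The additional paragraph attempting to ``pin down $d$'' is unnecessary, and the step invoking Proposition~\ref{pomset1} to manufacture an ideal of root-set size $r+1$ with $\sum k_i\ge n-k+1$ would in any case need more care, since that proposition controls ideal cardinality rather than root-set size.
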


\begin{proof}
Suppose that $\mathcal {C}$ is an MDS $(\mathbb{P},\pi)$-code. Denote by $r=\left\lfloor\frac{d-1}{\left\lfloor\frac{m}{2}\right\rfloor}\right\rfloor$. Then
$$n-k=\max\limits_{I\in\mathcal{I}(\mathbb{P}),|I^{*}|=r}\sum\limits_{i\in I^*}k_i.$$
Let $I$ be an ideal of $\mathbb{P}$  with full count whose cardinality is $r\cdot\left\lfloor\frac{m}{2}\right\rfloor$ such that $n-k=\sum\limits_{i\in I^{*}}k_i$. Assume there exist two codewords $c,c'\in \mathcal {C} $ such that the $I$-balls centered at $c$ and $c'$ have nonempty intersection. It follows from Proposition \ref{BI} that $B_I(c)=B_I(c')$. Then $c\in B_I(c')$ which implies that $\langle supp_{(L,\pi)}(c-c')\rangle\subseteq I$. Therefore
$$d_{(\mathbb{P},\pi)}(c,c')\leq |I|=\left\lfloor\frac{m}{2}\right\rfloor\cdot \left\lfloor\frac{d-1}{\left\lfloor\frac{m}{2}\right\rfloor}\right\rfloor\leq \left\lfloor\frac{m}{2}\right\rfloor\cdot\frac{d-1}{\left\lfloor\frac{m}{2}\right\rfloor} =d-1,$$
a contradiction. Thus, any two $I$-balls centered at distinct codewords of $\mathcal {C}$ must be disjoint. Hence, $\bigsqcup\limits_{c\in \mathcal {C}}B_I(c)$ is a disjoint union and it contains $|\mathcal {C}|\cdot|B_I|=m^n$ elements which implies that $\mathcal {C}$ is $I$-perfect.

Conversely, let $\mathcal {C}$ be an $I$-perfect code for some $I\in\mathcal {I}^{r\cdot\lfloor\frac{m}{2}\rfloor}$ with full count. It follows from Lemma \ref{IPERFECT} that $\sum\limits_{i\in I^{*}}k_i=n-k$ and hence
$$\max\limits_{I\in\mathcal{I}(\mathbb{P}),|I^{*}|=r}\sum\limits_{i\in I^*}k_i=n-k$$
This completes the proof.
\end{proof}

\begin{remark}
An $(n,m^k,d)$ MDS $(\mathbb{P},\pi)$-code $\mathcal {C}$ is $I$-perfect for all $I\in\mathcal {I}^{r}(\mathbb{P})$ with full count such that $\sum\limits_{i\in I^{*}}k_i=n-k$ where $r=\left\lfloor\frac{m}{2}\right\rfloor\cdot \left\lfloor\frac{d-1}{\left\lfloor\frac{m}{2}\right\rfloor}\right\rfloor$.
\end{remark}

Since an ideal of $\mathbb{P}$ with partial count is always contained in an ideal of $\mathbb{P}$ with full count, we have the following.

\begin{corollary}
Let $\mathcal {C}\subseteq\mathbb{Z}_m^n$ be an $(n,K,d)$ $(\mathbb{P},\pi)$-code. Let $I$ be an ideal of $\mathbb{P}$ such that $|I^{*}|=\left\lfloor\frac{d-1}{\left\lfloor\frac{m}{2}\right\rfloor}\right\rfloor$. Then $I$-balls centered at the codewords of $\mathcal {C}$ are disjoint.
\end{corollary}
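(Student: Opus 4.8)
The plan is to reduce everything to the full-count case, which is essentially the argument already used in the first half of the proof of Theorem \ref{MDSSN}, and then transfer the conclusion to partial-count ideals by an inclusion of balls. Throughout, write $r=\left\lfloor\frac{d-1}{\left\lfloor m/2\right\rfloor}\right\rfloor$, so the hypothesis reads $|I^{*}|=r$, and record the elementary inequality $\left\lfloor m/2\right\rfloor\cdot r=\left\lfloor m/2\right\rfloor\cdot\left\lfloor\frac{d-1}{\left\lfloor m/2\right\rfloor}\right\rfloor\leq d-1$, which will be the only numerical fact needed.

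First I would handle the case where $I$ has full count. Then $|I|=\left\lfloor m/2\right\rfloor\cdot|I^{*}|=\left\lfloor m/2\right\rfloor\cdot r\leq d-1$. Suppose, for contradiction, that $B_I(c)\cap B_I(c')\neq\emptyset$ for two distinct codewords $c,c'\in\mathcal{C}$. By Proposition \ref{BI}(3) the balls $B_I(c)$ and $B_I(c')$ are then identical, so $supp_{(L,\pi)}(c-c')\subseteq I$ and hence $\langle supp_{(L,\pi)}(c-c')\rangle\subseteq I$. Taking cardinalities gives $d_{(\mathbb{P},\pi)}(c,c')=|\langle supp_{(L,\pi)}(c-c')\rangle|\leq|I|\leq d-1$, contradicting the fact that $d$ is the minimum distance of $\mathcal{C}$. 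Thus in the full-count case the $I$-balls centered at the codewords of $\mathcal{C}$ are pairwise disjoint.

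Next I would treat an arbitrary ideal $I$ with $|I^{*}|=r$, possibly having partial count. Using the quoted fact that every partial-count ideal sits inside a full-count one, choose an ideal $J$ of $\mathbb{P}$ with full count such that $I\subseteq J$; concretely, raise the count of each maximal element of $I$ that has partial count up to $\left\lfloor m/2\right\rfloor$. Since $M$ is regular of height $\left\lfloor m/2\right\rfloor$, the resulting submset is still an order ideal, and this operation does not enlarge the root set, so $|J^{*}|=|I^{*}|=r$. From $I\subseteq J$ and the definition of an $I$-ball we get $B_I(u)\subseteq B_J(u)$ for every $u\in\mathbb{Z}_m^n$. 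Applying the full-count case to $J$, the balls $B_J(c)$, $c\in\mathcal{C}$, are pairwise disjoint; hence so are the smaller sets $B_I(c)\subseteq B_J(c)$, which is the claim.

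The argument is mostly bookkeeping, and I expect the only point requiring a moment of care to be the construction of $J$: one must verify that after inflating the counts of the partial maximal elements the submset remains an order ideal (it does, because comparabilities in $\mathbb{P}$ relate only distinct underlying elements and $M$ has constant height $\left\lfloor m/2\right\rfloor$) and that its root set does not grow, so that $|J^{*}|$ is still $r$. Everything else follows from Proposition \ref{BI}(3) and the inequality $\left\lfloor m/2\right\rfloor\cdot r\leq d-1$.
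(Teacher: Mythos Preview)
Your proposal is correct and follows essentially the same route as the paper. The paper reduces the partial-count case to the full-count one via the one-line observation ``an ideal of $\mathbb{P}$ with partial count is always contained in an ideal of $\mathbb{P}$ with full count'' (with the same root set, as in the Remark after Theorem~\ref{linear}), and the full-count case is precisely the disjointness argument from the first paragraph of the proof of Theorem~\ref{MDSSN}; you have simply written these two steps out in full detail.
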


Here we give an example to show the existence of a $(\mathbb{P},\pi)$-code which is MDS but not $I$-perfect for any $I\in\mathcal {I}(\mathbb{P})$ with partial count.

\begin{example}
Let $\mathbb{P}=(M,R)$ be a pomset where $M=\{2/1,2/2\}$ and $$R=\{(4/(2/1,2/1),4/(2/2,2/2),4/(2/2,2/1)\}.$$
Let $\pi$ be a labeling of the pomset $\mathbb{P}$ such that $\pi(1)=2$ and $\pi(2)=1$. Consider the $(\mathbb{P},\pi)$-code $\mathcal {C}\subseteq\mathbb{Z}_5^3$ generated by the following matrix:
$$\left(
  \begin{array}{ccc}
    0 & 1 & 3 \\
    1 & 2 & 0 \\
  \end{array}
\right).$$
Then the code $\mathcal {C}$ is a linear $(\mathbb{P},\pi)$-code over $\mathbb{Z}_5$ of length 3 with $d_{(\mathbb{P},\pi)}(\mathcal {C})=3$. We have $\left\lfloor\frac{d_{(\mathbb{P},\pi)}(\mathcal {C})-1}{\left\lfloor\frac{m}{2}\right\rfloor}\right\rfloor=1$. The ideals of $\mathbb{P}$ such that $|I^*|=1$ are $I_1=\{1/2\}$ and $I_2=\{2/2\}$ which implies that $\mathcal {C}$ is an MDS $(\mathbb{P},\pi)$-code. The partial count ideal in $\mathbb{P}$ is $I_1=\{(1/2)\}$ and $I_2=\{1/1,2/2\}$. It is routine to verify that $I_1$-balls centered at the codewords of $\mathcal {C}$ are disjoint and the union of $I_1$-balls contains 75 elements. Take $u=(0,1,3)$ and $v=(1,2,0)\in \mathcal {C}$. It is easy to see that $B_{I_2}(u)\cap B_{I_2}(v)\neq \phi$. Therefore, for any ideal $I$  of $\mathbb{P}$ with partial count, the code $\mathcal {C}$ is not $I$-perfect.
\end{example}

If $I$ is an ideal of $\mathbb{P}$ with partial count, then an $I$-perfect code is not necessarily an MDS code, as we can see in the next example.

\begin{example}\label{D}
Let $\mathbb{P}=(M,R)$ be a pomset where $M=\{3/1,3/2\}$ and $$R=\{(9/(3/1,3/1),9/(3/2,3/2)\}.$$
Let $\pi$ be a labeling of the pomset $\mathbb{P}$ such that $\pi(1)=1$ and $\pi(2)=2$. Consider the $(\mathbb{P},\pi)$-code $\mathcal {C}\subseteq\mathbb{Z}_6^3$ defined by $\mathcal {C}=\{(0,0,0),(0,3,0),(0,0,3),(0,3,3)\}$. Consider an ideal $I=\{3/1,1/2\}$ of $\mathbb{P}$ which has partial count. It is routine to verify that $\mathcal {C}$ is $I$-perfect. On the other hand, we have that $d_{(\mathbb{P},\pi)}(\mathcal {C})=3$. Then $\left\lfloor\frac{d_{(\mathbb{P},\pi)}(\mathcal {C})-1}{\left\lfloor\frac{m}{2}\right\rfloor}\right\rfloor= 0$. Therefore $\mathcal {C}$ is not MDS.

\end{example}

\begin{example}\label{d}
Let $\mathbb{P}=(M,R)$ be a pomset where $M=\{3/1,3/2\}$ and $$R=\{(9/(3/1,3/1),9/(3/2,3/2),9/(3/1,3/2)\}.$$
Let $\pi$ be a labeling of the pomset $\mathbb{P}$ such that $\pi(1)=\pi(2)=1$. Consider the $(\mathbb{P},\pi)$-code $\mathcal {C}\subseteq\mathbb{Z}_6^2$ defined by $\mathcal {C}=\{(0,0),(1,3)\}$. Consider $I=\{3/1,1/2\}$ of $\mathbb{P}$ which has partial count. It is routine to verify that $\mathcal {C}$ is $I$-perfect. Note that $d_{(\mathbb{P},\pi)}(\mathcal {C})=6$. Then $\left\lfloor\frac{d_{(\mathbb{P},\pi)}(\mathcal {C})-1}{\left\lfloor\frac{m}{2}\right\rfloor}\right\rfloor=1$. The ideals of $\mathbb{P}$ such that $|I^*|=1$ are $I_1=\{1/1\}$ and $I_2=\{2/1\}$.  Therefore $\mathcal {C}$ is an MDS $(\mathbb{P},\pi)$-code.
\end{example}

By comparing Examples \ref{D} and  \ref{d}, we speculate that being a chain pomset may be a necessary condition for an $I$-perfect code to be an MDS code. We will prove this conjecture later (Theorem \ref{CHAIN}).

\section{Duality and weight distribution}

\quad\; In this section, we consider the case for an $(n,K,d)$ MDS $(\mathbb{P},\pi)$-code when all the blocks have the same dimension. Let $(\mathbb{P},\pi)$ be a pomset block structure on $\mathbb{Z}_m^n$ such that $k_1=k_2=\cdots=k_s=t$. Then the Singleton bound becomes
\begin{equation}\label{bound}
n-\lceil\log_mK\rceil\geq t\cdot\left\lfloor\frac{d-1}{\left\lfloor\frac{m}{2}\right\rfloor}\right\rfloor.
\end{equation}

\subsection{Duality}

\quad\; In what follows, we characterize $(n,K,d)$ MDS $(\mathbb{P},\pi)$-codes when all the blocks have the same dimension.

\begin{theorem}\label{MDSdual}
Let $\mathbb{P}$ be a pomset on $M=\left\{\left\lfloor\frac{m}{2}\right\rfloor/1,\left\lfloor\frac{m}{2}\right\rfloor/2, \ldots,\left\lfloor\frac{m}{2}\right\rfloor/s\right\}$ and $\widetilde{\mathbb{P}}$ be its dual pomset on $M$. Let $\pi$ be a labeling of the pomset $\mathbb{P}$ with $k_1=k_2=\cdots=k_s=t$ and $\mathcal {C}$ be a linear $(n,m^k,d)$ $(\mathbb{P},\pi)$-code. The the following statements are equivalent:
\begin{enumerate}[(1)]
  \item $\mathcal {C}$ is an MDS $(\mathbb{P},\pi)$-code;
  \item $\mathcal {C}$ is an $I$-perfect $(\mathbb{P},\pi)$-code for all $I\in\mathcal {I}^{\frac{n-k}{t}\cdot\left\lfloor\frac{m}{2}\right\rfloor}(\mathbb{P})$ with full count;
  \item $\mathcal {C}^{\bot}$ is an $I$-perfect $(\widetilde{\mathbb{P}},\pi)$-code for all $I\in\mathcal {I}^{\frac{k}{t}\cdot\left\lfloor\frac{m}{2}\right\rfloor}(\widetilde{\mathbb{P}})$ with full count;
  \item $\mathcal {C}^{\bot}$ is an MDS $(\widetilde{\mathbb{P}}, \pi)$-code;
  \item For any ideal $I$ with full count of $\mathbb{P}$ and $x\in\mathbb{Z}_m^n$,
  \begin{equation}\label{duality}
  |B_I(x)\cap \mathcal {C}|=\left\{
                     \begin{array}{ll}
                       m^{tl-n+k}, & \text{if}\ |I|=l\cdot\lfloor\frac{m}{2}\rfloor \geq \frac{n-k}{t}\cdot\left\lfloor\frac{m}{2}\right\rfloor,\\[2mm]
                       1, & \text{if}\ |I|=l\cdot\left\lfloor\frac{m}{2}\right\rfloor< \frac{n-k}{t}\cdot\left\lfloor\frac{m}{2}\right\rfloor\ \text{and}\ x\in\bigcup\limits_{c\in \mathcal {C}}B_I(c),\\[2mm]
                       0, & \text{if}\ |I|=l\cdot\left\lfloor\frac{m}{2}\right\rfloor< \frac{n-k}{t}\cdot\left\lfloor\frac{m}{2}\right\rfloor\ \text{and}\ x\notin\bigcup\limits_{c\in \mathcal {C}}B_I(c).
                     \end{array}
                   \right.
  \end{equation}
\end{enumerate}
\end{theorem}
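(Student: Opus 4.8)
The plan is to establish the equivalences in a cycle $(1)\Rightarrow(2)\Rightarrow(3)\Rightarrow(4)\Rightarrow(1)$ together with $(2)\Leftrightarrow(5)$. The backbone of almost every implication is already in place: Theorem~\ref{MDSSN} identifies MDS codes with $I$-perfect codes for \emph{some} ideal of cardinality $\lfloor\frac m2\rfloor\cdot\lfloor\frac{d-1}{\lfloor m/2\rfloor}\rfloor$ with full count, Lemma~\ref{IPERFECT} gives the dimension bookkeeping $\sum_{i\in I^*}k_i=n-k$, and Theorem~\ref{Idual} transfers $I$-perfectness of $\mathcal C$ to $I^c$-perfectness of $\mathcal C^\bot$ in the dual pomset. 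The equal-dimension hypothesis $k_1=\cdots=k_s=t$ is what upgrades ``some ideal'' to ``all ideals of that cardinality'': once one ideal $I$ with $|I^*|=r$ and full count has $\sum_{i\in I^*}k_i=n-k$, \emph{every} such ideal $J$ with full count satisfies $\sum_{i\in J^*}k_i=t|J^*|=tr=n-k$, so by Theorem~\ref{MDSSN} (applied in the converse direction) $\mathcal C$ is $J$-perfect. This is the crux of $(1)\Rightarrow(2)$.

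For $(2)\Rightarrow(3)$: if $\mathcal C$ is $I$-perfect for every full-count $I\in\mathcal I^{\frac{n-k}{t}\lfloor m/2\rfloor}(\mathbb P)$, then by Theorem~\ref{Idual} $\mathcal C^\bot$ is $I^c$-perfect in $\widetilde{\mathbb P}$ for every such $I$. Since $M$ is regular of height $\lfloor m/2\rfloor$, Proposition~2.2 (the complementation bijection $\mathcal I(\widetilde{\mathbb P})=\{I^c:I\in\mathcal I(\mathbb P)\}$) shows the ideals $I^c$ range exactly over the full-count ideals of $\widetilde{\mathbb P}$ of cardinality $s\lfloor m/2\rfloor-\frac{n-k}{t}\lfloor m/2\rfloor=\frac{k}{t}\lfloor m/2\rfloor$ (using $n=st$). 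So $\mathcal C^\bot$ is $I$-perfect for all full-count $I\in\mathcal I^{\frac kt\lfloor m/2\rfloor}(\widetilde{\mathbb P})$, which is (3). The implication $(3)\Rightarrow(4)$ is then just Theorem~\ref{MDSSN} read for $\mathcal C^\bot$ over $\widetilde{\mathbb P}$: $I$-perfectness for a full-count ideal of the relevant cardinality forces $\mathcal C^\bot$ to meet the Singleton bound in $\widetilde{\mathbb P}$. For $(4)\Rightarrow(1)$ I would run the same argument with $\widetilde{\mathbb P}$ in place of $\mathbb P$ and use $\widetilde{\widetilde{\mathbb P}}=\mathbb P$, $(\mathcal C^\bot)^\bot=\mathcal C$ (the latter needs $m$ such that duality is involutive; for linear codes over $\mathbb Z_m$ this holds since $|\mathcal C||\mathcal C^\bot|=m^n$ for a submodule, so the count forces $(\mathcal C^\bot)^\bot=\mathcal C$).

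The remaining work is $(2)\Leftrightarrow(5)$, which is essentially a counting statement. Assume (2). Fix a full-count ideal $I$ with $|I|=l\lfloor m/2\rfloor$ and $x\in\mathbb Z_m^n$. If $l\ge \frac{n-k}{t}$, choose by Proposition~\ref{pomset1}(1) a full-count sub-ideal $J\subseteq I$ with $|J|=\frac{n-k}{t}\lfloor m/2\rfloor$; then $B_I(x)=\bigsqcup B_J(x_\alpha)$ is a disjoint union of $|B_I|/|B_J|=m^{t|I^*|-t|J^*|}=m^{tl-(n-k)}$ cosets of $B_J$, each meeting $\mathcal C$ in exactly one point by (2) and Lemma~\ref{IPERFECT}(3), giving $|B_I(x)\cap\mathcal C|=m^{tl-n+k}$. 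If $l<\frac{n-k}{t}$, enlarge $I$ to a full-count ideal $J\supseteq I$ with $|J|=\frac{n-k}{t}\lfloor m/2\rfloor$ by Proposition~\ref{pomset1}(2); since $B_I(x)\subseteq B_J(x)$ and $|B_J(x)\cap\mathcal C|=1$, we get $|B_I(x)\cap\mathcal C|\in\{0,1\}$, equal to $1$ precisely when the unique codeword in $B_J(x)$ already lies in $B_I(x)$, i.e.\ when $x\in\bigcup_{c\in\mathcal C}B_I(c)$. Conversely, (5) with $l=\frac{n-k}{t}$ gives $|B_I(x)\cap\mathcal C|=m^{0}=1$ for every $x$ and every full-count $I$ of that cardinality, which is exactly condition~(3) of Lemma~\ref{IPERFECT}, hence (2).

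I expect the main obstacle to be the bookkeeping in $(2)\Rightarrow(5)$ when $l>\frac{n-k}{t}$: one must check that $B_I(x)$ genuinely decomposes as a disjoint union of translates of the smaller ball $B_J$ indexed by the quotient $B_I/B_J$ (this uses Proposition~\ref{BI}(1),(2) — that full-count $I$-balls are cosets of the submodule $B_I$ — together with $J\subseteq I \Rightarrow B_J\subseteq B_I$), and that the index count is exactly $m^{tl-(n-k)}$, which is where the equal-dimension hypothesis is used again via $\dim B_I-\dim B_J=t(|I^*|-|J^*|)$. The other mild subtlety is making sure all cardinalities line up through the identity $n=st$, so that $\frac{n-k}{t}$ and $\frac kt$ are integers and complementation of ideals sends cardinality $\frac{n-k}{t}\lfloor m/2\rfloor$ to $\frac kt\lfloor m/2\rfloor$; this is automatic once one notes an MDS code forces $t\mid(n-k)$, which itself follows from Lemma~\ref{IPERFECT}(2).
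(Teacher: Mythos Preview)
Your proposal is correct and essentially matches the paper's proof: the paper also derives $(1)\Leftrightarrow(2)\Leftrightarrow(3)\Leftrightarrow(4)$ directly from Theorems~\ref{MDSSN} and~\ref{Idual}, and handles $(5)$ by the same coset-counting argument (the paper links $(5)$ to $(1)$ rather than to $(2)$, and in the small-$l$ case bounds $|B_I(x)\cap\mathcal C|\le 1$ via the minimum distance rather than by enlarging $I$ to a larger ideal, but these are cosmetic variations). One small slip of phrasing: in your $(1)\Rightarrow(2)$ step you need the \emph{forward} direction of Theorem~\ref{MDSSN} (equivalently the Remark following it, that an MDS code is $I$-perfect for \emph{every} full-count ideal with $\sum_{i\in I^*}k_i=n-k$), not the ``converse direction''.
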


\begin{proof}
$(1)\Leftrightarrow (2)$ follows from Theorem \ref{MDSSN}, $(2)\Leftrightarrow (3)$ follows from Theorem \ref{Idual} and $(3)\Leftrightarrow (4)$ follows from Theorem \ref{MDSSN}. We only need to show that (1) is equivalent to (5).

Assume that $\mathcal {C}$ is an MDS $(\mathbb{P},\pi)$-code. Let $I$ be an ideal of $\mathbb{P}$ with full count such that $|I|=l\cdot\left\lfloor\frac{m}{2}\right\rfloor$ and $x\in\mathbb{Z}_m^n$. There are two cases.
\begin{itemize}
  \item {\bf Case 1:} Suppose that $l\cdot\left\lfloor\frac{m}{2}\right\rfloor<\frac{n-k}{t}\cdot \left\lfloor\frac{m}{2}\right\rfloor$. Note that no two codewords of $\mathcal {C}$ belong to the same $I$-ball. Otherwise, there exist $c,c'\in \mathcal {C}$, $c\neq c'$ such that $c\in B_I(c')$. This implies that $\langle supp_{(L,\pi)}(c-c')\rangle\subseteq I$ and thus
      $$d_{(\mathbb{P},\pi)}(c,c')=\left|\langle supp_{(L,\pi)}(c-c')\rangle\right|\leq \left|I\right|=l\cdot\left\lfloor\frac{m}{2}\right\rfloor< \frac{n-k}{t}\cdot\left\lfloor\frac{m}{2}\right\rfloor= \left\lfloor\frac{d-1}{\left\lfloor\frac{m}{2}\right\rfloor}\right\rfloor\cdot \left\lfloor\frac{m}{2}\right\rfloor\leq d-1,$$
  a contradiction. By Proposition \ref{BI}, any two $I$-balls are either disjoint or identical and hence
    $$\left|B_I(x)\cap \mathcal {C}\right|=\left\{
                          \begin{array}{ll}
                            1, & \text{if}\ x\in\bigcup\limits_{c\in \mathcal {C}}B_I(c),\\[2mm]
                            0,  & \text{if}\ x\notin\bigcup\limits_{c\in \mathcal {C}}B_I(c).
                          \end{array}
                        \right.$$
  \item {\bf Case 2:} Suppose that $l\cdot\left\lfloor\frac{m}{2}\right\rfloor\geq \frac{n-k}{t}\cdot\left\lfloor\frac{m}{2}\right\rfloor$. Since $I$ is full count with $|I^{*}|=l$, there exists an ideal $J$ of $\mathbb{P}$ with full count such that $|J^{*}|=\frac{n-k}{t}$ and $J\subseteq I$. It follows from Theorem \ref{MDSSN} that $\mathcal {C}$ is $J$-perfect. Note that $B_J$ is a submodule of $B_I$. The number of cosets of $B_J$ in $B_I$ is $m^{lt-n+k}$. By Lemma \ref{IPERFECT}, every $B_J(u)$ contains exactly one element of $\mathcal {C}$. Therefore, $|B_I(x)\cap \mathcal {C}|=m^{lt-n+k}$.
\end{itemize}

Conversely, assume that for any ideal $I$ of $\mathbb{P}$ with full count and $x\in\mathbb{Z}_m^n$, we have $(\ref{duality})$. Let $I\in\mathcal {I}^{\frac{n-k}{t}\cdot\left\lfloor\frac{m}{2}\right\rfloor}(\mathbb{P})$ be an ideal of $\mathbb{P}$ with full count and $u\in\mathbb{Z}_m^n$. By (\ref{duality}), we have $\left|B_I(u)\cap \mathcal {C}\right|=1$ which yields that $\mathcal {C}$ is $I$-perfect by Lemma \ref{IPERFECT}. Therefore, $\mathcal {C}$ is $I$-perfect for all $I\in\mathcal {I}^{\frac{n-k}{t}\cdot\left\lfloor\frac{m}{2}\right\rfloor}$  with full count. It follows from Theorem \ref{MDSSN} that $\mathcal {C}$ is an MDS $(\mathbb{P},\pi)$-code.
\end{proof}

\begin{example}
Let $\mathbb{P}=(M,R)$ be a pomset where $M=\{2/1,2/2,2/3\}$ and $$R=\{(4/(2/1,2/1),4/(2/2,2/2),4/(2/3,2/3),4/(2/1,2/2),4/(2/3,2/2)\}.$$
Let $\pi$ be a labeling of the pomset $\mathbb{P}$ such that $\pi(1)=\pi(2)=\pi(3)=2$. Consider the $(\mathbb{P},\pi)$-code $\mathcal {C}\subseteq\mathbb{Z}_5^6$ generated by the following matrix:
$$\left(
  \begin{array}{cccccc}
    0 & 1 & 1 & 2 & 2 & 3 \\
    1 & 0 & 0 & 2 & 2 & 3 \\
  \end{array}
\right)$$
Then the code $\mathcal {C}$ is a linear $(\mathbb{P},\pi)$-code over $\mathbb{Z}_5$ of length 6 with $d_{(\mathbb{P},\pi)}(\mathcal {C})=5$. We have $\left\lfloor\frac{d-1}{\left\lfloor\frac{m}{2}\right\rfloor}\right\rfloor=2$. The full count ideal in $\mathbb{P}$ such that $|I^*|=2$ is $I=\{2/1,2/3\}$. It follows that
$$\max\limits_{I\in\mathcal {I}(\mathbb{P}),|I^*|=2}\sum\limits_{i\in I^*}k_i=4=n-k.$$
Hence $\mathcal {C}$ is an MDS $(\mathbb{P},\pi)$-code. Moreover, $\mathcal {C}$ is $I$-perfect.
\end{example}

Let $\mathbb{P}=(M,R_1)$ and $\mathbb{Q}=(M,R_2)$ be two pomsets on $M$. We say that $\mathbb{Q}$ is \textbf{finer} than $\mathbb{P}$ if $p/a\ R_1\ q/b$ in $\mathbb{P}$ implies that $p/a\ R_2\ q/b$ in $\mathbb{Q}$.

\begin{lemma}
Let $\mathbb{P}_1$ and $\mathbb{P}_2$ be two pomsets on $M=\left\{\left\lfloor\frac{m}{2}\right\rfloor/1,\left\lfloor\frac{m}{2}\right\rfloor/2, \ldots,\left\lfloor\frac{m}{2}\right\rfloor/s\right\}$ and $\pi$ be a labeling of the pomset $\mathbb{P}_1$ with $k_1=k_2=\cdots=k_s=t$. If $\mathbb{P}_2$ is finer than $\mathbb{P}_1$, then every MDS $(\mathbb{P}_1,\pi)$-code is an MDS $(\mathbb{P}_2,\pi)$-code.
\end{lemma}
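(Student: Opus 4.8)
The plan is to combine the Singleton bound in its equal-dimension form (\ref{bound}) with the elementary observation that refining a pomset can only increase $(\mathbb{P},\pi)$-weights. Throughout, write $d_i=d_{(\mathbb{P}_i,\pi)}(\mathcal{C})$ and $r_i=\left\lfloor\frac{d_i-1}{\lfloor m/2\rfloor}\right\rfloor$ for $i=1,2$, so that by (\ref{bound}) one has $n-\lceil\log_m K\rceil\geq t\,r_i$, with equality for $i=1$ by hypothesis.

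The first step is to prove the monotonicity $w_{(\mathbb{P}_1,\pi)}(u)\leq w_{(\mathbb{P}_2,\pi)}(u)$ for every $u\in\mathbb{Z}_m^n$. The point is that the Lee block support $supp_{(L,\pi)}(u)$ is computed from $u$ and $\pi$ alone and is therefore the same submset $S$ of $M$ for both pomsets; only the ideal it generates changes. Since $\mathbb{P}_2$ is finer than $\mathbb{P}_1$, for each $k/a\in M$ we have $\{q/b\in M:q/b\ R_1\ k/a,\ b\neq a\}\subseteq\{q/b\in M:q/b\ R_2\ k/a,\ b\neq a\}$, hence $\langle k/a\rangle$ computed in $\mathbb{P}_1$ is a submset of $\langle k/a\rangle$ computed in $\mathbb{P}_2$; taking the mset union over $k/a\in S$ gives $\langle S\rangle_{\mathbb{P}_1}\subseteq\langle S\rangle_{\mathbb{P}_2}$, so $w_{(\mathbb{P}_1,\pi)}(u)=|\langle S\rangle_{\mathbb{P}_1}|\leq|\langle S\rangle_{\mathbb{P}_2}|=w_{(\mathbb{P}_2,\pi)}(u)$. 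Applying this to the differences $c-c'$ of distinct codewords yields $d_1\leq d_2$, and hence $r_1\leq r_2$.

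The second step is the squeeze. By (\ref{bound}) for $(\mathbb{P}_2,\pi)$ we have $n-\lceil\log_m K\rceil\geq t\,r_2\geq t\,r_1$, while $\mathcal{C}$ being MDS for $(\mathbb{P}_1,\pi)$ gives $n-\lceil\log_m K\rceil=t\,r_1$. These inequalities collapse, forcing $n-\lceil\log_m K\rceil=t\,r_2$; that is, $\mathcal{C}$ meets the Singleton bound (\ref{bound}) for $(\mathbb{P}_2,\pi)$, so $\mathcal{C}$ is an MDS $(\mathbb{P}_2,\pi)$-code.

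I expect the only subtle point to be the first step: verifying at the level of multiset counts that generating an ideal in a finer pomset produces a (weakly) larger submset, so that $(\mathbb{P},\pi)$-weights, and hence the minimum distance, are monotone under refinement. Everything after that is a one-line comparison of the two Singleton bounds, and the equal-dimension hypothesis is precisely what makes each bound equal to $t\lfloor(d-1)/\lfloor m/2\rfloor\rfloor$, so that it depends on the pomset only through $d$.
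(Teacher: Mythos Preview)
Your proof is correct and follows essentially the same approach as the paper: both argue that refinement increases $(\mathbb{P},\pi)$-weights (hence $d_1\le d_2$), and then squeeze using the equal-dimension Singleton bound (\ref{bound}). You supply more justification for the monotonicity step than the paper does, and your squeeze via $r_1\le r_2$ is marginally cleaner than the paper's detour through the real inequality $d_2-1\ge\frac{n-\lceil\log_mK\rceil}{t}\lfloor m/2\rfloor$, but the substance is the same.
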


\begin{proof}
Suppose that $\mathcal {C}$ is an MDS $(n,K,d)$ $(\mathbb{P}_1,\pi)$-code. Then $n-\lceil\log_mK\rceil=t\cdot\left\lfloor \frac{d_{(\mathbb{P}_1,\pi)}(\mathcal {C})-1}{\left\lfloor\frac{m}{2}\right\rfloor}\right\rfloor$. Since $\mathbb{P}_2$ is finer than $\mathbb{P}_1$, we have $d_{(\mathbb{P}_1, \pi)}(u,v)\leq d_{(\mathbb{P}_2,\pi)}(u,v)$ for any $u,v\in\mathbb{Z}_m^n$. Therefore $d_{(\mathbb{P}_1,\pi)}(\mathcal {C})\leq d_{(\mathbb{P}_2,\pi)}(\mathcal {C})$ and hence
$$d_{(\mathbb{P}_2,\pi)}(\mathcal {C})-1\geq d_{(\mathbb{P}_1,\pi)}(\mathcal {C})-1\geq\frac{n-\lceil\log_mK\rceil}{t}\cdot \left\lfloor\frac{m}{2}\right\rfloor.$$
This implies that $\left\lfloor\frac{d_{(\mathbb{P}_2,\pi)}(\mathcal {C})-1}{\left\lfloor\frac{m}{2}\right\rfloor} \right\rfloor\geq\frac {n-\lceil\log_mK\rceil}{t}$. By the Singleton bound, $\mathcal {C}$ is an MDS $(\mathbb{P}_2,\pi)$-code.
\end{proof}

\begin{corollary}
An MDS block code is also an MDS pomset block code for every pomset defined on the set of blocks when all blocks have the same dimension.
\end{corollary}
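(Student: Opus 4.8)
The plan is to obtain this corollary as an immediate special case of the preceding lemma, by taking the finer pomset to be arbitrary and the coarser one to be the antichain on the set of blocks. Let $\mathbb{A}$ denote the antichain pomset on $M=\left\{\left\lfloor\frac{m}{2}\right\rfloor/1,\ldots,\left\lfloor\frac{m}{2}\right\rfloor/s\right\}$, that is, the pomset whose only mset relations are the full-count reflexive ones $\left\lfloor\frac{m}{2}\right\rfloor/a\ R\ \left\lfloor\frac{m}{2}\right\rfloor/a$. First I would unwind what an MDS block code is in this language: for $\mathbb{A}$ there are no comparabilities with distinct underlying elements, so \emph{every} submset of $M$ is an order ideal, whence $\langle supp_{(L,\pi)}(u)\rangle = supp_{(L,\pi)}(u)$ for all $u\in\mathbb{Z}_m^n$ and therefore $w_{(\mathbb{A},\pi)}(u)=\sum_{i=1}^{s}w_{(L,\pi)}(u_i)$. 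This is precisely the (Lee) block weight on $\mathbb{Z}_m^n$, so an MDS block code is by definition an MDS $(\mathbb{A},\pi)$-code.

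Next I would check the two hypotheses needed to apply the lemma. A labeling $\pi$ with $k_1=\cdots=k_s=t$ is simultaneously a labeling of $\mathbb{A}$ and of any pomset $\mathbb{P}=(M,R')$ on $M$, since a labeling depends only on the index set $[s]$ and on the constraint $n=\sum_i\pi(i)$, not on the order relation. Moreover every pomset $\mathbb{P}=(M,R')$ is finer than $\mathbb{A}$: because $R'$ is reflexive it already contains all the relations of $\mathbb{A}$, so the defining implication of ``finer'' holds trivially. With these two observations, the lemma applies with $\mathbb{P}_1=\mathbb{A}$ and $\mathbb{P}_2=\mathbb{P}$ and gives that every MDS $(\mathbb{A},\pi)$-code is an MDS $(\mathbb{P},\pi)$-code; since $\mathbb{P}$ was an arbitrary pomset on the set of blocks, this is exactly the statement of the corollary.

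I do not expect any real obstacle here: the entire content is the reduction to the lemma, and the only point deserving a sentence of justification is the identification of the antichain pomset block metric with the block metric on $\mathbb{Z}_m^n$, which follows from the fact that an antichain has no nontrivial comparabilities and hence the ideal-closure operator is the identity on submsets of $M$.
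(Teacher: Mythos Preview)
Your proposal is correct and matches the paper's intended route: the paper states the corollary immediately after the ``finer pomset'' lemma without proof, so the implicit argument is precisely to take $\mathbb{P}_1$ to be the antichain (whose pomset block weight is the Lee block weight) and $\mathbb{P}_2$ an arbitrary pomset, then apply the lemma. Your justification that every pomset is finer than the antichain and that the ideal-closure is the identity for an antichain is exactly the content needed, and there is nothing more to it.
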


\begin{proposition}
Let $(\mathbb{P},\pi)$ be a pomset block structure on $\mathbb{Z}_m^n$ with $k_1=k_2=\cdots=k_s=t$ and $\mathcal {C}$ be an $(n,m^k,d)$ $(\mathbb{P},\pi)$-code. If $\mathcal {C}$ is $(\frac{n-k}{t}\cdot\left\lfloor\frac{m}{2}\right\rfloor)$-perfect $(\mathbb{P},\pi)$-code, then $\mathcal {C}$ is an MDS $(\mathbb{P},\pi)$-code.
\end{proposition}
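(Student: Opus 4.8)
The plan is to notice that the hypothesis is considerably stronger than what is needed: only the \emph{packing} half of $r$-perfectness will be used. Writing $r=\frac{n-k}{t}\lfloor m/2\rfloor$, the fact that $\mathcal C$ is $r$-perfect implies in particular that the $(\mathbb P,\pi)$-balls $B_r(c)$, $c\in\mathcal C$, are pairwise disjoint, i.e.\ that $\mathcal C$ is an $r$-error correcting $(\mathbb P,\pi)$-code. From this I will extract a lower bound on the minimum distance and then squeeze the parameters between that bound and the Singleton bound (\ref{bound}).

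First I would bound the minimum distance from below. For distinct $c,c'\in\mathcal C$ one always has $c'\in B_r(c')$; if in addition $d_{(\mathbb P,\pi)}(c,c')\le r$ then also $c'\in B_r(c)$, so $c'\in B_r(c)\cap B_r(c')=\emptyset$, a contradiction. Hence every pair of codewords is at $(\mathbb P,\pi)$-distance strictly larger than $r$, so $d=d_{(\mathbb P,\pi)}(\mathcal C)\ge r+1$; equivalently $d-1\ge\frac{n-k}{t}\lfloor m/2\rfloor$, whence $\big\lfloor\frac{d-1}{\lfloor m/2\rfloor}\big\rfloor\ge\frac{n-k}{t}$. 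On the other hand, the Singleton bound (Theorem~\ref{linear}), in the equal-block-dimension form (\ref{bound}), reads $n-k=n-\lceil\log_m K\rceil\ge t\big\lfloor\frac{d-1}{\lfloor m/2\rfloor}\big\rfloor$, i.e.\ $\frac{n-k}{t}\ge\big\lfloor\frac{d-1}{\lfloor m/2\rfloor}\big\rfloor$. Combining the two inequalities gives $\frac{n-k}{t}=\big\lfloor\frac{d-1}{\lfloor m/2\rfloor}\big\rfloor$, i.e.\ $n-k=t\big\lfloor\frac{d-1}{\lfloor m/2\rfloor}\big\rfloor$, so $\mathcal C$ meets the Singleton bound and is an MDS $(\mathbb P,\pi)$-code.

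I do not expect a genuine obstacle here; the argument is a short squeeze and the delicate points are merely bookkeeping. The one step to state carefully is the passage $\frac{d-1}{\lfloor m/2\rfloor}\ge\frac{n-k}{t}\Rightarrow\big\lfloor\frac{d-1}{\lfloor m/2\rfloor}\big\rfloor\ge\frac{n-k}{t}$, which uses that $\frac{n-k}{t}$ is a nonnegative integer; this is implicit in the statement, since $r=\frac{n-k}{t}\lfloor m/2\rfloor$ must be an honest radius and $\frac{n-k}{t}$ plays the role of a root-set cardinality throughout Section~4 (cf.\ $\mathcal I^{\frac{n-k}{t}\lfloor m/2\rfloor}$ in Theorem~\ref{MDSdual}). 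An alternative, essentially equivalent, route is to choose a full-count ideal $I$ with $|I^*|=\frac{n-k}{t}$ (which exists: take any $\frac{n-k}{t}$-element down-set of the underlying poset, all elements at count $\lfloor m/2\rfloor$), so that $|I|=r$ and $\sum_{i\in I^*}k_i=n-k$; by Remark~\ref{pball}(1) the $I$-balls about codewords are pairwise disjoint, and since $|\mathcal C|\cdot|B_I|=m^k\cdot m^{\,n-k}=m^n$ by Proposition~\ref{BI} they must partition $\mathbb Z_m^n$, so $\mathcal C$ is $I$-perfect; but to conclude through Theorem~\ref{MDSSN} one still has to identify $|I|$ with $\lfloor m/2\rfloor\big\lfloor\frac{d-1}{\lfloor m/2\rfloor}\big\rfloor$, which circles back to the distance estimate of the second step, so the direct route above is cleaner.
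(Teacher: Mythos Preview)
Your direct argument is correct and, in fact, slightly more elementary than the paper's. Both proofs extract the same key inequality $d\ge r+1$ from the packing half of $r$-perfectness (with $r=\frac{n-k}{t}\lfloor m/2\rfloor$). You then finish by squeezing against the Singleton bound~(\ref{bound}) to force $\big\lfloor\frac{d-1}{\lfloor m/2\rfloor}\big\rfloor=\frac{n-k}{t}$ directly. The paper instead uses that distance inequality to show that, for every full-count ideal $I$ of cardinality $r$, the $I$-balls about codewords are pairwise disjoint, hence (by the counting $|\mathcal C|\cdot|B_I|=m^n$) that $\mathcal C$ is $I$-perfect, and then invokes the equivalence $(1)\Leftrightarrow(2)$ of Theorem~\ref{MDSdual}. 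So your ``alternative route'' is essentially the paper's route; the point you flagged as circular only arises because you aimed it at Theorem~\ref{MDSSN} (which is phrased in terms of $\lfloor m/2\rfloor\big\lfloor\frac{d-1}{\lfloor m/2\rfloor}\big\rfloor$). Using Theorem~\ref{MDSdual} instead, whose ideal cardinality is $\frac{n-k}{t}\lfloor m/2\rfloor$, avoids that circularity entirely. Your direct squeeze is cleaner and avoids the detour through $I$-perfectness; the paper's route has the small bonus of exhibiting the $I$-perfectness explicitly along the way.
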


\begin{proof}
By Theorem \ref{MDSdual}, it is sufficient to show that $\mathcal {C}$ is an $I$-perfect code for all $I\in\mathcal {I}^{\frac{n-k}{t}\cdot\left\lfloor\frac{m}{2}\right\rfloor}$ with full count. Let $I$ be an ideal of cardinality $\frac{n-k}{t}\cdot\left\lfloor\frac{m}{2}\right\rfloor$ with full count. Suppose that there exist $c,c'\in \mathcal {C}$, $c\neq c'$ such that $B_I(c)=B_I(c')$. Then $\langle(supp_{(L,\pi)}(c-c')\rangle\subseteq I$. Since $\mathcal {C}$ is $(\frac{n-k}{t}\cdot\left\lfloor\frac{m}{2}\right\rfloor)$-perfect, the distance between two codewords of $\mathcal {C}$ is at least $\frac{n-k}{t}\cdot\left\lfloor\frac{m}{2}\right\rfloor+1$. Then
$$\frac{n-k}{t}\cdot\left\lfloor\frac{m}{2}\right\rfloor+1\leq d_{(\mathbb{P},\pi)}(c,c')\leq |I|=\frac{n-k}{t}\cdot\left\lfloor\frac{m}{2}\right\rfloor,$$
a contradiction.
\end{proof}

\begin{theorem}\label{NONLINEAR}
Let $(\mathbb{P},\pi)$ be a pomset block structure on $\mathbb{Z}_m^n$ with $k_1=k_2=\cdots=k_s=t$ and $\mathcal {C}$ be an $(n,K,d)$ $(\mathbb{P},\pi)$-code. If $\mathcal {C}$ is $I$-perfect for every ideal $I\in\mathcal {I}^{\frac{n-\lceil\log_mK\rceil}{t}\cdot \left\lfloor\frac{m}{2}\right\rfloor}(\mathbb{P})$, then $\mathcal {C}$ is an MDS $(\mathbb{P},\pi)$-code.
\end{theorem}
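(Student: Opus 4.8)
The plan is to show directly that $\mathcal{C}$ meets the Singleton bound (\ref{bound}). Put $r=\frac{n-\lceil\log_mK\rceil}{t}$; this is a nonnegative integer (implicit in the hypothesis), and since $n=\sum_{i=1}^s k_i=st$ and $K\ge 1$ we have $0\le r\le s$. By Theorem \ref{linear}, applied with $k_1=\cdots=k_s=t$, the bound (\ref{bound}) already gives $\left\lfloor\frac{d-1}{\lfloor m/2\rfloor}\right\rfloor\le r$, so $\mathcal{C}$ is MDS precisely when the reverse inequality holds, i.e. when $\left\lfloor\frac{d-1}{\lfloor m/2\rfloor}\right\rfloor\ge r$. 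Since $r$ is an integer, this is equivalent to the single numerical assertion $d> r\lfloor m/2\rfloor$, namely that the $(\mathbb{P},\pi)$-minimum distance of $\mathcal{C}$ strictly exceeds $r\lfloor m/2\rfloor$. Thus the whole proof reduces to establishing this one inequality.

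To prove $d>r\lfloor m/2\rfloor$ I would argue by contradiction. Suppose $d\le r\lfloor m/2\rfloor$, and choose $c,c'\in\mathcal{C}$ with $c\ne c'$ realizing the minimum distance, so that $J:=\langle supp_{(L,\pi)}(c-c')\rangle$ is an order ideal of $\mathbb{P}$ with $|J|=w_{(\mathbb{P},\pi)}(c-c')=d\le r\lfloor m/2\rfloor\le s\lfloor m/2\rfloor=|M|$. By Proposition \ref{pomset1}(2) there exists an ideal $I\in\mathcal{I}^{r\lfloor m/2\rfloor}(\mathbb{P})$ with $J\subseteq I$. Then $supp_{(L,\pi)}(c-c')\subseteq J\subseteq I$, and since $supp_{(L,\pi)}(c-c')=supp_{(L,\pi)}(c'-c)$ this also gives $supp_{(L,\pi)}(c'-c)\subseteq I$; by the definition of $I$-balls it follows that $c\in B_I(c)\cap B_I(c')$, so these two $I$-balls are not disjoint although $c\ne c'$. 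But $I$ has cardinality exactly $\frac{n-\lceil\log_mK\rceil}{t}\cdot\lfloor m/2\rfloor$, so by hypothesis $\mathcal{C}$ is $I$-perfect, whence the $I$-balls centered at distinct codewords of $\mathcal{C}$ must be pairwise disjoint --- a contradiction. Hence $d>r\lfloor m/2\rfloor$, and $\mathcal{C}$ is MDS.

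I do not expect any serious obstacle; the content is short once the reduction to $d>r\lfloor m/2\rfloor$ is made. The two points that require a little care are: recording that $r\le s$, so that an ideal $I$ of cardinality $r\lfloor m/2\rfloor$ containing $J$ really does exist (this is exactly where Proposition \ref{pomset1}(2) is invoked); and noting that the ideal $I$ so produced need not have full count --- if $d$ is not a multiple of $\lfloor m/2\rfloor$ it will be a partial-count ideal --- so that one genuinely needs the hypothesis ``$I$-perfect for every ideal of the prescribed cardinality'' rather than only for full-count ones. In particular, and in contrast to the corresponding implication in Theorem \ref{MDSdual}, no linearity or submodule/coset argument enters, which is why the statement remains valid for arbitrary (not necessarily linear) $(\mathbb{P},\pi)$-codes.
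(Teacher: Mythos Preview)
Your proof is correct and follows essentially the same approach as the paper: reduce MDS to the inequality $d>\frac{n-\lceil\log_mK\rceil}{t}\lfloor m/2\rfloor$, assume for a contradiction that two codewords $c\ne c'$ satisfy $|\langle supp_{(L,\pi)}(c-c')\rangle|\le r\lfloor m/2\rfloor$, enlarge this ideal via Proposition~\ref{pomset1}(2) to some $I\in\mathcal{I}^{r\lfloor m/2\rfloor}(\mathbb{P})$, and conclude that $c\in B_I(c)\cap B_I(c')$ contradicts $I$-perfectness. Your added remarks (verifying $r\le s$ so Proposition~\ref{pomset1}(2) applies, and noting that $I$ need not have full count) are welcome clarifications; the one small imprecision is that whether $I$ has full count depends on the pomset structure and the extension chosen, not on whether $d$ is a multiple of $\lfloor m/2\rfloor$.
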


\begin{proof}
Assume that for any ideal $I\in\mathcal {I}^{\frac{n-\lceil\log_mK\rceil}{t}\cdot\left\lfloor\frac{m}{2}\right\rfloor}(\mathbb{P})$, $\mathcal {C}$ is $I$-perfect. To prove $\mathcal {C}$ is an MDS $(\mathbb{P},\pi)$-code, it is sufficient to show that $\left\lfloor\frac{d-1}{\left\lfloor\frac{m}{2}\right\rfloor}\right\rfloor\geq \frac{n-\lceil\log_mK\rceil}{t}$. Suppose that there exist two distinct codewords $u,v\in \mathcal {C}$ such that $d_{(\mathbb{P},\pi)}(u,v)\leq \frac{n-\lceil\log_mK\rceil}{t}\cdot\left\lfloor\frac{m}{2}\right\rfloor$. Denote by    $J=\langle supp_{(L,\pi)}(u-v)\rangle$. It follows form Proposition \ref{POMSET1} that there exists an ideal $I$ with cardinality $\frac{n-\lceil\log_mK\rceil}{t}\cdot\left\lfloor\frac{m}{2}\right\rfloor$ such that $J\subseteq I$. Then $u-v\in B_J\subseteq B_I$ which implies that $u\in B_I(v)$, a contradiction to the fact that $\mathcal {C}$ is $I$-perfect. Therefore $d>\frac{n-\lceil\log_mK\rceil}{t}\cdot\left\lfloor\frac{m}{2}\right\rfloor$ and hence $\left\lfloor\frac{d-1}{\left\lfloor\frac{m}{2}\right\rfloor}\right\rfloor \geq\frac{n-\lceil\log_mK\rceil}{t}$.
\end{proof}

\subsection{Weight distribution}

\quad\; In what follows, we consider a $(\mathbb{P},\pi)$-code $\mathcal {C}$ whose blocks have the same dimension $t$ and $\mathbb{P}$ is a chain pomset on $M=\left\{\left\lfloor\frac{m}{2}\right\rfloor/1,\left\lfloor\frac{m}{2}\right\rfloor/2, \ldots,\left\lfloor\frac{m}{2}\right\rfloor/s\right\}$ such that $\left\lfloor\frac{m}{2}\right\rfloor/1\ R\ \left\lfloor\frac{m}{2}\right\rfloor/2\ R\ \ldots R\left\lfloor\frac{m}{2}\right\rfloor/s$.

Note that every ideal $I$ in the chain pomset $\mathbb{P}$ has the form $I=\left\{\left\lfloor\frac{m}{2}\right\rfloor/1,\left\lfloor\frac{m}{2}\right\rfloor/2, \ldots,\left\lfloor\frac{m}{2}\right\rfloor/(i-1),p/i\right\}$ where $1\leq p\leq \left\lfloor\frac{m}{2}\right\rfloor$ and $p/i$ is the unique maximal element in $I$. Moreover, for any integer $1\leq r\leq s\cdot\left\lfloor\frac{m}{2}\right\rfloor$, there is only one ideal $I$ in $\mathbb{P}$ whose cardinality is $r$, that is, $\left|\mathcal {I}^r(\mathbb{P})\right|=1$ and hence $B_r(u)=B_I(u)$ for any $u\in\mathbb{Z}_m^n$.

\begin{theorem}\label{CHAIN}
Let $(\mathbb{P},\pi)$ be a chain pomset block structure on $\mathbb{Z}_m^n$ with $k_1=k_2=\cdots=k_s=t$. Then for any $K$ such that $\lceil\log_mK\rceil$ is divisible by $t$, every $I$-perfect $(n,K,d)$ $(\mathbb{P},\pi)$-code $\mathcal {C}$ is an MDS $(\mathbb{P},\pi)$-code.
\end{theorem}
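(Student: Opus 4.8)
The plan is to reduce the statement to the Singleton bound of Theorem~\ref{linear}. Since $\mathbb{P}$ is a chain, its order ideals form a chain under inclusion and there is exactly one ideal of each cardinality $0,1,\dots,s\lfloor m/2\rfloor$. Let $I$ be an ideal of $\mathbb{P}$ for which the given code $\mathcal{C}$ is $I$-perfect. If $I=\varnothing$ then $B_{\varnothing}=\{\bar{0}\}$ forces $\mathcal{C}=\mathbb{Z}_m^n$ and $d=1$, and if $I=M$ then $B_M=\mathbb{Z}_m^n$ forces $|\mathcal{C}|=1$; in either degenerate case both sides of the Singleton bound coincide, so assume $\varnothing\subsetneq I\subsetneq M$ and write $I=\{\lfloor m/2\rfloor/1,\dots,\lfloor m/2\rfloor/(i-1),\,p/i\}$ with $1\le i\le s$ and $1\le p\le\lfloor m/2\rfloor$; then $K=|\mathcal{C}|=m^{n}/|B_I|\ge 2$. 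Put $\ell=\left\lfloor\frac{d-1}{\lfloor m/2\rfloor}\right\rfloor$; since every $(\mathbb{P},\pi)$-weight is at most $|M|=s\lfloor m/2\rfloor$ we have $\ell\le s-1$, and recall $n=k_1+\dots+k_s=st$. In both cases below the scheme is: (a) from $\bigsqcup_{c\in\mathcal{C}}B_I(c)=\mathbb{Z}_m^n$ compute $K=m^n/|B_I|$ and hence the exact value of $n-\lceil\log_m K\rceil$; (b) bound $\ell$ from below by locating the blocks on which a difference $c-c'$ of distinct codewords can be nonzero; (c) apply Theorem~\ref{linear}, which for $k_1=\dots=k_s=t$ reads $n-\lceil\log_m K\rceil\ge\ell t$, to force equality, i.e.\ $\mathcal{C}$ is MDS.

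\textbf{Full-count case ($p=\lfloor m/2\rfloor$, hence $i<s$).} By Proposition~\ref{BI}(1), $B_I=\bigoplus_{b\le i}V_b$ is the submodule of vectors vanishing on blocks $i+1,\dots,s$; thus $|B_I|=m^{it}$, $K=m^{n-it}$, and $n-\lceil\log_m K\rceil=it$ (and $t\mid\lceil\log_m K\rceil$ holds automatically). For distinct $c,c'\in\mathcal{C}$ the cosets $c+B_I$ and $c'+B_I$ are distinct, so $c-c'\notin B_I$, i.e.\ $c-c'$ is nonzero on some block $j\ge i+1$; since $\mathbb{P}$ is a chain, $w_{(\mathbb{P},\pi)}(c-c')=\bigl|\langle supp_{(L,\pi)}(c-c')\rangle\bigr|\ge(j-1)\lfloor m/2\rfloor+1\ge i\lfloor m/2\rfloor+1$. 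Hence $d\ge i\lfloor m/2\rfloor+1$, so $\ell\ge i$, and then $it=n-\lceil\log_m K\rceil\ge\ell t\ge it$ forces $\ell=i$ and $n-\lceil\log_m K\rceil=\ell t$, so $\mathcal{C}$ is MDS.

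\textbf{Partial-count case ($1\le p\le\lfloor m/2\rfloor-1$).} By~(\ref{partial}), $B_I$ consists of the vectors vanishing on blocks $i+1,\dots,s$, with $i$-th block of Lee weight $\le p$ and arbitrary on blocks $1,\dots,i-1$, so $|B_I|=(2p+1)^t m^{(i-1)t}$. Since the $I$-balls partition $\mathbb{Z}_m^n$, the corollary stating that this can happen only if $m$ is divisible by $2C_I(l)+1$ for every partial-count maximal element $l$ forces $2p+1\mid m$; moreover $1<2p+1<m$ and $K=m^n/|B_I|=m^{(s-i+1)t}/(2p+1)^t$. Set $\beta=t\log_m(2p+1)\in(0,t)$; then $\lceil\log_m K\rceil=(s-i+1)t-j$ where $j=\beta$ if $\beta\in\mathbb{Z}$ and $j=\lfloor\beta\rfloor$ otherwise, and in either case $0\le j\le t-1$. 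As $t\mid(s-i+1)t$, the hypothesis $t\mid\lceil\log_m K\rceil$ forces $j=0$; this excludes $\beta\in\mathbb{Z}$ (since $\beta>0$) and yields $0<\beta<1$, i.e.\ $(2p+1)^t<m$, whence $\lceil\log_m K\rceil=(s-i+1)t$ and $n-\lceil\log_m K\rceil=(i-1)t$. Now for distinct $c,c'\in\mathcal{C}$ the disjointness of the $I$-balls gives $(c+B_I)\cap(c'+B_I)=\varnothing$, i.e.\ $c-c'\notin B_I-B_I$; but every vector vanishing on blocks $i,i+1,\dots,s$ lies in $B_I\subseteq B_I-B_I$, so $c-c'$ must be nonzero on some block $j\ge i$, giving $w_{(\mathbb{P},\pi)}(c-c')\ge(i-1)\lfloor m/2\rfloor+1$. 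Hence $d\ge(i-1)\lfloor m/2\rfloor+1$, so $\ell\ge i-1$, and then $(i-1)t=n-\lceil\log_m K\rceil\ge\ell t\ge(i-1)t$ forces $\ell=i-1$ and $n-\lceil\log_m K\rceil=\ell t$, so $\mathcal{C}$ is MDS.

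\textbf{Where the difficulty lies.} The hypothesis $t\mid\lceil\log_m K\rceil$, together with the chain assumption, is used in an essential way only in the computation of $\lceil\log_m K\rceil$ in the partial-count case: it is precisely this divisibility that forces $(2p+1)^t<m$ and pins $\lceil\log_m K\rceil$ to $(s-i+1)t$, which is exactly what makes the Singleton bound tight; dropping it leaves room for $I$-perfect codes that are not MDS (Example~\ref{D}). The second, milder subtlety is that in the partial-count case $B_I$ is not a subgroup, so one cannot run the coset argument directly; instead one uses that $B_I$ still contains every vector supported on the first $i-1$ blocks, which is what keeps the lower bound $\ell\ge i-1$ intact.
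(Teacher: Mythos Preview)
Your proof is correct and follows essentially the same route as the paper's: split into full-count and partial-count cases, use $|\mathcal{C}|\cdot|B_I|=m^n$ to compute $\lceil\log_m K\rceil$, bound $\ell=\lfloor(d-1)/\lfloor m/2\rfloor\rfloor$ from below, and then squeeze against the Singleton bound to force equality. The only notable difference is that the paper obtains the sharper bound $d>|I|$ directly (since in a chain the ideals are totally ordered, $supp_{(L,\pi)}(c-c')\not\subseteq I$ forces $I\subsetneq\langle supp_{(L,\pi)}(c-c')\rangle$), whereas your $B_I-B_I$ argument gives only $d\ge(i-1)\lfloor m/2\rfloor+1$ in the partial-count case; both yield $\ell\ge i-1$, so either suffices.
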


\begin{proof}
Let $\mathcal {C}\subseteq\mathbb{Z}_m^n$ be an $I$-perfect $(n,K,d)$ $(\mathbb{P},\pi)$-code. Then $\bigsqcup\limits_{c\in \mathcal {C}}B_I(c)=\mathbb{Z}_m^n$ and $d>|I|$. From this, we have $|\mathcal {C}|\cdot|B_I|=m^n$.
\begin{itemize}
  \item {\bf Case 1:}If $I$ is an ideal with full count, then $|I^{*}|=\frac{n-\log_mK}{t}$ and $|I|=\frac{n-\log_mK}{t}\cdot\left\lfloor\frac{m}{2}\right\rfloor$. It follows from Theorem \ref{NONLINEAR} that $\mathcal {C}$ is an MDS $(\mathbb{P},\pi)$-code.
  \item {\bf Case 2:}  If $I$ is an ideal with partial count, then $|I|=|J|+l$ for some ideal $J$ with full count and $1\leq l\leq\left\lfloor\frac{m}{2}\right\rfloor-1$. Since $\mathcal {C}$ is $I$-perfect, we have
      $$|\mathcal {C}|\cdot|B_I|=|\mathcal {C}|\cdot(2l+1)^t\cdot|B_J|=K\cdot(2l+1)^t\cdot m^{|J^*|\cdot t}=m^n,$$
      yielding to
      $$\left|J^*\right|=\frac{n}{t}-\frac{\log_mK}{t}-\log_m(2l+1).$$
      Since $1\leq l\leq \left\lfloor\frac{m}{2}\right\rfloor-1$, we have $0<\log_m(2l+1)<1$. Note that $\frac{\log_mK}{t}+\log_m(2l+1)\in \mathbb{Z}$. Therefore, $$\frac{\log_mK}{t}+\log_m(2l+1)=\left\lfloor\frac{\log_mK}{t}+ \log_m(2l+1)\right\rfloor\leq \left\lfloor\frac{\log_mK}{t}\right\rfloor+ \left\lfloor\log_m(2l+1)\right\rfloor+1= \left\lceil\frac{\log_mK}{t}\right\rceil.$$
      Hence $|J^*|\geq \frac{n}{t}-\left\lceil\frac{\log_mK}{t}\right\rceil$. On the other hand, $d>|I|=|J|+l$ implies that $\frac{d-1}{\left\lfloor\frac{m}{2}\right\rfloor}>|J^{*}|+ \frac{l-1}{\left\lfloor\frac{m}{2}\right\rfloor}$. Thus,
      $$\left\lfloor\frac{d-1}{\left\lfloor\frac{m}{2}\right\rfloor}\right\rfloor\geq \left|J^*\right|\geq\frac{n}{t}-\left\lceil\frac{\log_mK}{t}\right\rceil.$$
      Since $\lceil\log_mK\rceil$ is divisible by $t$, we have
      $$t\cdot\left\lfloor\frac{d-1}{\left\lfloor\frac{m}{2}\right\rfloor}\right\rfloor \geq n-\left\lceil\log_mK\right\rceil.$$
      By the Singleton bound,  $\mathcal {C}$ is an MDS code.
\end{itemize}
\end{proof}

\begin{remark}
The above theorem holds for any ideal $I\in \mathcal {I}(\mathbb{P})$ no matter that $I$ has full count or partial count.
\end{remark}

Let $I$ be an ideal of $\mathbb{P}$. We denote by $I_p$ ($I_f$ resp.) the collection of the elements in $I^*$ which has partial count (full count resp.) in $I$. From the proof of Theorem \ref{CHAIN}, we obtain a corollary.

\begin{corollary}\label{general}
Let $(\mathbb{P},\pi)$ be a pomset block structure on $\mathbb{Z}_m^n$ with $k_1=k_2=\cdots=k_s=t$. Let $I$ be an ideal of $\mathbb{P}$ with partial count. Suppose that  $I_p=\{l_1,l_2,\ldots,l_{\lambda}\}\subseteq I^{*}$ and $1<\sum\limits_{i=1}^{\lambda}\log_m(2C_I(l_i)+1)<1$. Let $\mathcal {C}$ be an $(n,K,d)$ $(\mathbb{P},\pi)$-code such that $\lceil\log_mK\rceil$ is divisible by $t$ and $\left\lfloor\frac{d-1}{\left\lfloor\frac{m}{2}\right\rfloor}\right\rfloor\geq |I_f|$. If $\mathcal {C}$ is $I$-perfect, then $\mathcal {C}$ is MDS.
\end{corollary}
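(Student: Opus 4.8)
The plan is to reuse, essentially verbatim, the computation carried out in Case~2 of the proof of Theorem~\ref{CHAIN}; the only new feature is that the partial-count part of $I$ may involve several elements $l_1,\dots,l_\lambda$ rather than a single one. First I would observe that, since $\mathcal{C}$ is $I$-perfect, $\mathbb{Z}_m^n=\bigsqcup_{c\in\mathcal{C}}B_I(c)$, and by translation invariance of the pomset block metric every ball $B_I(c)=c+B_I$ has cardinality $|B_I|$, so $K\cdot|B_I|=m^n$. Substituting the value of $|B_I|$ in the equal-block case $k_1=\cdots=k_s=t$, namely
$$|B_I|=\prod_{i=1}^{\lambda}\bigl(2C_I(l_i)+1\bigr)^{t}\cdot m^{|I_f|t},$$
and taking $\log_m$ of both sides gives $\log_mK+t\sigma+|I_f|t=n$, where I write $\sigma=\sum_{i=1}^{\lambda}\log_m\bigl(2C_I(l_i)+1\bigr)$ and use $n=st$.

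Next I would do the floor/ceiling bookkeeping. Since $I$ has partial count each factor satisfies $2C_I(l_i)+1\ge 3$, so $\sigma>0$, and the hypothesis on $I_p$ supplies $\sigma<1$. Rearranging, $|I_f|=s-\frac{\log_mK}{t}-\sigma$; as $s$ and $|I_f|$ are integers while $0<\sigma<1$, the number $\frac{\log_mK}{t}+\sigma$ is an integer and $\frac{\log_mK}{t}$ is not, whence
$$\frac{\log_mK}{t}+\sigma\le\Bigl\lfloor\frac{\log_mK}{t}\Bigr\rfloor+1=\Bigl\lceil\frac{\log_mK}{t}\Bigr\rceil=\frac{\lceil\log_mK\rceil}{t},$$
the last equality because $t\mid\lceil\log_mK\rceil$ (if $\lceil\log_mK\rceil=at$ then $a-1<\frac{\log_mK}{t}\le a$, forcing $\lceil\log_mK/t\rceil=a$). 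Hence $|I_f|\ge s-\frac{\lceil\log_mK\rceil}{t}=\frac{n-\lceil\log_mK\rceil}{t}$.

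Finally I would combine this with the standing hypothesis $\bigl\lfloor\frac{d-1}{\lfloor m/2\rfloor}\bigr\rfloor\ge|I_f|$ to get $t\bigl\lfloor\frac{d-1}{\lfloor m/2\rfloor}\bigr\rfloor\ge n-\lceil\log_mK\rceil$; together with the Singleton bound~(\ref{bound}) this is an equality, so $\mathcal{C}$ is an MDS $(\mathbb{P},\pi)$-code. I expect the computations above to be routine; the one point worth stressing is the role of the extra hypothesis $\bigl\lfloor\frac{d-1}{\lfloor m/2\rfloor}\bigr\rfloor\ge|I_f|$. In the chain setting of Theorem~\ref{CHAIN} it holds for free, because there $I$-perfectness already forces $d>|I|$ (in a chain every ideal of cardinality at most $|I|$ lies inside $I$), whereas for a general pomset an $I$-perfect code with partial-count $I$ need not satisfy $d>|I|$ (compare Example~\ref{D}), so the inequality genuinely has to be assumed. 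The only mildly delicate step is the floor/ceiling manipulation, where the divisibility $t\mid\lceil\log_mK\rceil$ is what makes the bound close up exactly.
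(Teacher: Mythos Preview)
Your proposal is correct and follows exactly the route the paper intends: the corollary is stated immediately after Theorem~\ref{CHAIN} with the remark that it is obtained ``from the proof of Theorem~\ref{CHAIN}'', and your argument is precisely the Case~2 computation of that proof, generalized to allow several partial-count elements $l_1,\dots,l_\lambda$ and with the chain-specific inequality $d>|I|$ replaced by the standing hypothesis $\bigl\lfloor\frac{d-1}{\lfloor m/2\rfloor}\bigr\rfloor\ge|I_f|$. Your floor/ceiling bookkeeping (including the observation that $\log_mK/t$ cannot be an integer and the verification that $\lceil\log_mK/t\rceil=\lceil\log_mK\rceil/t$ under the divisibility assumption) actually makes explicit a couple of steps the paper leaves implicit.
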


Let $I\in \mathcal {I}(\mathbb{P})$ be an ideal with partial count.  Here is an example illustrates that an $I$-perfect $(\mathbb{P},\pi)$-code is not necessarily an MDS code. But when we give some restrictions on $I$, an $I$-perfect code could be an MDS code.

\begin{example}
Let $\mathbb{P}=(M,R)$ be a pomset where $M=\{4/1,4/2\}$ and $$R=\{(16/(4/1,4/1),16/(4/2,4/2)\}.$$
Let $\pi$ be a labeling of the pomset $\mathbb{P}$ such that $\pi(1)=\pi(2)=1$. Consider the $(\mathbb{P},\pi)$-code $\mathcal {C}\subseteq\mathbb{Z}_9^2$ defined by $\mathcal {C}=\{(0,0),(0,3),(0,6)\}$. Then $d_{(\mathbb{P},\pi)}(\mathcal {C})=3$ and $\left\lfloor\frac{d_{(\mathbb{P},\pi)}(\mathcal {C})-1}{\left\lfloor\frac{m}{2}\right\rfloor}\right\rfloor=0$. Therefore $\mathcal {C}$ is not MDS. Let $I=\{4/1,1/2\}$ be an ideal of $\mathbb{P}$ with partial count. It can be easily seen that $\mathcal {C}$ is $I$-perfect.

If we define $\mathcal {C}'=\{(0,0),(2,3),(4,6)\}$. It is routine to verify that $\mathcal {C}'$ is an $I$-perfect code. By Corollary \ref{general}, $\mathcal {C}'$ is an MDS $(\mathbb{P},\pi)$-code. In fact, $d_{(\mathbb{P},\pi)}(\mathcal {C}')=5$ and $\left\lfloor\frac{d_{(\mathbb{P},\pi)}(\mathcal {C}')-1}{\left\lfloor\frac{m}{2}\right\rfloor}\right\rfloor=1$. The ideal of $\mathbb{P}$ such that $|I^*|=1$ has root set $\{1\}$ or $\{2\}$. It follows from $k_1=k_2=1$ that $\mathcal {C}'$ is an MDS $(\mathbb{P},\pi)$-code.

Furthermore, when $\mathbb{P}$ is chain pomset such that $4/1\ R\ 4/2$, we can see that both $\mathcal {C}$ and $\mathcal {C}'$ are MDS $(\mathbb{P},\pi)$-codes.
\end{example}

Denote by $A_{r,(\mathbb{P},\pi)}(\mathcal {C})$ the number of codewords of $(\mathbb{P},\pi)$-weight $r$ in $\mathcal {C}$, that is,
$$A_{r,(\mathbb{P},\pi)}(\mathcal {C})=\left|\left\{c\in \mathcal {C}: w_{(\mathbb{P},\pi)}(c)=r\right\}\right|.$$

Note that an ideal $I$ of $\mathbb{P}$ is full count if and only if $|I|=l\cdot\left\lfloor\frac{m}{2}\right\rfloor$.

\begin{lemma}\label{BIC}
Let $(\mathbb{P},\pi)$ be a pomset block structure on $\mathbb{Z}_m^n$ where $\mathbb{P}$ is the chain pomset on $M=\left\{\left\lfloor\frac{m}{2}\right\rfloor/1,\left\lfloor\frac{m}{2}\right\rfloor/2, \ldots,\left\lfloor\frac{m}{2}\right\rfloor/s\right\}$ and $\pi$ is a labeling of the pomset $\mathbb{P}$ with $k_1=k_2=\cdots=k_s=t$. Let $\mathcal {C}$ be an MDS linear $(n,m^k,d)$ $(\mathbb{P},\pi)$-code and let $I\in\mathcal {I}(\mathbb{P})$ be an ideal of $\mathbb{P}$. Then
$$|B_I\cap \mathcal {C}|=\left\{
                \begin{array}{ll}
                 1, & \text{if}\ |I|\leq\frac{(n-k)}{t}\cdot\left\lfloor\frac{m}{2}\right\rfloor,\\[2mm]
                m^{tl-n+k}, & \text{if}\ |I|=l\cdot\left\lfloor\frac{m}{2}\right\rfloor\geq \frac{(n-k)}{t}\cdot\left\lfloor\frac{m}{2}\right\rfloor,\\[2mm]
                (2p+1)^tm^{tl-n+k}, & \text{if}\ |I|=l\cdot\left\lfloor\frac{m}{2}\right\rfloor+p> \frac{(n-k)}{t}\cdot\left\lfloor\frac{m}{2}\right\rfloor,\ 1\leq p\leq \left\lfloor\frac{m}{2}\right\rfloor-1.
                \end{array}
              \right.$$
\end{lemma}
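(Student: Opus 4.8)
The plan is to evaluate $|B_I \cap \mathcal{C}|$ by comparing $B_I$ with a well-chosen $I$-ball of cardinality exactly $\frac{n-k}{t}\cdot\lfloor\frac{m}{2}\rfloor$ and using the MDS property in the form already established in Theorem \ref{MDSSN} and the duality Theorem \ref{MDSdual}. Since $\mathbb{P}$ is a chain, for every integer $1\le r\le s\cdot\lfloor\frac{m}{2}\rfloor$ there is a \emph{unique} ideal of cardinality $r$, so the nested family of ideals is totally ordered by inclusion; this is the structural fact that makes all the case analysis go through cleanly. Write $J$ for the unique ideal with full count and $|J^{*}|=\frac{n-k}{t}$, so $|J|=\frac{n-k}{t}\cdot\lfloor\frac{m}{2}\rfloor$; by Theorem \ref{MDSSN} the MDS code $\mathcal{C}$ is $J$-perfect, hence by Lemma \ref{IPERFECT} every coset of $B_J$ meets $\mathcal{C}$ in exactly one point.

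First I would handle the case $|I|\le |J|$. Here $I\subseteq J$ (uniqueness of ideals by cardinality in a chain), so $B_I\subseteq B_J$. If two codewords lay in $B_I$, say $c,c'\in B_I\cap\mathcal{C}$ with $c\ne c'$, then $c-c'\in B_I$ gives $\langle supp_{(L,\pi)}(c-c')\rangle\subseteq I$, so $d_{(\mathbb{P},\pi)}(c,c')\le |I|\le |J|=\lfloor\frac{d-1}{\lfloor m/2\rfloor}\rfloor\cdot\lfloor\frac{m}{2}\rfloor\le d-1$, a contradiction; and $B_J\cap\mathcal{C}\ne\varnothing$ because $\mathcal{C}$ is $J$-perfect (indeed $\bar 0\in\mathcal{C}$ as $\mathcal{C}$ is linear), so $|B_I\cap\mathcal{C}|=1$. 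Next, the case $|I|=l\cdot\lfloor\frac{m}{2}\rfloor\ge |J|$: then $I$ has full count, $J\subseteq I$, and $B_J$ is a submodule of $B_I$ (both are submodules by Proposition \ref{BI}(1)); the index is $|B_I|/|B_J|=m^{tl}/m^{n-k}=m^{tl-n+k}$, and since each of these $m^{tl-n+k}$ cosets of $B_J$ contains exactly one codeword, $|B_I\cap\mathcal{C}|=m^{tl-n+k}$. This is exactly the argument of Case 2 in the proof of Theorem \ref{MDSdual}.

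The new case is the third: $|I|=l\cdot\lfloor\frac{m}{2}\rfloor+p>|J|$ with $1\le p\le \lfloor\frac{m}{2}\rfloor-1$, so $I$ has partial count with unique maximal element of count $p$. The plan is to write $I^{*}=\{1,\dots,l,l+1\}$ (chain labelling), let $J'$ be the full-count ideal with $J'^{*}=\{1,\dots,l\}$, so $|J'|=l\cdot\lfloor\frac{m}{2}\rfloor\ge |J|$; by the previous case $|B_{J'}\cap\mathcal{C}|=m^{tl-n+k}$. Now decompose $B_I$ over $B_{J'}$: a vector $v\in B_I$ is determined by its projection onto the blocks $1,\dots,l$ (free, giving the submodule $B_{J'}$) and its projection onto block $l+1$, which ranges over the box $[-p,p]^{t}\subseteq\mathbb{Z}_m^{t}$, a set of size $(2p+1)^{t}$. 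I would argue that $B_I$ is the disjoint union of $(2p+1)^{t}$ translates of $B_{J'}$, one for each choice $\beta\in[-p,p]^{t}$ of the $(l+1)$-st block, since adding such a $\beta$ does not change the generated ideal beyond $I$ (the relevant Lee block weight is $\le p<\lfloor\frac{m}{2}\rfloor$). Each translate $\beta+B_{J'}$ is a coset of the submodule $B_{J'}$, and since $\mathcal{C}$ is $J'$-perfect — equivalently, since $J'\supseteq J$ and $\mathcal{C}$ is $J$-perfect, so $\mathcal{C}$ meets every coset of $B_J$ once and hence every coset of the larger $B_{J'}$ in exactly $m^{tl-n+k}$ points — each translate contributes $m^{tl-n+k}$ codewords, giving $|B_I\cap\mathcal{C}|=(2p+1)^{t}m^{tl-n+k}$.

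The main obstacle is the bookkeeping in this last case: verifying carefully that $B_I$ decomposes as the stated disjoint union of cosets of $B_{J'}$ (using the explicit description of $B_I$ for partial-count ideals given in equation (\ref{partial}), which for a chain has all of $\{1,\dots,l\}$ in $I_f$ and only $\{l+1\}$ in $I_p$), and that "meeting every coset of $B_J$ exactly once" upgrades correctly to "meeting every coset of $B_{J'}$ in exactly $m^{tl-n+k}$ points" — this is just counting cosets of $B_J$ inside $B_{J'}$, of which there are $m^{tl-(n-k)}$, but it must be stated cleanly. Everything else is a direct appeal to Proposition \ref{BI}, Lemma \ref{IPERFECT}, and Theorem \ref{MDSSN}, together with the chain-specific fact $|\mathcal{I}^{r}(\mathbb{P})|=1$.
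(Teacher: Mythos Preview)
Your proposal is correct and follows essentially the same approach as the paper. The first two cases are handled exactly as the paper does (the paper refers back to Theorem~\ref{MDSdual} for these), and for the third case both you and the paper decompose $B_I$ into disjoint cosets of a full-count ball and use $J$-perfectness to count codewords in each coset; the only cosmetic difference is that the paper decomposes $B_I$ directly into cosets of $B_J$ (with $|J^*|=\tfrac{n-k}{t}$), indexed by $B_{I\ominus J}$, whereas you first pass through the intermediate $B_{J'}$ (with $|J'^*|=l$) and then invoke the coset count of $B_J$ inside $B_{J'}$---a two-step version of the same argument.
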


\begin{proof}
Let $I$ be an ideal with cardinality $l\cdot\left\lfloor\frac{m}{2}\right\rfloor+p>\frac{(n-k)}{t}\cdot \left\lfloor\frac{m}{2}\right\rfloor$ where $1\leq p\leq\left\lfloor\frac{m}{2}\right\rfloor-1$. Then $I$ is partial count with $|I^{*}|=l+1$. By Proposition \ref{pomset1}, there exists an ideal $J$ with cardinality $\frac{n-k}{t}\cdot\left\lfloor\frac{m}{2}\right\rfloor$ such that $J\subseteq I$. It follows from Theorem \ref{MDSdual} that $\mathcal {C}$ is $J$-perfect and $|B_J(x)\cap \mathcal {C}|=1$ for any $x\in\mathbb{Z}_m^n$. Let $K=I\ominus J$. The translates $x+B_J$, $x\in B_K$ are disjoint and their union covers $B_I$. Therefore,
$$|B_I\cap \mathcal {C}|=|B_K|=(2p+1)^tm^{(|I^{*}|-|J^{*}|-1)t}=(2p+1)^tm^{(l+1-\frac{n-k}{t}-1)t}=(2p+1)^t m^{tl-n+k}.$$
This completes the proof.
\end{proof}

\begin{theorem}\label{weight}
Let $(\mathbb{P},\pi)$ be a chain pomset block structure with $k_1=k_2=\cdots=k_s=t$ and $\mathcal {C}$ be an MDS linear $(n,m^k,d)$ $(\mathbb{P},\pi)$-code. Then
$$A_{r,(\mathbb{P},\pi)}(\mathcal {C})=\left\{
                              \begin{array}{ll}
                                1, & \text{if}\ r=0, \\[2mm]
                                0, & \text{if}\ 1\leq r\leq d-1,\\[2mm]
   \left(m^t-\left(2\left\lfloor\frac{m}{2}\right\rfloor-1\right)^t\right)m^{tl-n+k-t}, & \text{if}\ r=l\cdot\left\lfloor\frac{m}{2}\right\rfloor\geq d,\\[2mm]
                (3^t-1)m^{tl-n+k}, & \text{if}\ r=l\cdot\left\lfloor\frac{m}{2}\right\rfloor+1\geq d,\\[2mm]
   \left((2p+1)^t-(2p-1)^t\right)m^{tl-n+k}, & \text{if}\ r=l\cdot\left\lfloor\frac{m}{2}\right\rfloor+p\geq d\ \text{and}\ 2\leq p\leq\left\lfloor\frac{m}{2}\right\rfloor-1.
                           \end{array}
                            \right.
$$
\end{theorem}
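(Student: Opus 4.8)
The weight distribution will follow by counting codewords whose $(\mathbb{P},\pi)$-weight is exactly $r$ via the difference $A_{r,(\mathbb{P},\pi)}(\mathcal{C}) = |B_I \cap \mathcal{C}| - |B_{I'} \cap \mathcal{C}|$, where $I$ is the unique ideal of cardinality $r$ in the chain pomset and $I'$ is the unique ideal of cardinality $r-1$. This works because in a chain pomset there is exactly one ideal of each cardinality between $0$ and $s\lfloor m/2\rfloor$, so the set of codewords of weight $\le r$ is exactly $B_I \cap \mathcal{C}$, and the codewords of weight exactly $r$ are those in $B_I \cap \mathcal{C}$ but not in $B_{I'} \cap \mathcal{C}$. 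All the cardinalities $|B_I \cap \mathcal{C}|$ are already computed in Lemma \ref{BIC}, so the proof reduces to subtracting consecutive values and organizing the cases.

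First I would dispose of the trivial cases: $r=0$ gives $A_0 = |B_\emptyset \cap \mathcal{C}| = 1$ since only $\bar 0$ has zero weight, and for $1 \le r \le d-1$ the MDS minimum distance forces $A_r = 0$ (equivalently $|B_I\cap\mathcal C| = 1$ for all these $I$ by Lemma \ref{BIC}, and subtracting gives $0$). For $r \ge d$ I would split according to the shape of $r$ modulo $\lfloor m/2\rfloor$, matching the case split in Lemma \ref{BIC}: namely $r = l\lfloor m/2\rfloor$ (the full-count case for $I$, partial-count for nothing below it — careful, its predecessor $r-1 = (l-1)\lfloor m/2\rfloor + (\lfloor m/2\rfloor - 1)$ has partial count), $r = l\lfloor m/2\rfloor + 1$, and $r = l\lfloor m/2\rfloor + p$ with $2 \le p \le \lfloor m/2\rfloor - 1$.

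The key computation is then bookkeeping with Lemma \ref{BIC}. When $r = l\lfloor m/2\rfloor \ge d$: here $I$ has $|I^*| = l$ and $|B_I \cap \mathcal{C}| = m^{tl - n + k}$, while $I'$ of cardinality $r - 1 = (l-1)\lfloor m/2\rfloor + (\lfloor m/2\rfloor - 1)$ has $|{I'}^*| = l$ and $|B_{I'}\cap\mathcal C| = (2(\lfloor m/2\rfloor - 1) + 1)^t m^{t(l-1) - n + k} = (2\lfloor m/2\rfloor - 1)^t m^{tl - n + k - t}$; subtracting gives $(m^t - (2\lfloor m/2\rfloor - 1)^t) m^{tl-n+k-t}$, as claimed. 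When $r = l\lfloor m/2\rfloor + 1$: $|B_I \cap \mathcal{C}| = 3^t m^{tl - n + k}$ and $I'$ has cardinality $l\lfloor m/2\rfloor$ with $|B_{I'}\cap\mathcal C| = m^{tl-n+k}$, giving $(3^t - 1)m^{tl-n+k}$. When $r = l\lfloor m/2\rfloor + p$ with $2 \le p \le \lfloor m/2\rfloor - 1$: both $I$ and $I'$ are partial-count with $|I^*| = |{I'}^*| = l+1$, so $|B_I\cap\mathcal C| = (2p+1)^t m^{tl-n+k}$ and $|B_{I'}\cap\mathcal C| = (2p-1)^t m^{tl-n+k}$, giving the difference $((2p+1)^t - (2p-1)^t)m^{tl-n+k}$.

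The one point demanding care — and the main obstacle — is that the cardinality formulas in Lemma \ref{BIC} are stated in terms of $|I|$ and the parameter $p$ with conditions like "$|I| = l\lfloor m/2\rfloor \ge \frac{n-k}{t}\lfloor m/2\rfloor$", so I must verify at each boundary that the hypothesis $r \ge d$ correctly places both $I$ and its predecessor $I'$ into the intended branch of Lemma \ref{BIC}, in particular that $r \ge d$ implies $|I| \ge \frac{n-k}{t}\lfloor m/2\rfloor + 1 > \frac{n-k}{t}\lfloor m/2\rfloor$ (using $d - 1 \ge \lfloor\frac{d-1}{\lfloor m/2\rfloor}\rfloor \cdot \lfloor m/2\rfloor = \frac{n-k}{t}\lfloor m/2\rfloor$ from the MDS condition and Singleton bound with equal block dimensions), and that $r - 1 \ge \frac{n-k}{t}\lfloor m/2\rfloor$ holds whenever $r \ge d$, so that $I'$ too lies in the correct (non-trivial) branch rather than the "$|B_{I'}\cap\mathcal C| = 1$" regime. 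Handling the edge case where $r = d$ exactly and $r - 1 = d - 1 = \frac{n-k}{t}\lfloor m/2\rfloor$ (so $|B_{I'}\cap\mathcal C| = 1$ by the first line of Lemma \ref{BIC}) needs a small separate check that this value $1$ coincides with the appropriate $p = 0$ or $p = \lfloor m/2\rfloor$ specialization of the formula, i.e. that $m^{t\cdot\frac{n-k}{t} - n + k} = m^0 = 1$ — which it does — so the subtraction still yields the stated answer.
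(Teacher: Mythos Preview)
Your proposal is correct and follows essentially the same approach as the paper: both use the identity $A_{r,(\mathbb{P},\pi)}(\mathcal C)=|B_I\cap\mathcal C|-|B_{I'}\cap\mathcal C|$ with $I,I'$ the unique ideals of cardinality $r$ and $r-1$ in the chain pomset, then apply Lemma~\ref{BIC} in the three cases $r=l\lfloor m/2\rfloor$, $r=l\lfloor m/2\rfloor+1$, and $r=l\lfloor m/2\rfloor+p$ with $2\le p\le\lfloor m/2\rfloor-1$. Your treatment of the boundary case $r=d$ (checking that $|B_{I'}\cap\mathcal C|=1$ coincides with $m^{0}=1$) is more explicit than the paper's, but the overall argument is the same.
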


\begin{proof}
If $r\leq d-1$, then the result is trivial. So we assume that $r\geq d$. Suppose that ${I}^r(\mathbb{P})=\{I\}$ and $I^{r-1}(\mathbb{P})=\{J\}$. Then
$$A_{r,(\mathbb{P},\pi)}(\mathcal {C})=|B_r\cap \mathcal {C}|-|B_{r-1}\cap \mathcal {C}|=|B_I\cap \mathcal {C}|-|B_J\cap \mathcal {C}|.$$
\begin{itemize}
  \item {\bf Case 1:} Assume that $r=l\cdot\left\lfloor\frac{m}{2}\right\rfloor$. Then $I$ has full count, $J$ has partial count and $|I^{*}|=|J^{*}|=l$. By Lemma \ref{BIC},
$$|B_I\cap \mathcal {C}|-|B_J\cap \mathcal {C}|=m^{tl-n+k}-\left(2\left\lfloor\frac{m}{2}\right\rfloor-1\right)^tm^{tl-n+k-t}= m^{tl-n+k-t} \left(m^t-\left(2\left\lfloor\frac{m}{2}\right\rfloor-1\right)^t\right).$$
  \item {\bf Case 2:} Assume that $r=l\cdot\left\lfloor\frac{m}{2}\right\rfloor+1$. Then $I$ has partial count, $J$ has full count and $|J^{*}|=|I^{*}|-1=l$. By Lemma \ref{BIC},
$$\left|B_I\cap \mathcal {C}\right|-\left|B_J\cap \mathcal {C}\right|=3^tm^{tl-n+k}-m^{tl-n+k}=(3^t-1)m^{tl-n+k}.$$
  \item {\bf Case 3:} Assume that $r=l\cdot\left\lfloor\frac{m}{2}\right\rfloor+p$ where $2\leq p\leq\left\lfloor\frac{m}{2}\right\rfloor-1$. The $I$ and $J$ have partial count and $|I^{*}|=|J^{*}|=l+1$. By Lemma \ref{BIC},
$$|B_I\cap \mathcal {C}|-|B_J\cap \mathcal {C}|=(2p+1)^tm^{tl-n+k}-(2p-1)^tm^{tl-n+k}=\left((2p+1)^t-(2p-1)^t\right)m^{tl-n+k}.$$
\end{itemize}
\end{proof}

\begin{remark}
Note that when pomset block metric is defined on $\mathbb{Z}_2$ and $\mathbb{Z}_3$, pomset block and poset block weights coincide. The weight distribution of an MDS $(P,\pi)$-code for the case when all blocks have the same dimension is determined (see [\ref{DASS}, Theorem 5.2]). Consider two special cases of hierarchical pomset. When the hierarchical pomset is a chain pomset, Theorem \ref{weight} gives the weight distribution of an  MDS $(\mathbb{P},\pi)$-code for the case when all blocks have the same dimension. When the hierarchical pomset is an antichain pomset and all blocks have dimension 1, pomset block weight is the traditional Lee weight. As our best knowledge, the Lee weight distribution of an MDS code has not been obtained for general case.
\end{remark}

\begin{remark}
When $t=1$, Case 2 and Case 3 in Theorem \ref{weight} would coincide which is exactly Theorem 9 in [\ref{POMSET2}].
\end{remark}

\section{Conclusion and Further Consideration}

\quad\;The  pomset metric is a generalization of poset metric and gives rise to Lee metric if the underlying pomset is an antichain. In this paper, we introduce MDS pomset block codes and extend the concept of $I$-perfect codes to the case of pomset block metric.

After the introduction of pomset block codes, it would be interesting to construct Macwilliams type identities for any linear code with chain block pomset and bound for covering radius of product codes. By carefully checking relevant results on pomset codes and poset block codes, we may have a chance to explore further properties of pomset block codes.

\end{document}